\theoremstyle{definition}
\newtheorem{theorem}{Theorem}
\newtheorem{definition}[theorem]{Definition}
\newtheorem{proposition}[theorem]{Proposition}
\newtheorem{lemma}[theorem]{Lemma}
\newtheorem{example}[theorem]{Example}
\newtheorem{claim2}{\sc Claim}
\newenvironment{claim}{\begin{claim2}\rm}{\end{claim2}\rm}
\newenvironment{claimfirst}{\setcounter{claim2}{0}
               \begin{claim2}\rm}{\end{claim2}\rm}
\newenvironment{pfclaim}{\begin{trivlist}\item[]{\sc Proof of
Claim}}{\hfill {\mbox{$\blacktriangleleft$}}\end{trivlist}}
\newcommand{\margincomment}[1]{}
\newenvironment{tbs}{%
   \small\tt
   \begin{itemize}}{\end{itemize}}
\newcommand{\btbs}{\begin{tbs}}                                                                      
\newcommand{\etbs}{\end{tbs}}
\newcommand{\hide}[1]{}
\newcommand{\rem}[1]{}
\newcommand{\comp}{\mathop{;}}
\newcommand{\cross}{{\times}}
\newcommand{\diam}[1]{\langle {#1} \rangle}
\newcommand{\gcha}{\sqcup}    
\newcommand{\gchd}{\sqcap}    
\newcommand{\iter}[1]{#1^*}
\newcommand{\diter}[1]{#1^\cross}
\newcommand{\test}[1]{#1?}
\newcommand{\dtest}[1]{#1!}
\newcommand{\lra}{\leftrightarrow}
\newcommand{\Ra}[1][]{%
  \ifthenelse{\isempty{#1}}{\Longrightarrow}{\overset{\small #1}{\Rightarrow}}
}
\newcommand{\LRa}[1][]{%
  \ifthenelse{\isempty{#1}}{\Leftrightarrow}{\overset{\small #1}{\Leftrightarrow}}
}
\newcommand{\tr}[1]{#1^\sharp}  
\newcommand{\trg}[1]{\tau_{#1}} 
\newcommand{\itrg}[2]{\trg{#2}^{#1}}
\newcommand{\nnf}[1]{\overline{#1}}
\newcommand{\startr}[2]{{#2}^{{#1}\bullet}}
\newcommand{\startrg}[3]{\beta({#2},{#1},{#3})}
\newcommand{\dd}[1]{\langle\!\langle#1\rangle\!\rangle}
\newcommand{\eps}{\varepsilon}
\newcommand{\Var}{\mathit{Var}}
\newcommand{\FVar}{\mathit{FVar}}
\newcommand{\Cl}[1]{\mathit{Cl}(#1)}
\newcommand{\lessAL}[1][]{
  \ifthenelse{\isempty{#1}}{<^-}{<^-_{#1}}
}
\newcommand{\lesstrAL}[1][]{
  \ifthenelse{\isempty{#1}}{<}{<_{#1}}
  }
\newcommand{\leqAL}[1][]{
  \ifthenelse{\isempty{#1}}{\leq}{\leq_{#1}}
  }
\newcommand{\leqfp}[1][]{%
  \preccurlyeq
}
\newcommand{\lessfp}[1][]{%
  \prec
}
\newcommand{\restr}[2]{{#1}{\upharpoonright}_{#2}}
\newcommand{\ula}[1]{\underleftarrow{#1{!}}}
\newcommand{\ulag}[2]{\underleftarrow{#1{!} \cdot #2}}
\newcommand{\nam}[1]{\ensuremath{\mathsf{#1}}}
\newcommand{\var}[1]{\ensuremath{\mathit{#1}}}
\newcommand{\names}[1][]{
  \ifthenelse{\isempty{#1}}{\ensuremath{N}}{\ensuremath{N_{#1}}}
}
\newcommand{\Clo}[1][]{  
  \ifthenelse{\isempty{#1}}{\ensuremath{\mathsf{Clo}}\xspace}{\ensuremath{\mathsf{Clo}(#1)}\xspace}
} 
\newcommand{\CloM}[1][]{  
  \ifthenelse{\isempty{#1}}{\ensuremath{\mathsf{CloM}}}{\ensuremath{\mathsf{CloM}_{#1}}}
} 
\newcommand{\CloG}[1][]{  
  \ensuremath{\mathsf{CloG}}\xspace
} 
\newcommand{\monML}{\ensuremath{\mathsf{mon\text{-}ML}}\xspace}
\newcommand{\GLms}{\ensuremath{\mathsf{G}}\xspace}
\newcommand{\GameOpRules}{\ensuremath{\mathsf{GameOp}}\xspace}
\newcommand{\DeepGRules}{\ensuremath{\mathsf{DeepG}}\xspace}
\newcommand{\HilbP}{\ensuremath{\mathsf{Par}}\xspace} 
\newcommand{\HilbFull}{\ensuremath{\mathsf{Par}_{\mathsf{Full}}}\xspace} 
\newcommand{\AtProp}{\mathsf{P}_0}
\newcommand{\AtGame}{\mathsf{G}_0}
\newcommand{\langPar}{\mathcal{L}_{\mathsf{Par}}}
\newcommand{\formPar}{\mathcal{L}_{\mathsf{Par}}}
\newcommand{\gamePar}{\mathcal{G}_{\mathsf{Par}}}
\newcommand{\langNorm}{\mathcal{L}_{\mathsf{NF}}}
\newcommand{\formNorm}{\mathcal{L}_{\mathsf{NF}}}
\newcommand{\gameNorm}{\mathcal{G}_{\mathsf{NF}}}
\newcommand{\langNormAnn}{(\langNorm)^{\Ann}}
\newcommand{\langFull}{\mathcal{L}_{\mathsf{Full}}}
\newcommand{\formFull}{\mathcal{L}_{\mathsf{Full}}}
\newcommand{\gameFull}{\mathcal{G}_{\mathsf{Full}}}
\newcommand{\Ann}{\mathsf{Ann}}
\newcommand{\langM}{{\mathcal{L}^{\mu}_\mathsf{NF}}}
\newcommand{\langMAnn}{(\langM)^{\Ann}}
\newcommand{\langBimod}{{\mathcal{L}^{2\mu}_\mathsf{NF}}}
\newcommand{\langBimodAnn}{(\langBimod)^{\Ann}}
\newcommand{\dnnf}[1]{\mathsf{nf}(#1)}
\newcommand{\nf}[1]{\dnnf{#1}}
\newcommand{\trParNorm}[1]{\dnnf{#1}}
\newcommand{\pari}[1]{\mathsf{pa}(#1)}
\newcommand{\trNormPar}[1]{\pari{#1}}
\newcommand{\leftprooftransformarrow}{\longleftarrow}
\newcommand{\RuAnd}{\ensuremath{\land}\xspace}
\newcommand{\RuClo}{\ensuremath{\mathsf{clo}}\xspace}
\newcommand{\RuD}{\ensuremath{\mathsf{D}}\xspace}
\newcommand{\RuLFP}{\ensuremath{\mu}\xspace}
\newcommand{\RuGFP}{\ensuremath{\nu}\xspace}
\newcommand{\RuWeak}{\ensuremath{\mathsf{weak}}\xspace}
\newcommand{\RuCompD}{\ensuremath{\comp_d}\xspace}
\newcommand{\RuInd}{\ensuremath{\mathsf{ind}}\xspace}
\newcommand{\RuIndS}{\ensuremath{\mathsf{ind}_s}\xspace}
\newcommand{\RuNuClo}{\ensuremath{\nu\text{-}\mathsf{clo}}\xspace}
\newcommand{\RuExp}{\ensuremath{\mathsf{exp}}\xspace}
\newcommand{\RuCross}{\ensuremath{\cross}\xspace}
\newcommand{\RuStar}{\ensuremath{*}\xspace}
\newcommand{\RuQuestion}{\ensuremath{?}\xspace}
\newcommand{\RuBang}{\ensuremath{!}\xspace}
\newcommand{\RuMonDgame}{\ensuremath{\mathsf{Mon^g_d}}\xspace}
\newcommand{\RuMonDfmla}{\ensuremath{\mathsf{Mon^f_d}}\xspace}
\newcommand{\RuMonMod}{\ensuremath{\mathsf{mod}_m}\xspace}
\newcommand{\RuOr}{\ensuremath{\lor}\xspace}
\newcommand{\RuChoiceA}{\ensuremath{\gcha}\xspace}
\newcommand{\RuChoiceD}{\ensuremath{\gchd}\xspace}
\newcommand{\RuTestA}{\ensuremath{?}\xspace}
\newcommand{\RuTestD}{\ensuremath{!}\xspace}
\newcommand{\RuComp}{\ensuremath{\comp}\xspace}
\newcommand{\RuMP}{\ensuremath{\mathsf{MP}}\xspace}
\newcommand{\RuMon}{\ensuremath{\mathsf{Mon}}\xspace}
\newcommand{\RuBarInd}{\ensuremath{\mathsf{BarInd}}\xspace}
\newcommand{\RuBdi}{\ensuremath{\mathsf{BarInd}^\cross}\xspace}
\newcommand{\AxOne}{\ensuremath{\mathsf{Ax1}}\xspace}
\newcommand{\Ax}{\ensuremath{\mathsf{Ax}}\xspace}
\renewcommand{\phi}{\varphi}
\newcommand{\isdef}{\coloneqq}   
\newcommand{\bbS}{\mathbb{S}}
\newcommand{\mng}[1]{\llbracket{#1}\rrbracket}
\newcommand{\lfp}{\mathsf{lfp}}
\newcommand{\gfp}{\mathsf{gfp}}
\newcommand{\mm}[1]{\mathcal{M}(#1)}
\newcommand{\ol}[1]{\overline{#1}}
\newcommand{\dg}[1]{\widetilde{#1}}
\newcommand{\wh}[1]{#1}
\definecolor{greeen}{HTML}{008833}
\newcommand{\Fcross}{F^\cross}
\newcommand{\Scross}{S^\cross}
\newcommand{\subterm}{\unlhd}
\newcommand{\ssubterm}{\lhd}
\newtheorem{definition_th}{Definition}[section]  
\newtheorem{algo_th}[definition_th]{Algorithm}
\newenvironment{proofof}[1]{\bigskip\noindent{\bf Proof of #1}}{\vspace{0.5cm}}
\begin{document}
%
\title{Completeness for Game Logic\footnote{TEsting}}


%
\author{\IEEEauthorblockN{Sebastian Enqvist\IEEEauthorrefmark{1},
Helle Hvid Hansen\IEEEauthorrefmark{2},
Clemens Kupke\IEEEauthorrefmark{3},
Johannes Marti\IEEEauthorrefmark{4} and
Yde Venema\IEEEauthorrefmark{5}}
\IEEEauthorblockA{\IEEEauthorrefmark{1}Stockholm University, SE-10691
Stockholm. Email: thesebastianenqvist@gmail.com}
\IEEEauthorblockA{\IEEEauthorrefmark{2}Delft University of Technology, P.O. Box 5015, NL-2600 GA Delft. Email: h.h.hansen@tudelft.nl}
\IEEEauthorblockA{\IEEEauthorrefmark{3}University of Strathclyde, 16 Richmond St, Glasgow G1 1XQ, UK. Email: clemens.kupke@strath.ac.uk}
\IEEEauthorblockA{\IEEEauthorrefmark{4}Universit\"at Bremen, Postfach
330440, 28334 Bremen. Email: johannes.marti@gmail.com }
\IEEEauthorblockA{\IEEEauthorrefmark{5}University of Amsterdam, P.O. Box 94242, NL-1090 GE Amsterdam. Email: y.venema@uva.nl}}



\IEEEoverridecommandlockouts
%
\IEEEpubid{\makebox[\columnwidth]{\copyright2019 IEEE.\hfill} \hspace{\columnsep}\makebox[\columnwidth]{ }}
%
%

\maketitle
\pagestyle{plain}

\begin{abstract}
Game logic was introduced by Rohit Parikh in the 1980s as a generalisation of 
propositional dynamic logic (PDL) for reasoning about 
outcomes that players can force in determined 2-player games.
Semantically, the generalisation from programs to games is mirrored by moving from Kripke
models to monotone neighbourhood models.
Parikh proposed a natural PDL-style Hilbert system which was
easily proved to be sound, but its completeness has thus
far remained an open problem.

In this paper, we introduce a cut-free sequent calculus for game logic,
and two cut-free sequent calculi that manipulate annotated formulas, one
for game logic and one for the monotone $\mu$-calculus, the variant of
the polymodal $\mu$-calculus where the semantics is given by monotone
neighbourhood models instead of Kripke structures.
We show these systems are sound and complete, and that
completeness of Parikh's axiomatization follows.
Our approach builds on recent ideas and results by Afshari \& Leigh (LICS 2017)
in that we obtain completeness via a sequence of proof transformations
between the systems.   
A crucial ingredient is a validity-preserving translation from game logic to the 
monotone $\mu$-calculus.
\end{abstract}


%
\IEEEpeerreviewmaketitle

\section{Introduction}
\label{sec:intro}

\subsection{Game logic, background and motivations}
Game logic was introduced by Parikh in the 1980s \cite{Parikh85}
as a modal logic for reasoning about the outcomes that players can
force in determined 2-player games.
We refer to the two players as \emph{Angel} and \emph{Demon},
following \cite{PaulyParikh:GL}. 
A modal formula $\diam{\gamma}\phi$ should be read as, \emph{``Angel has a 
strategy in the game $\gamma$ to ensure an outcome in which $\phi$ holds''}. 

Syntactically, Parikh's game logic is an extension of propositional dynamic logic 
(PDL) \cite{FL1979:PDL} as games are composed from atomic games and 
constructors that denote sequential composition of games,
as well as choice, iteration and test for Angel,
and finally the dual operator which denotes swapping the roles of the two players.
In Parikh's original language,
the strategic ability of Demon is thus only implicitly expressed
through the dual operator,
and PDL programs can be viewed as 1-player games (played by Angel).
Semantically, the generalisation from 1-player games to 2-player games is 
obtained by moving from Kripke structures to monotone neighbourhood structures.
Game logic is thus a non-normal, monotone modal logic.

Just as PDL can be translated into the (normal) modal $\mu$-calculus 
\cite{carr:pdl2014},
game logic can be naturally translated into the monotone modal $\mu$-calculus
\cite{Pau:phd}, and from there into the normal modal $\mu$-calculus for the
language that has two normal modalities for each monotone 
modality \cite{KrachtWolter99}. This was already sketched by Parikh 
in \cite{Parikh85}, and later improved in \cite{PaulyParikh:GL,Pau:phd} to show 
that the satisfiability of game logic is in \textsc{exptime}. 
We refer to \cite{Parikh85} and the survey \cite{PaulyParikh:GL} for applications
of game logic and further results.

\subsection{A landscape of logics for games} 
Parikh's game logic is probably the first of a family of logics 
designed to reason about different aspects of games.
\rem{On the one hand there are modal logics for multi-player games that can express strategic powers
of groups of agents such as ATL~\cite{Alur:2002:ATL} and Coalition Logic~\cite{Pauly2002:CL}. 
On the other hand, there are logics that do focus on 2-player games but go beyond game logic by treating
strategies as first-order objects such as strategy logics~\cite{krhepi:stra10, Mogavero2010:SL}.
}
Since then,
modal logics for multi-player games that can express strategic powers
of groups of agents have appeared
such as ATL~\cite{Alur:2002:ATL} and Coalition Logic~\cite{Pauly2002:CL}. 
There are also logics that focus on 2-player games but go beyond game logic
such as strategy logics~\cite{krhepi:stra10, Mogavero2010:SL},
which treat strategies as first-order objects,
and dGL~\cite{pla:diff15} which combines game operations and first-order quantification for hybrid games.

\subsection{The challenge of completeness for game logic}


It is a long-standing open question whether a complete proof system for
game logic exists.
The completeness result for dGL in~\cite{pla:diff15} is of a rather different nature,
since it concerns the completeness of a non-recursively enumerable logic relative to some oracle logic.
Parikh proposed in \cite{Parikh85} a natural-looking
PDL-like Hilbert system $\HilbP$,
but a proof of its completeness has thus far remained an open problem.
Only (relatively easy) partial results were known: completeness for the
dual-free fragment \cite{Parikh85}, and for the iteration-free fragment
\cite{PaulyParikh:GL,Pau:phd}. Giving a completeness proof similar to
the one for PDL from \cite{koze:elem81} using canonical models seems
impossible for the full language of game logic as such a proof
essentially involves a filtration argument. It is not difficult to see,
however, that game logic is not well-behaved with respect to
filtrations. 

The difficulty of showing completeness for the entire language of game
logic can perhaps be 
explained by the fact that in the presence of both angelic iteration and dual,
game logic (when interpreted over Kripke frames) spans all levels of the 
alternation hierarchy of the (normal) modal $\mu$-calculus \cite{Berwanger03}.
This is in stark contrast with PDL, LTL and CTL$^*$  which are all contained in 
low levels of the alternation hierarchy. 
Over Kripke models, game logic is thus a highly expressive fragment of the modal
$\mu$-calculus for which completeness is highly involved. The classical automata-based approach to the
completeness of the $\mu$-calculus from \cite{koze:resu83,walu:comp93} relies on the existence of
``disjunctive'' normal forms in the language of the $\mu$-calculus. 
It is unlikely that a similar normal form
can be defined for the more rigid game logic syntax, as occurrences of
the $\cross$-operator introduce greatest fixpoint operators that are
invariably tied to conjunctions.

\subsection{Main results and approach} 
In this paper, we introduce three cut-free sequent calculi, two for game logic and one for the monotone $\mu$-calculus, that we show all to be sound and complete.
The first of these is the system for game logic $\GLms$ which is a cut-free sequent calculus with deep inference rules. We show that $\GLms$ is complete, and that this implies completeness of Parikh's Hilbert system.
One of the rules in $\GLms$ is a so-called \emph{strengthened induction rule},
which is inspired by the strengthened induction rule 
in~\cite{AfshariLeigh:LICS17}, and somewhat similar to
Kozen's context rule~\cite[Proposition~5.7(vi)]{koze:resu83}. 
Our approach relies on game logic being able to express this rule.
Just as it is convenient to work with $\mu$-calculus formulas in negation normal
form, the system $\GLms$ works on game logic formulas in \emph{normal form}, 
where negation may only be applied to atomic propositions, and the dual game
operator only to atomic games.
Consequently, the system $\GLms$ is defined for the normal form language
$\langNorm$ which contains demonic game constructors as primitives.
Given a game logic formula $\phi$, $\nf{\phi}$ is the formula obtained by bringing $\phi$ into dual and negation normal form.

The second system for game logic, called $\CloG$, is a cut-free sequent calculus with a closure rule.
In $\CloG$, game logic formulas from $\langNorm$ are annotated with \emph{names} for formulas of the form $\diam{\gamma^\cross}\phi$.
These names keep track of unfoldings of these greatest fixpoint formulas, and
together with the closure rule they facilitate the detection of 
repeated unfolding of greatest fixpoints formulas in the same context
(which closes the proof tree branch).
Technically, this is achieved by imposing side conditions on
the closure rule in $\CloG$.
These side conditions involve an order $\leqfp$ on the set $F$ consisting of game logic formulas of the form $\diam{\gamma^*}\phi$ or $\diam{\gamma^\cross}\phi$.
These {game logic fixpoint formulas} will be in 1-1 correspondence with fixpoint variables when we translate into the monotone $\mu$-calculus.

The third system, $\CloM$, is a cut-free sequent calculus for the monotone 
$\mu$-calculus, also with a closure rule and name annotations.
This system is a monotone variant of the system $\Clo$ for the normal modal $\mu$-calculus introduced in \cite{AfshariLeigh:LICS17}.
In $\CloM$, the side conditions are expressed with the usual (priority/subsumption)
order $\leqAL$ on fixpoint variables where $x \leqAL y$ means that $x$ is of 
higher priority than $y$.
\medskip

%
Our approach to proving soundness and completeness builds on recent work
by Afshari \& Leigh. In \cite{AfshariLeigh:LICS17} they presented a
cut-free sequent calculus for the normal modal $\mu$-calculus, and
proved its completeness via a series of transformations through other
proof systems, including the system $\Clo$, and ending at the complete
tableaux systems with names developed in \cite{Stirling14tableaux} and
\cite{NJ10thesis}.
We prove completeness of the systems $\GLms$, $\CloG$ and $\CloM$  by showing
that we can transform derivations as follows:
\[ \HilbP
\stackrel{1)}{\leftprooftransformarrow} 
\GLms
\stackrel{2)}{\leftprooftransformarrow} \CloG 
\stackrel{3)}{\leftprooftransformarrow} \CloM
\stackrel{4)}{\leftprooftransformarrow} \Clo 
\] 

1) First, the transformation of
$\GLms$-derivations to $\HilbP$-derivations goes via an intermediate
 Hilbert system $\HilbFull$, which is an extension of $\HilbP$ to the
full language which has angelic as well as explicit demonic operations
and freely-placed duals and negations. These transformations are
relatively straightforward using that $\HilbP$ essentially has cut via
modus ponens.

2) The transformation of $\CloG$-derivations to $\GLms$-derivations requires non-trivial adaptations of the analogous result in \cite[Theorem VI.1]{AfshariLeigh:LICS17}. It uses a translation $\startr{}{(-)}$ that replaces annotations on game logic formulas with certain ``deep insertions of demonic tests'', which are the game logic analogues of the ``deep disjunctions'' of \cite{AfshariLeigh:LICS17}. 

3) The transformation of $\CloM$-derivations into $\CloG$-derivations relies on a novel translation $\tr{(-)}$ from game logic into the monotone $\mu$-calculus.
This translation is truth- and validity-preserving, it commutes with fixpoint unfolding, and crucially, it reflects the order on fixpoint variables in $\tr{\phi}$ into the order on fixpoint formulas in $F$.
Note that the translation of game logic from \cite{PaulyParikh:GL} goes into the two-variable 
fragment of modal $\mu$-calculus, and it is therefore not useful for the proof 
transformations in this paper.
Indeed, we see the translation $\tr{(-)}$ as one of our main technical 
contributions.

4) Finally, we obtain completeness of $\CloM$ from the completeness of 
$\Clo$ \cite{AfshariLeigh:LICS17} by transforming $\Clo$-derivations into 
$\CloM$-derivations via
a validity-preserving translation $(-)^t$ which is the fixpoint extension of a well-known translation of monotone modal logic into normal modal logic \cite{KrachtWolter99}.
\medskip

To summarise, completeness of Parikh's system $\HilbP$ is obtained
by the following argument.
Assume that $\phi$ is a game logic formula that is valid over monotone neighbourhood models.
As the above mentioned translations are validity-preserving,
the normal modal $\mu$-calculus formula $(\tr{(\nf{\phi})})^t$
is valid over Kripke models. 
By completeness of $\Clo$, there is a $\Clo$-derivation of 
$(\tr{(\nf{\phi})})^t$. 
By the above sequence of transformations, we obtain a $\HilbP$-derivation of 
$\phi$.

\rem{
\margincomment{HH: Maybe not mention soundness here at all?}
The soundness of $\GLms$ and $\CloG$ follows from the soundness of $\HilbP$.
Soundness of $\CloM$ is obtained by observing that $(-)^t$ preserves satisfiability, and by transforming $\CloM$-derivations into $\Clo$-derivations, which is straightforward thanks to the similarity between $\CloM$ and $\Clo$.
}

\rem{
An important technical difference in our work compared with that 
of \cite{AfshariLeigh:LICS17} is that we do not work with a global order on
fixpoint variables.
Observing that any $\CloM$-proof of a monotone $\mu$-calculus formula $C$ will
only contain fixpoint variables that occur in $C$, we define a ``local order''
$\leqAL[C]$ on these fixpoint variables  by their usual priority/subsumption 
order given by the syntactic shape of $C$. 
In order to work with a global order that would apply for all $C$, one would 
need to work with formulas modulo $\alpha$-equivalence, and this would break 
the crucial link between the order on fixpoint variables in $\tr{\phi}$ and the
order on $F$.
As a result of using the local order, we define the $\CloM$-system parametric 
in a monotone $\mu$-calculus formula $C$.
}

%
%

\subsection{Outline}
%
The paper is organised as follows. In Section~\ref{sec:two-sys} we recall the 
basic definitions of game logic, we introduce Parikh's Hilbert-style axiomatisation $\HilbP$, we present the cut-free Gentzen style system \GLms and
show that \GLms-derivations can be transformed into \HilbP-derivations (Thm~\ref{thm:GLms-to-HilbP}).
In Section~\ref{sec:annot-sys}, we introduce the annotated proof system
$\CloG$ for game logic and show
how $\CloG$-derivations can be translated into 
\GLms-derivations (Thm.~\ref{thm:CloG-to-KozenG}).
In Section~\ref{sec:mon-mu}
we define the annotated system $\CloM$ for the monotone $\mu$-calculus 
and prove its soundness and completeness by connecting it to the $\Clo$-system from~\cite{AfshariLeigh:LICS17} using the standard simulation
of monotone modal logic with a binormal modal logic. 
In Section~\ref{sec:translation}, we show how $\CloM$-derivations can be transformed into corresponding
$\CloG$-derivations using the translation $\tr{(-)}$
of game logic into the monotone $\mu$-calculus (Thm.~\ref{mu to gl}).
In Section~\ref{sec:s-c}, we apply the transformation results to prove
soundness and completeness of $\CloG$, $\GLms$ and $\HilbP$.
Finally, in Section~\ref{sec:conclusion} we conclude and discuss related and future work.
Due to space limitations we only provide proofs of key results. All proofs  
that have been omitted from the main text can be found in the appendix.

\section{Two derivation systems}
\label{sec:two-sys}

\subsection{Game logic: basics}
\label{subsec:game logic basics}

Throughout, we assume fixed countable sets $\AtProp$ and $\AtGame$ of atomic
propositions and atomic games, respectively.
Over these sets we shall define three distinct languages of game logic.
Parikh's original language $\langPar$ only allows the angelic version of game 
constructors, while dual and negation may occur freely, 
The \emph{normal form language} $\langNorm$ allows both angelic and demonic game
constructors, while negation and duals may only occur in front of atoms.
The \emph{full} language $\langFull$ allows all connectives and game constructors from the
other two languages, and freely placed duals and negations.

\begin{definition}
    \label{def:langPar}
    \label{def:langNorm}
The languages $\langPar$ and $\langNorm$ consist of the formulas and games
generated by the following grammars:
\[\begin{array}{ccl}
  \formPar \ni \phi
  & ::= &
  p 
  \mid \lnot \phi
  \mid \phi \lor \phi
  \mid \diam{\gamma}\phi, \;\gamma\in\gamePar
  \\
  \gamePar \ni \gamma
  & ::= &
  g 
  \mid \gamma \comp \gamma
  \mid \gamma \gcha \gamma
  \mid \gamma^*
  \mid \gamma^d
  \mid \phi?,\;\phi \in \formPar
\\
\formNorm \ni \phi
  & ::= &
  p 
  \mid \neg p
  \mid \phi \lor \phi
  \mid \phi \land \phi
  \mid \diam{\gamma}\phi, \;\gamma\in\gameNorm
\\
\gameNorm \ni \gamma
  & ::= &
  g  
  \mid g^d 
  \mid \gamma \comp \gamma
  \mid \gamma \gcha \gamma
  \mid \gamma \gchd \gamma  
  \mid \gamma^*
  \mid \gamma^\cross
   \\ && \phantom{g} 
  \mid \phi?
  \mid \phi!,\;\phi \in \formNorm
  \\
 \formFull \ni \phi
  & ::= &
  p 
  \mid \lnot \phi
  \mid \phi \lor \phi
  \mid \diam{\gamma}\phi, \;\gamma\in\gameFull
  \\
  \gameFull \ni \gamma
  & ::= &
  g 
  \mid \gamma \comp \gamma
  \mid \gamma \gcha \gamma
  \mid \gamma \gchd \gamma  
  \mid \gamma^*
  \mid \gamma^\cross
  \mid \gamma^d
   \\ && \phantom{g} 
  \mid \phi?,\;\phi \in \formFull
\end{array}
\]
where $p \in \AtProp$ and $g \in \AtGame$.
In $\langPar$ and $\langFull$ we admit the connectives $\to, \land, \lra$ as 
the usual abbreviations.
\end{definition}

The game operations should be read as follows. 
The \emph{composition} $\gamma\comp\delta$ means first play $\gamma$, then 
play $\delta$.
The \emph{angelic choice} $\gamma \gcha \delta$ is the game where Angel decides whether 
to play $\gamma$ or $\delta$. 
The \emph{angelic iteration} $\gamma^*$ is the game in which $\gamma$ is played a finite,
possibly zero, number of times, with Angel at the start and after each round
deciding whether to stop or play one more round of $\gamma$.
The \emph{angelic test} $\psi?$ is the game in which $\phi$ is evaluated, and
Angel immediately ``loses'' if  $\psi$ is false, and otherwise play continues.
The \emph{dual game} $\gamma^d$ is the game in which the roles of the
two players are interchanged, i.e., the strategies of Angel in $\gamma^d$ are 
exactly the strategies of Demon in $\gamma$, and vice versa.
The definitions of the demonic operations are such that (cf.~\cite{PaulyParikh:GL}):
\begin{equation}
    \gamma \gchd \delta = ( \gamma^d \gcha \delta^d )^d, \quad
    \gamma^\cross = ( (\gamma^d)^* )^d, \quad
    \psi! = ((\lnot\psi)?)^d\;
\rem{  \begin{array}{rcl}
    \gamma \gchd \delta &=& ( \gamma^d \gcha \delta^d )^d\\
    \gamma^\cross &=& ( \gamma^d )^d\\
    \psi! &=& ((\lnot\psi)?)^d
  \end{array}
  }
\end{equation}
The interpretation of the demonic operations is obtained by replacing ``Angel'' with ``Demon''
in the above. However, since a modal formula $\diam{\gamma}\phi$ expresses
the strategic ability of Angel in $\gamma$, 
$\diam{\gamma\gchd\delta}\phi$ means that Angel has a strategy to achieve $\phi$
in both $\gamma$ and $\delta$, and
$\diam{\gamma^\cross}\phi$ means that Angel has a strategy for maintaining $\phi$
indefinitely when playing $\gamma$ repeatedly, and not knowing when the iteration terminates.
Finally, $\psi!$ is the game in which Angel immediately ``wins'' if
$\psi$ is true. Hence, $\diam{\psi!}
\varphi$ is true if at least one of $\psi$ and $\varphi$ is true.

We will often refer to formulas and games jointly as \emph{terms}. 
We denote the \emph{subterm relation} by $\subterm$, using
$\ssubterm$ for the strict version.
For example, $g^\cross \ssubterm \diam{g^\cross \comp h}p$ and $h
\ssubterm (\diam{h}p?)\comp g$.

Formulas of the form $\diam{\gamma^*}\phi$ or $\diam{\gamma^\cross}\phi$ will play the role of fixpoint variables on the game logic side. In particular, we need to define an order $\lessfp$ on them, but it is not immediately clear how to do that. 
For example, a naive approach based on the subformula-relation will not work, 
since we need that, e.g., $\diam{(g^\cross \gcha h)^\cross}p \lessfp 
\diam{g^\cross}p$.
Our solution is to use the converse subterm relation on the game terms
that label the modalities.

\begin{definition}\label{def:F-order}
We define the set of least, greatest, respectively all, \emph{fixpoint formulas} in
$\langNorm$ as follows:
\[\begin{array}{lcl}
   F^*      & \isdef & \{\diam{\gamma^*} \varphi 
      \mid \gamma \in \gameNorm, \varphi \in \formNorm \},
\\ F^\cross & \isdef & \{\diam{\gamma^\cross} \varphi 
      \mid \gamma \in \gameNorm, \varphi\in \formNorm \},\\
   F        & \isdef & F^* \cup F^\cross.
\end{array}\]
We define an order $\lessfp$ on $F$ by setting
$\diam{\gamma^\circ} \varphi \lessfp \diam{\delta^\dagger} \psi$
for $\circ,\dagger \in \{*,\cross\}$
if $\delta^\dagger \ssubterm \gamma^\circ$. We write $\diam{\gamma^\circ} \varphi \leqfp \diam{\delta^\dagger} \psi$ if $\delta^\dagger \lessfp \gamma^\circ$ or $\delta^\dagger = \gamma^\circ$.
\end{definition}
It should be clear that $\lessfp$ is transitive and irreflexive.

\rem{
\begin{example}
  Applying the definition of $\lessfp$,
  we find that, for example: 
  $\diam{g^\cross \gcha h^\cross}p \lessfp \diam{g^\cross}p$
  and
  $\diam{(a^* \comp (\diam{b^\cross}p)?)^\cross}p \lessfp \diam{a^*}p$,
  but it is not the case that
  $\diam{(a^* \comp (\diam{b^\cross}p)?)^\cross}p \leqfp \diam{((\diam{b^\cross}p)?)^\cross}p$.
\end{example}
}

We need the following notion of (Fischer-Ladner) \emph{closure}.

\begin{definition}\label{def:clos}
The \emph{closure} $\Cl{\xi}$ of a formula $\xi \in \langNorm$ is the smallest
subset of $\langNorm$ that contains $\xi$ and is closed under subformulas as well as 
the following rules:
If $\diam{\gamma^{*}}\phi\in\Cl{\xi}$ then $\phi\lor\diam{\gamma}\diam{\gamma^{*}}\phi \in \Cl{\xi}$.
If $\diam{\gamma^{\cross}}\phi \in\Cl{\xi}$ then $\phi\land\diam{\gamma}\diam{\gamma^{\cross}}\phi\in\Cl{\xi}$.
If $\diam{\psi?}\phi \in\Cl{\xi}$ then $\psi \in\Cl{\xi}$.
If $\diam{\psi!}\phi \in\Cl{\xi}$ then $\psi \in\Cl{\xi}$.
\rem{ 
$\neg p/p$, $\phi_{0}\land\phi_{1}/\phi_{i}$, $\phi_{0}\lor\phi_{1}/\phi_{i}$,
$\diam{\gamma}\phi/\phi$,

$\diam{\gamma \comp \delta}\phi/\diam{\gamma}\diam{\delta}\phi$,
$\diam{\gamma_{0} \gcha \gamma_{1}}\phi/\diam{\gamma_{i}}\phi$,
$\diam{\gamma_{0} \gchd \gamma_{1}}\phi/\diam{\gamma_{i}}\phi$,
}
\rem{ 
$\diam{\gamma^{*}}\phi/\phi\lor\diam{\gamma}\diam{\gamma^{*}}\phi$,
$\diam{\gamma^{\cross}}\phi/\phi\land\diam{\gamma}\diam{\gamma^{\cross}}\phi$,

$\diam{\psi?}\phi/\psi$,
$\diam{\psi!}\phi/\psi$.
}
The sets $F(\xi), F^{*}(\xi), F^{\cross}(\xi)$ of all/least/greatest
\emph{fixpoint formulas of a formula} $\xi \in \langNorm$ are given as 
$F(\xi) \isdef F \cap \Cl{\xi}$, etc.
\end{definition}

The simplest way to define the semantics of these languages is as follows
\cite{PaulyParikh:GL}.
\rem{An \emph{effectivity function for a game $\gamma$} on a set $S$ is a
monotone map $E_\gamma: \wp(S) \to \wp(S)$, and $s \in E_\gamma(Y)$
means that at position $s$, Angel is effective for $Y$ in $\gamma$,
i.e., Angel has a strategy in $\gamma$ that ensures that the outcome of
$\gamma$ is in $Y$.}
We denote by $\mm{S}$ the set of all monotone maps $f\colon\wp(S) \to \wp(S)$. An \emph{effectivity function for a game $\gamma$ on a set $S$} is then a
$E_\gamma\in \mm{S}$, and $s \in E_\gamma(Y)$
means that at position $s$, Angel is effective for $Y$ in $\gamma$,
i.e., Angel has a strategy in $\gamma$ that ensures that the outcome of
$\gamma$ is in $Y$.

\begin{definition}
A \emph{game model} is a triple $\bbS = (S,E,V)$ such that
$V: \AtProp \to \wp(S)$ is a \emph{valuation} and $E\colon \AtGame \to \mm{S}$
assigns an effectivity function on $S$ to every atomic $g \in \AtGame$.
By a mutual induction on formulas and games, we define the
\emph{meaning} $\mng{\phi}^{\bbS}$ of a formula $\phi$ in a model $\bbS$,
and the effectivity function $E_\gamma$ in $\bbS$   
for complex games $\gamma$ as follows:
\[
\begin{array}{lll}
   \mng{p}^{\bbS}      & \isdef & V(p)
\\ \mng{\neg p}^{\bbS} & \isdef & S \setminus p
\\ \mng{\phi\lor\psi}^{\bbS}  & \isdef & \mng{\phi}^{\bbS} \cup \mng{\psi}^{\bbS}
\\ \mng{\phi\land\psi}^{\bbS} & \isdef & \mng{\phi}^{\bbS} \cap \mng{\psi}^{\bbS}
\\ \mng{\diam{\gamma}\phi}^{\bbS} & \isdef & \wh{E}_{\gamma}(\mng{\phi}^{\bbS})
\\[.3em] \wh{E}_g(X) & \isdef & E(g)(X)
\\ \wh{E}_{(\gamma^{d})}(X) & \isdef & S \setminus \wh{E}_{\gamma}(S \setminus X)
\\ \wh{E}_{\gamma\comp\delta}(X)  & \isdef & \wh{E}_{\gamma}(\wh{E}_{\delta}(X))
\\ \wh{E}_{\gamma\gcha\delta}(X)  & \isdef 
    & \wh{E}_{\gamma}(X) \cup \wh{E}_{\delta}(X)
\\ \wh{E}_{\gamma\gchd\delta}(X)  & \isdef  
    & \wh{E}_{\gamma}(X) \cap \wh{E}_{\delta}(X)
\\ \wh{E}_{(\gamma^{*})}(X)       & \isdef & \lfp\, Y.\, X \cup \wh{E}_{\gamma}(Y) 
\\ \wh{E}_{(\gamma^{\cross})}(X)  & \isdef & \gfp\, Y.\, X \cap \wh{E}_{\gamma}(Y) 
\\ \wh{E}_{(\phi?)}(X)  & \isdef & \mng{\phi}^{\bbS} \cap X
\\ \wh{E}_{(\phi!)}(X)  & \isdef & \mng{\phi}^{\bbS} \cup X
\end{array}
\]
Notions like satisfiability, equivalence, etc., are defined in the standard way.
In particular, a game formula $\phi$ is \emph{valid}, notation: $\models
\phi$, if $\mng{\phi}^{\bbS} = S$, for every game model $\bbS = (S,E,V)$.
\end{definition}

\begin{proposition}
\label{p:trl}
There are recursively defined, truth-preserving translations
\[\begin{array}{lll}
   \trParNorm{-} \colon & \langFull \to \langNorm
\\ \trNormPar{-} \colon & \langFull \to \langPar
\end{array}\]
\end{proposition}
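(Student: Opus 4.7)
My plan is to define each translation by a simultaneous structural recursion on formulas and games, then verify truth-preservation by a parallel induction using the semantic clauses from the preceding definition together with the equalities relating demonic and angelic operations in~(1).

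The map $\trNormPar{-}$ is the easier of the two, since $\langPar$ already permits freely placed negations and duals; the only job is to eliminate the demonic constructors $\gchd$, $(-)^\cross$ and $!$. I define $\trNormPar{-}$ homomorphically on the common syntax and set, mirroring~(1),
\[
\trNormPar{\gamma \gchd \delta} \isdef (\trNormPar{\gamma}^d \gcha \trNormPar{\delta}^d)^d, \quad
\trNormPar{\gamma^\cross} \isdef (((\trNormPar{\gamma})^d)^*)^d,
\]
\[
\trNormPar{\phi!} \isdef ((\lnot \trNormPar{\phi})?)^d.
\]
Truth-preservation for $\trNormPar{-}$ then follows directly from~(1) by a routine induction on terms.

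The map $\trParNorm{-} \colon \langFull \to \langNorm$ is more delicate, because negation must be driven down to atomic propositions and dual down to atomic games. I set up a mutual recursion that defines four auxiliary maps simultaneously: $\trParNorm{-}, \dg{-} \colon \formFull \to \formNorm$ and $\trParNorm{-}, \ol{-} \colon \gameFull \to \gameNorm$, where $\dg{\phi}$ serves as a normal-form representative of $\lnot \phi$ and $\ol{\gamma}$ as a normal-form representative of $\gamma^d$. This separation lets me handle the problematic cases by simple redirection, e.g.\ $\trParNorm{\lnot \phi} \isdef \dg{\phi}$ and $\trParNorm{\diam{\gamma^d}\phi} \isdef \diam{\ol{\gamma}}\trParNorm{\phi}$, each of which recurses only on strict subterms. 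The clauses for $\dg{-}$ are the De~Morgan/dualization rules, e.g.\ $\dg{p} \isdef \lnot p$, $\dg{\lnot p} \isdef p$, $\dg{\phi \lor \psi} \isdef \dg{\phi} \land \dg{\psi}$ and $\dg{\diam{\gamma}\phi} \isdef \diam{\ol{\gamma}}\dg{\phi}$; those for $\ol{-}$ follow the dualities in~(1), e.g.\ $\ol{g} \isdef g^d$, $\ol{g^d} \isdef g$, $\ol{\gamma \gcha \delta} \isdef \ol{\gamma} \gchd \ol{\delta}$, $\ol{\gamma^*} \isdef (\ol{\gamma})^\cross$, $\ol{\gamma \comp \delta} \isdef \ol{\gamma} \comp \ol{\delta}$ and $\ol{\phi?} \isdef \dg{\phi}!$.

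Truth-preservation for $\trParNorm{-}$ is then proved by a simultaneous induction on term complexity, with inductive hypotheses
\[
\mng{\trParNorm{\phi}}^\bbS = \mng{\phi}^\bbS, \quad \mng{\dg{\phi}}^\bbS = S \setminus \mng{\phi}^\bbS,
\]
\[
\wh{E}_{\trParNorm{\gamma}} = \wh{E}_\gamma, \quad \wh{E}_{\ol{\gamma}}(X) = S \setminus \wh{E}_\gamma(S \setminus X),
\]
for every game model $\bbS = (S, E, V)$ and every $X \subseteq S$. Each case reduces to a semantic clause combined with an equality from~(1). The main obstacle is ensuring well-foundedness of the recursion: a naive attempt to define $\trParNorm{\lnot \phi}$ directly in terms of $\trParNorm{\phi}$ would leave no handle on the leading negation, and similarly for $(-)^d$. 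Introducing the auxiliary maps $\dg{-}$ and $\ol{-}$ resolves this, since every defining clause then refers only to strict subterms of its input; after that, both well-definedness and the inductive proof of truth preservation go through routinely.
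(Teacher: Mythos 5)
Your proposal is correct and takes essentially the same route as the paper: the paper also proves the proposition by writing out explicit recursive clauses, with $\pa{-}$ eliminating $\gchd$, $(-)^\cross$ and $!$ exactly as in your clauses mirroring the dualities, and $\nf{-}$ pushing $\lnot$ and $(-)^d$ inwards via De Morgan and the game dualities, leaving the routine semantic induction to the reader. Your packaging of the negation and dual cases as auxiliary maps $\dg{-}$ and $\ol{-}$ on $\langFull$ is only a cosmetic variant of the paper's direct clauses for $\nf{\lnot\phi}$ and $\nf{\gamma^d}$ (which the paper afterwards restates on $\langNorm$ as complementation and dual game), with the small benefit that well-foundedness becomes a strict-subterm recursion.
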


As a corollary of this, negation is definable in $\langNorm$.
We shall need the following explicit definition in the sequel.

\begin{definition}
By a mutual induction we define the \emph{complementation} $\ol{\phi} \isdef
\trParNorm{\neg\phi}$ of an $\langNorm$-formula $\phi$, and the \emph{dual game}
$\dg{\gamma}$ of an $\langNorm$-game $\gamma$:
\[
\begin{array}[t]{lll}
   \ol{p}      & \isdef & \neg p
\\ \ol{\neg p} & \isdef & p
\\ \ol{\phi\lor\psi}  & \isdef & \ol{\phi} \land \ol{\psi}
\\ \ol{\phi\land\psi}  & \isdef & \ol{\phi} \lor \ol{\psi}
\\ \ol{\diam{\gamma}\phi} & \isdef & \diam{\dg{\gamma}}\ol{\phi}
\end{array}
\hspace*{10mm}
\begin{array}[t]{lll}
   \dg{g}      & \isdef & g^{d}
\\ \dg{(g^{d})} & \isdef & g
\\ \dg{\gamma\comp\delta}  & \isdef & \dg{\gamma} \comp \dg{\delta}
\\ \dg{\gamma\gcha\delta}  & \isdef & \dg{\gamma} \gchd \dg{\delta}
\\ \dg{\gamma\gchd\delta}  & \isdef & \dg{\gamma} \gcha \dg{\delta}
\\ \dg{(\gamma^{*})}  & \isdef & (\dg{\gamma})^{\cross}
\\ \dg{(\gamma^{\cross})}  & \isdef & (\dg{\gamma})^{*}
\\ \dg{(\phi?)}  & \isdef & \ol{\phi}!
\\ \dg{(\phi!)}  & \isdef & \ol{\phi}?
\end{array}
\]
\end{definition}
The following proposition is proved by a straightforward induction. We leave the details to the reader.
\begin{proposition}
In any game model $\bbS = (S,E,V)$ we have
\[
      \mng{\ol{\phi}}^{\bbS} = S \setminus \mng{\phi}^{\bbS}
\text{ and } \wh{E}_{\dg{\gamma}} = \wh{E}_{\gamma^{d}}
\]
for any formula $\phi \in \formNorm$ and game $\gamma \in \gameNorm$.
\end{proposition}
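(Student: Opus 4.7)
The plan is to prove the two identities by a single simultaneous induction on the joint term structure of $\phi \in \formNorm$ and $\gamma \in \gameNorm$, exploiting the mutual recursion already present in the definition of $\ol{(-)}$ and $\dg{(-)}$. For each clause in the inductive definition we will verify the corresponding semantic equation, using the inductive hypothesis for the immediate subterms (of both sorts, since e.g.\ $\ol{\diam{\gamma}\phi}$ invokes both $\ol{\phi}$ and $\dg{\gamma}$).

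First I would handle the formula cases. The atomic cases $p$ and $\neg p$ are immediate from the valuation clauses. The Boolean cases $\phi \lor \psi$ and $\phi \land \psi$ reduce to de Morgan: applying IH to $\phi$ and $\psi$ and then using that $(S\setminus A)\cap(S\setminus B) = S\setminus(A\cup B)$, and dually. The modal case $\ol{\diam{\gamma}\phi} = \diam{\dg{\gamma}}\ol{\phi}$ is where the two inductions meet: by IH applied to $\gamma$ and $\phi$,
\[
\mng{\diam{\dg{\gamma}}\ol{\phi}}^{\bbS}
= \wh{E}_{\dg\gamma}(\mng{\ol{\phi}}^{\bbS})
= \wh{E}_{\gamma^d}(S\setminus\mng{\phi}^{\bbS})
= S\setminus\wh{E}_{\gamma}(\mng{\phi}^{\bbS}),
\]
using the semantic definition of $\wh{E}_{(\gamma^d)}$ and that $S \setminus (S\setminus X) = X$.

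Next I would handle the game cases. For $\dg{g}=g^d$ the identity is definitional, and for $\dg{(g^d)}=g$ one has to check that $\wh{E}_g = \wh{E}_{(g^d)^d}$, which follows by unfolding the clause for $(\cdot)^d$ twice and collapsing the complements. For $\comp$, $\gcha$, $\gchd$, $?$ and $!$ the verification is a routine set-theoretic calculation: e.g.\ for composition, $\wh{E}_{\dg\gamma \comp \dg\delta}(X) = \wh{E}_{\dg\gamma}(\wh{E}_{\dg\delta}(X))$ equals $\wh{E}_{\gamma^d}(\wh{E}_{\delta^d}(X))$ by IH, and unfolding the $(\cdot)^d$ clauses shows this coincides with $S\setminus\wh{E}_\gamma(\wh{E}_\delta(S\setminus X)) = \wh{E}_{(\gamma\comp\delta)^d}(X)$. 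The case $\dg{(\phi?)}=\ol{\phi}!$ combines the formula IH with the interplay $\mng{\ol{\phi}}^{\bbS}\cup X = S\setminus(\mng{\phi}^{\bbS}\cap(S\setminus X))$.

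The step I expect to require the most care is the iteration case $\dg{(\gamma^*)} = (\dg\gamma)^\cross$ (and dually), which hinges on the standard Knaster--Tarski duality: for any monotone $f\colon\wp(S)\to\wp(S)$,
\[
S\setminus \lfp Y.\, f(Y) \;=\; \gfp Y.\, S\setminus f(S\setminus Y).
\]
Taking $f(Y) = (S\setminus X)\cup \wh{E}_\gamma(Y)$ gives $S\setminus f(S\setminus Y) = X\cap(S\setminus\wh{E}_\gamma(S\setminus Y)) = X\cap\wh{E}_{\gamma^d}(Y)$, and by IH this equals $X\cap\wh{E}_{\dg\gamma}(Y)$, so
\[
\wh{E}_{(\gamma^*)^d}(X)
= S\setminus \lfp Y.\,(S\setminus X)\cup \wh{E}_\gamma(Y)
= \gfp Y.\, X\cap \wh{E}_{\dg\gamma}(Y)
= \wh{E}_{(\dg\gamma)^\cross}(X),
\]
as required; the case $\dg{(\gamma^\cross)}=(\dg\gamma)^*$ is symmetric. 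This completes the induction and yields both identities simultaneously.
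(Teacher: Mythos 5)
Your proposal is correct and is exactly the ``straightforward induction'' the paper alludes to (the paper itself omits the details): a mutual induction on formulas and games following the recursive definitions of $\ol{(-)}$ and $\dg{(-)}$, with the only non-routine ingredient being the complementation duality $S\setminus\lfp Y.f(Y)=\gfp Y.S\setminus f(S\setminus Y)$ for monotone $f$ in the iteration cases, which you handle correctly.
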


\subsection{Parikh's Hilbert-style system}

The first axiom system for game logic was proposed and conjectured to be 
complete by Parikh \cite{Parikh85}. This is a Hilbert-style system
for the language $\langPar$ that axiomatises the angelic iteration with what Parikh calls \emph{Bar Induction}.
We will refer to this system as $\HilbP$, and it is shown in
Figure~\ref{fig:pfParikhRu}  below. 
For $\phi \in \langPar$, we write $\HilbP \vdash \phi$
if there is a $\HilbP$-derivation of $\phi$.

\begin{figure}[htb]
\fbox{
\begin{minipage}[t]{.47\textwidth}

\begin{minipage}[t]{.49\textwidth}
\quad\textbf{\HilbP Axioms:}
  \begin{enumerate}
  \item All propositional tautologies.
  \item $\diam{\gamma \comp \delta}\phi \lra \diam{\gamma}\diam{\delta}\phi$
  \item $\diam{\gamma \gcha \delta}\phi \lra \diam{\gamma}\phi \lor \diam{\delta}\phi$
  \item $\diam{\gamma^*}\phi \lra \phi \lor \diam{\gamma}\diam{\gamma^*}\phi$
  \item $\diam{{\psi}?}\phi \lra \psi \land \phi$
  \item $\diam{\gamma^d}\phi \lra \lnot\diam{\gamma}\lnot\phi$
  \end{enumerate}
 \end{minipage}
%
\begin{minipage}[t]{.5\textwidth}
\quad\quad\textbf{\HilbP Rules:}
    \begin{prooftree}
      \AxiomC{$\phi$}
      \AxiomC{$\phi \to \psi$}
      \RightLabel{\RuMP}
      \BinaryInfC{$\psi$}
    \end{prooftree}

 \begin{prooftree}
      \AxiomC{$\phi \to \psi$}
      \RightLabel{\RuMon}
      \UnaryInfC{$\diam{\gamma} \phi \to \diam{\gamma}\psi$}
    \end{prooftree}

    \begin{prooftree}
      \AxiomC{$\diam{\gamma}\phi \to\phi$}
      \RightLabel{\RuBarInd}
      \UnaryInfC{$\diam{\gamma^*}\phi \to \phi$}
    \end{prooftree}
  \end{minipage}
 \medskip
  \end{minipage}
}
\caption{Axioms and rules of $\HilbP$.}
\label{fig:pfParikhRu}
\end{figure}

The system $\HilbP$ is easily seen to be sound.
A main contribution of our paper is that we confirm Parikh's
completeness conjecture. 
We prove Theorem~\ref{t:par-completeness} in section~\ref{sec:s-c} below.

\begin{theorem}[Soundness and Completeness of \HilbP]
\label{t:par-completeness}
For every formula $\phi \in\formPar$, we have:
$
\HilbP \vdash \phi \text{ iff } \models \phi.
$
\end{theorem}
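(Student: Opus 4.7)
The plan is to split the theorem into the easy soundness direction and the much harder completeness direction. For soundness, I would verify by induction on $\HilbP$-derivations that all axioms in Figure~\ref{fig:pfParikhRu} are valid and that \RuMP, \RuMon, \RuBarInd preserve validity. The only nontrivial checks are \RuMon, which follows from monotonicity of each $\wh{E}_\gamma$, and \RuBarInd, which follows from the Knaster--Tarski characterisation of $\wh{E}_{\gamma^*}(X)$ as the least pre-fixpoint of $Y \mapsto X \cup \wh{E}_\gamma(Y)$: if $\diam{\gamma}\phi \to \phi$ is valid then $\mng{\phi}$ is such a pre-fixpoint (with $X = \mng{\phi}$), hence contains $\mng{\diam{\gamma^*}\phi}$.

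For completeness, I would assemble the four proof transformations advertised in the introduction into a single pipeline. Given $\phi \in \formPar$ with $\models \phi$, I would first pass to its normal form $\nf{\phi} \in \formNorm$ using Proposition~\ref{p:trl}, which is still valid. Next, apply the translation $\tr{(-)}$ into the monotone $\mu$-calculus to obtain a valid $\tr{\nf{\phi}}$, and then the Kracht--Wolter style embedding $(-)^t$ to obtain a formula $(\tr{\nf{\phi}})^t$ valid on Kripke models. By the Afshari--Leigh completeness theorem for $\Clo$, there is a $\Clo$-derivation of $(\tr{\nf{\phi}})^t$. I would then successively transform this derivation through $\CloM$ (step 4, using that $(-)^t$ is a faithful simulation), through $\CloG$ (step 3, using that $\tr{(-)}$ not only preserves validity but also commutes with fixpoint unfolding and reflects the priority order on $\mu$-variables of $\tr{\nf{\phi}}$ into the order $\leqfp$ on fixpoint formulas of $F(\nf{\phi})$), through $\GLms$ (step 2, replacing name annotations by deep demonic tests via $\startr{}{(-)}$), and finally through $\HilbFull$ into $\HilbP$ (step 1, reinstating angelic primitives and exploiting that modus ponens gives a cut surrogate) to obtain a $\HilbP$-derivation of $\nf{\phi}$, from which a $\HilbP$-derivation of $\phi$ follows by provable equivalence $\phi \lra \nf{\phi}$.

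The main obstacles I anticipate are in the middle two transformations. The step from $\CloM$ to $\CloG$ is where the novel translation $\tr{(-)}$ has to do real work: I need to check that every $\CloM$-rule instance lifts to a $\CloG$-rule instance, which requires that the local priority order $\leqAL[C]$ on the $\mu$-variables of $C = \tr{\nf{\phi}}$ agrees under $\tr{(-)}$ with the order $\leqfp$ on $F(\nf{\phi})$, so that the closure side-condition is preserved; making this compatibility precise is the technical crux. The step from $\CloG$ to $\GLms$ then has to turn annotated closure applications into applications of the strengthened induction rule \RuIndS, for which I would follow the Afshari--Leigh blueprint using $\startr{}{(-)}$ to internalise the annotations as demonic tests placed deep inside game terms, with care that game logic's more rigid syntax still allows the required deep insertions. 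By contrast, soundness of $\HilbP$, the reduction $\GLms \leadsto \HilbFull \leadsto \HilbP$, and the reduction $\Clo \leadsto \CloM$ are mostly bookkeeping, and so they are unlikely to be the bottleneck.
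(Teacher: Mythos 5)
Your proposal follows essentially the same route as the paper: soundness is verified directly (the paper merely asserts it is easy), and completeness is obtained by exactly the paper's chain $\models \phi \Rightarrow \models \nf{\phi} \Rightarrow \models \tr{(\nf{\phi})} \Rightarrow \Clo \vdash (\tr{(\nf{\phi})})^t \Rightarrow \CloM \vdash \tr{(\nf{\phi})} \Rightarrow \CloG \vdash \nf{\phi} \Rightarrow \GLms \vdash \nf{\phi} \Rightarrow \HilbP \vdash \phi$, with the same identification of the $\CloM\to\CloG$ and $\CloG\to\GLms$ transformations as the technical crux. The only wording to correct is the final step: since $\nf{\phi} \notin \langPar$, there is no $\HilbP$-derivation of $\nf{\phi}$ and no equivalence $\phi \lra \nf{\phi}$ inside $\HilbP$; as in Theorem~\ref{thm:GLms-to-HilbP}(1) one derives $\nf{\phi}$ in $\HilbFull$, uses $\HilbFull \vdash \phi \lra \nf{\phi}$ (Lemma~\ref{l:normalform-eq}) to get $\HilbFull \vdash \phi$, and then invokes conservativity of $\HilbFull$ over $\HilbP$ via the $\pari{-}$-translation (Lemma~\ref{l:HilbFull-to-HilbP-pari}, with $\pari{\phi} = \phi$ for $\phi \in \langPar$) to conclude $\HilbP \vdash \phi$.
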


\rem{
For the sake of clarity: here and throughout the paper, by ``completeness'' we 
mean ``weak completeness'' in the sense that every valid formula is provable. 
As with PDL and the modal $\mu$-calculus, strong completeness of a finitary 
proof system is not possible, since game logic is not compact.
}

\subsection{The cut-free sequent system $\GLms$ for game logic}

We now introduce a cut-free (Tait-style) sequent system \GLms for
game logic formulas in normal form.
A \emph{sequent} is thus defined as a finite set of $\formNorm$-formulas (to be read disjunctively).
For a finite set $\Phi \subseteq \formNorm$, we define $\nnf{\Phi}\in
\formNorm$ as the normal form $\nnf{\bigvee\Phi}$ of $\neg(\bigvee \Phi)$.

The system $\GLms$ consists of several parts.
Its core is the sequent calculus version of monotone modal logic as shown in
Figure~\ref{fig:pfGL1}.
\begin{figure}[thb]
\fbox{
\begin{minipage}[t]{0.47\textwidth}
\begin{minipage}[t]{0.2\textwidth}
\begin{prooftree}
 \AxiomC{$\phantom{\Phi,}$}
 \RightLabel{\Ax}
 \UnaryInfC{$\Phi, \nnf{\Phi}$}
\end{prooftree}
\end{minipage}
\begin{minipage}[t]{0.2\textwidth}
\begin{prooftree}
  \AxiomC{$\Phi$}
  \RightLabel{\RuWeak}
  \UnaryInfC{$\Phi,\phi $}
 \end{prooftree}
\end{minipage}
\begin{minipage}[t]{0.35\textwidth}
 \begin{prooftree}
  \AxiomC{$\phantom{\Phi,} \phi , \psi $}
  \RightLabel{\RuMonMod}
  \UnaryInfC{$\diam{g} \phi , \diam{g^d} \psi $}
 \end{prooftree}
 \end{minipage}
 \medskip 
 \begin{minipage}[t]{0.45\textwidth}
 \begin{prooftree}
 \AxiomC{$\Phi, \phi , \psi $}
 \RightLabel{\RuOr}
 \UnaryInfC{$\Phi, \phi  \lor \psi $}
\end{prooftree}
\end{minipage}
\begin{minipage}[t]{0.45\textwidth}
    \begin{prooftree}
 \AxiomC{$\Phi, \phi $}
 \AxiomC{$\Phi, \psi $}
 \RightLabel{\RuAnd}
 \BinaryInfC{$\Phi, \phi  \land \psi $}
\end{prooftree}
\end{minipage}
\end{minipage} 
 }
 \caption{The basic rules of the sequent calculus \monML for Game Logic.}
 \label{fig:pfGL1}
 \end{figure}
In order to reason about game operators, in Fig.~\ref{fig:game-op-rules} we list
some rules, each of which directly mirrors the semantic meaning of one game
constructor.
\begin{figure}[thb]
\fbox{
\begin{minipage}{.47\textwidth}
\begin{minipage}[t]{.36\textwidth}
\begin{prooftree}
 \AxiomC{$\Phi, \phi  \lor \diam{\gamma} \diam{\gamma^*} \phi $}
 \RightLabel{\RuStar}
 \UnaryInfC{$\Phi, \diam{\gamma^*} \phi $}
\end{prooftree}
\begin{prooftree}
 \AxiomC{$\Phi, \phi  \land \diam{\gamma} \diam{\gamma^\cross} \phi $}
 \RightLabel{\RuCross}
 \UnaryInfC{$\Phi, \diam{\gamma^\cross} \phi $}
\end{prooftree}
\end{minipage}
\begin{minipage}[t]{.36\textwidth}
\begin{prooftree}
 \AxiomC{$\Phi, \diam{\gamma} \phi  \lor \diam{\delta} \phi $}
 \RightLabel{\RuChoiceA}
 \UnaryInfC{$\Phi, \diam{\gamma \gcha \delta} \phi $}
\end{prooftree}
\begin{prooftree}
 \AxiomC{$\Phi, \diam{\gamma} \phi  \land \diam{\delta} \phi $}
 \RightLabel{\RuChoiceD}
 \UnaryInfC{$\Phi, \diam{\gamma \gchd \delta} \phi $}
\end{prooftree}
\end{minipage}
\begin{minipage}[t]{.25\textwidth}
\begin{prooftree}
 \AxiomC{$\Phi, \psi  \land \phi $}
 \RightLabel{\RuQuestion}
 \UnaryInfC{$\Phi, \diam{ \psi ?} \phi $}
\end{prooftree}
\begin{prooftree}
 \AxiomC{$\Phi, \psi  \lor \phi $}
 \RightLabel{\RuBang}
 \UnaryInfC{$\Phi, \diam{\psi !} \phi $}
\end{prooftree}
\end{minipage}
\end{minipage}
}
\caption{The sequent calculus rules \GameOpRules for game operations.}
\label{fig:game-op-rules}
\end{figure}

In the third part of the \GLms proof system we have the three ``deep'' derivation
rules given in Figure~\ref{fig:deeprules}. 
These rules are needed for technical reasons, as will become clear in some of
the proofs further on. 
\begin{figure}[thb]
\fbox{
\begin{minipage}{0.48\textwidth}
\hspace{-3mm}
\begin{minipage}{0.33\textwidth}
\begin{prooftree}
 \AxiomC{$\Phi,\psi(\gamma)$}
 \RightLabel{\small \RuMonDgame}
 \UnaryInfC{$\Phi, \psi(\chi ! \comp \gamma)$}
\end{prooftree}
\end{minipage}
\begin{minipage}{0.33\textwidth}
\begin{prooftree}
 \AxiomC{$\Phi, \psi(\phi)$}
 \RightLabel{\small \RuMonDfmla}
 \UnaryInfC{$\Phi, \psi(\diam{\chi !} \phi)$}
\end{prooftree}
\end{minipage}
\hspace{1mm} 
\begin{minipage}{0.32\textwidth}
\begin{prooftree}
 \AxiomC{$\Phi, \psi(\diam{\gamma} \diam{\delta} \phi)$}
 \RightLabel{\small \RuCompD}
 \UnaryInfC{$\Phi, \psi(\diam{\gamma\comp\delta} \phi)$}
\end{prooftree}
\end{minipage}
\end{minipage}
}
\caption{Deep rules for Game Logic: \DeepGRules. The notation $\psi(\phi)$ should be read as follows: $\psi$ is a context, i.e., a formula with a unique occurrence of a proposition letter $p$, and  $\psi(\phi)$ is the formula obtained by substituting $p$ for $\phi$ in $\psi$.}
\label{fig:deeprules}
\end{figure}

The final ingredient of \GLms is the strengthened induction rule $\RuIndS$ in
Figure~\ref{fig:KozenG}.
This rule, just like the homonymous rule in~\cite{AfshariLeigh:LICS17} on which
it is based, detects unfoldings of greatest fixpoints in the same context.
This may become clearer when we show in Theorem~\ref{thm:CloG-to-KozenG} how \RuIndS is used to translate the closure rule of the system $\CloG$.
In this sense, $\RuIndS$ plays a role similar to the \emph{context rule}~\cite[Proposition~5.7(vi)]{koze:resu83} in Kozen's completeness proof for the aconjunctive fragment of the modal $\mu$-calculus. Only, Kozen's proof is based on satisfiability and the context rule therefore deals with least fixpoint unfoldings.
Our approach is based on validity, and $\RuIndS$ therefore detects greatest fixpoint unfoldings. 

\begin{figure}[thb]
\fbox{
\begin{minipage}{0.47\textwidth}
\begin{prooftree}
 \AxiomC{$\Phi,
   \phi \land \diam{\gamma} 
   \diam{( \nnf{\Phi}! \comp \gamma)^\cross}
   \diam{ \nnf{\Phi}!}\phi$}
 \RightLabel{\RuIndS}
 \UnaryInfC{$\Phi, \diam{\gamma^\cross} \phi$}
\end{prooftree}
\end{minipage}
}
\caption{Strengthened induction rule for Game Logic.}
\label{fig:KozenG}
\end{figure}

To obtain a more concrete understanding of the $\RuIndS$ rule, think of the formula
$\diam{\gamma^{\cross}}\phi$ as a greatest fixpoint formula 
$\nu x. \phi \land \diam{\gamma}x$.
The ``standard'' fixpoint rule for $\gamma^{\cross}$ would read as
follows: ``from $\psi \to \phi \land \diam{\gamma}\psi$ infer $\psi \to 
\diam{\gamma^{\cross}}\phi$'', or, formulated as a Tait-style sequent rule:
\begin{prooftree}
 \AxiomC{$\Phi, \phi \land \diam{\gamma}\nnf{\Phi}$}
 \RightLabel{\RuInd}
 \UnaryInfC{$\Phi, \diam{\gamma^\cross} \phi$}
\end{prooftree}
Now, observing that $\diam{ \nnf{\Phi}!}\phi \equiv \nnf{\Phi} \lor \phi$,
one may see that $\RuIndS$ is indeed a variation of $\RuInd$.

Some further understanding of the rule $\RuIndS$ may be gained by establishing
its \emph{soundness}.
For this purpose we may reason by contraposition, showing that the refutability 
of the conclusion of $\RuIndS$ implies the refutability of its premise.
It is not hard to see that this boils down to proving the following statement,
which is formulated using the dual formulas and games.

\begin{proposition}\label{prop:strind}
If $\chi \land \diam{\gamma^{*}}\phi$ is satisfiable, then so is either
$\chi \land \phi$ or 
$\chi \land 
\diam{\gamma}\diam{(\nnf{\chi}?\comp\gamma)^{*}}\diam{\nnf{\chi}?}\phi$.
\end{proposition}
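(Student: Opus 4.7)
The plan is, given a game model $\bbS = (S, E, V)$ with some $s_0 \in S$ satisfying $\chi \land \diam{\gamma^{*}}\phi$, to exhibit a state of $\bbS$ satisfying one of the two disjuncts in the conclusion. Setting $A \isdef \mng{\diam{(\nnf{\chi}?\comp\gamma)^{*}}\diam{\nnf{\chi}?}\phi}$, the semantic clauses for composition, iteration and test yield $A = \lfp\, Y.\, \mng{\nnf{\chi}} \cap (\mng{\phi} \cup \wh{E}_\gamma(Y))$, and in particular $A = \mng{\nnf{\chi}} \cap (\mng{\phi} \cup \wh{E}_\gamma(A))$. Similarly $\mng{\diam{\gamma^{*}}\phi}$ is the least fixpoint of $Y \mapsto \mng{\phi} \cup \wh{E}_\gamma(Y)$, which I approximate as $Y_0 \isdef \emptyset$, $Y_{\alpha+1} \isdef \mng{\phi} \cup \wh{E}_\gamma(Y_\alpha)$ and $Y_\lambda \isdef \bigcup_{\alpha<\lambda} Y_\alpha$; for $s \in \mng{\diam{\gamma^{*}}\phi}$ I write $\rho(s)$ for the least ordinal $\alpha$ with $s \in Y_\alpha$, which is necessarily a successor.

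Next, I pick $s^* \in \mng{\chi} \cap \mng{\diam{\gamma^{*}}\phi}$ of minimal rank, say $\rho(s^*) = \beta + 1$. If $s^* \in \mng{\phi}$ then $s^*$ witnesses satisfiability of $\chi \land \phi$; so assume $s^* \in \wh{E}_\gamma(Y_\beta)$. The crucial point is that minimality forces $Y_\beta \subseteq \mng{\nnf{\chi}}$: any $t \in Y_\beta \cap \mng{\chi}$ would be a $\chi$-state in $\mng{\diam{\gamma^{*}}\phi}$ with $\rho(t) \leq \beta < \beta+1$, contradicting minimality of $\rho(s^*)$. Since the $Y_\alpha$ are increasing this also gives $Y_\alpha \subseteq \mng{\nnf{\chi}}$ for every $\alpha \leq \beta$.

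A secondary transfinite induction on $\alpha \leq \beta$ then shows $Y_\alpha \subseteq A$: at a successor step, any $t \in Y_{\alpha+1}$ lies in $\mng{\nnf{\chi}}$ and is either in $\mng{\phi}$ or in $\wh{E}_\gamma(Y_\alpha) \subseteq \wh{E}_\gamma(A)$ by the induction hypothesis and monotonicity of $\wh{E}_\gamma$, so in either case $t \in \mng{\nnf{\chi}} \cap (\mng{\phi} \cup \wh{E}_\gamma(A)) = A$. With $\alpha = \beta$ this yields $Y_\beta \subseteq A$, whence $s^* \in \wh{E}_\gamma(Y_\beta) \subseteq \wh{E}_\gamma(A)$, and combined with $s^* \in \mng{\chi}$ this exhibits satisfiability of $\chi \land \diam{\gamma}\diam{(\nnf{\chi}?\comp\gamma)^{*}}\diam{\nnf{\chi}?}\phi$. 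The main subtlety I expect is recognising that the more direct statement $Y_\alpha \cap \mng{\nnf{\chi}} \subseteq A$ does not admit a simple induction — $\mng{\chi}$-states in $Y_\alpha$ obstruct the monotonicity step — and that first descending to a minimal-rank $s^*$ strips such states from the relevant prefix of the approximation chain, unblocking the induction.
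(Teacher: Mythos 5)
Your argument is correct. The identity $A=\lfp\,Y.\,\mng{\nnf{\chi}}\cap(\mng{\phi}\cup\wh{E}_{\gamma}(Y))$ follows from the semantic clauses for test, composition and iteration together with distributivity; the rank of any state in the least fixpoint is indeed a successor; minimality of $\rho(s^*)$ does force $Y_\beta\subseteq\mng{\nnf{\chi}}$ (via $\mng{\nnf{\chi}}=S\setminus\mng{\chi}$); and your secondary induction giving $Y_\alpha\subseteq A$ for $\alpha\leq\beta$ goes through, the base and limit cases you leave implicit being trivial. It does, however, take a genuinely different route from the paper, which in fact never proves Proposition~\ref{prop:strind} directly: the proposition is stated only as a semantic gloss explaining why $\RuIndS$ is sound, and the formal soundness of $\GLms$ is secured syntactically, by showing $\RuIndS$ admissible in the Hilbert system $\HilbFull$ (Lemma~\ref{l:adminds} in the appendix, a chain of Bar-Induction and monotonicity steps around the formula $\diam{(\nnf{\Gamma}!\comp\gamma)^{\cross}}\phi$) and then transferring soundness from $\HilbP$ through the transformation theorems. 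Your proof buys a direct, self-contained model-theoretic justification — ordinal approximants of the least fixpoint plus the minimal-rank choice that strips $\mng{\chi}$-states from the relevant prefix of the approximation chain, which is essentially the semantic counterpart of the fixpoint manipulations in Lemma~\ref{l:adminds} read in contrapositive — whereas the paper's route avoids reasoning about approximants altogether and keeps everything inside its pipeline of proof transformations, at the price of leaving the semantic content of the rule only informally explained.
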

In words, this Proposition states the following.
Suppose that there is a situation where $\chi$ holds and where Angel has a 
strategy in the game $\gamma^{*}$ ensuring the outcome $\phi$.
Suppose furthermore that $\chi$ and $\phi$ cannot be true simultaneously.
Then there is a situation where $\chi$ holds, and where Angel has a strategy 
in $\gamma^{*}$ which not only ensures that $\phi$ holds afterwards, but 
also guarantees that while playing $\gamma^{*}$, after each round of playing 
$\gamma$, the formula $\nnf{\chi}$ holds.

The completeness of $\GLms$ will follow from the completeness
of the system $\CloG$, which we introduce in the next section.
The proof of Theorem~\ref{thm:compGLms}
will be outlined in Section~\ref{sec:s-c}.

\begin{theorem}[Soundness and Completeness of \GLms]
\label{thm:compGLms}
For all $\xi \in\formNorm$, we have:
$
\GLms \vdash \xi \text{ iff } \models \xi.
$
\end{theorem}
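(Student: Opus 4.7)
The plan is to obtain both directions of the biconditional by leveraging the chain of proof transformations announced in the introduction, rather than by a direct argument internal to $\GLms$.

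For soundness, the cleanest route is to appeal to the soundness of Parikh's Hilbert-style system $\HilbP$ (which is immediate from the validity of its axioms and rules) together with the transformation of $\GLms$-derivations into $\HilbP$-derivations (Theorem~\ref{thm:GLms-to-HilbP}), composed with the truth-preserving translation $\trNormPar{(-)}\colon \langNorm\to\langPar$ from Proposition~\ref{p:trl}. One could alternatively verify semantic soundness rule-by-rule: the rules in Figure~\ref{fig:pfGL1} are the standard propositional and monotone-modal ones; the rules in Figure~\ref{fig:game-op-rules} directly mirror the semantic clauses (and the fixpoint unfolding laws) for each game constructor; the deep rules in Figure~\ref{fig:deeprules} preserve validity because every context $\psi(-)$ is monotone in its hole, and the replacements $\chi!\comp\gamma\leadsto\gamma$, $\diam{\chi!}\phi\leadsto\phi$, $\diam{\gamma\comp\delta}\phi\leadsto\diam{\gamma}\diam{\delta}\phi$ are all validity-reflecting; finally $\RuIndS$ is precisely the contrapositive of Proposition~\ref{prop:strind}.

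For completeness, the plan is to chase a valid $\xi\in\formNorm$ through the translations as follows. First, apply the validity-preserving translation $\tr{(-)}$ from game logic into the monotone $\mu$-calculus to obtain a valid monotone $\mu$-formula $\tr{\xi}$. Next, apply the validity-preserving Kracht–Wolter-style encoding $(-)^t$ of monotone modal logic in a bimodal normal system (fixpoint-extended) to obtain $(\tr{\xi})^t$, valid in the normal modal $\mu$-calculus. By the completeness of $\Clo$ from \cite{AfshariLeigh:LICS17}, there is a $\Clo$-derivation of $(\tr{\xi})^t$. Now traverse the proof transformations backwards: $\Clo\leftprooftransformarrow\CloM$ (Section~\ref{sec:mon-mu}), then $\CloM\leftprooftransformarrow\CloG$ (Section~\ref{sec:translation}), and finally $\CloG\leftprooftransformarrow\GLms$ (Theorem~\ref{thm:CloG-to-KozenG}). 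Composing yields a $\GLms$-derivation of $\xi$.

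The main obstacle is the $\CloM\leftprooftransformarrow\CloG$ step, which depends critically on the translation $\tr{(-)}$ behaving well with respect to fixpoint unfoldings and, above all, reflecting the priority order $\leqAL$ on $\mu$-variables of $\tr{\xi}$ into the order $\leqfp$ on the set $F(\xi)$ of game-logic fixpoint formulas. This is the reason the order $\leqfp$ of Definition~\ref{def:F-order} is defined via the \emph{converse subterm} relation on the game terms labelling the modalities, rather than via subformulas; the naive choice would not match the priority assignment produced by $\tr{(-)}$. A secondary obstacle, handled in the $\CloG\leftprooftransformarrow\GLms$ step, is showing that the closure rule of $\CloG$ can be simulated by $\GLms$: this is where the role of $\RuIndS$ becomes visible, since eliminating the name annotations forces the current context $\Phi$ to be threaded through each greatest-fixpoint unfolding via the ``deep demonic tests'' $\startr{}{(-)}$, which is precisely the pattern produced by $\RuIndS$.
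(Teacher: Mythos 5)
Your proposal is correct and follows essentially the same route as the paper: soundness exactly as in Section~\ref{sec:s-c} via Theorem~\ref{thm:GLms-to-HilbP}(2), the soundness of $\HilbP$, and Proposition~\ref{p:trl}, and completeness via the chain $\models \xi \Rightarrow \models \tr{\xi} \Rightarrow \CloM \vdash \tr{\xi} \Rightarrow \CloG \vdash \xi \Rightarrow \GLms \vdash \xi$. Your explicit detour through $(-)^t$ and $\Clo$ is precisely what the paper packages inside the proof of Theorem~\ref{thm:CloM-complete}, so the two arguments coincide.
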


The following theorem states the transformation results
between $\GLms$ and $\HilbP$ that are needed for transferring
soundness from $\HilbP$ to $\GLms$,
and completeness from $\GLms$ to $\HilbP$.

\begin{theorem}
  \label{thm:GLms-to-HilbP}
  We have:
  \begin{enumerate}
    \item 
      For all $\phi \in \langPar$,
      if $\GLms \vdash \dnnf{\phi}$ then $\HilbP \vdash \phi$.  
    \item For all $\xi \in \langNorm$,
      if $\GLms \vdash \xi$ then $\HilbP \vdash \pari{\xi}$.  
  \end{enumerate}
\end{theorem}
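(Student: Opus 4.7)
The plan is to route the argument through the intermediate Hilbert system $\HilbFull$ mentioned in the introduction, which extends $\HilbP$ to the full language $\langFull$ with primitives for the demonic operations and with freely placed negations and duals. Two preliminary observations do the bookkeeping work: (i) $\HilbFull \vdash \phi$ iff $\HilbP \vdash \pari{\phi}$, since each defining axiom of a demonic constructor in $\HilbFull$ translates under $\pari{-}$ to a $\HilbP$-provable instance of the dual-game axiom composed with the angelic axioms, while conversely every $\HilbP$-axiom lifts to $\HilbFull$ along the inclusion $\langPar \hookrightarrow \langFull$; and (ii) $\HilbFull \vdash \phi \lra \pari{\phi}$ and $\HilbFull \vdash \phi \lra \dnnf{\phi}$, both justified by the definitional axioms together with propositional reasoning. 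Granted part~2, part~1 follows: from $\GLms \vdash \dnnf{\phi}$ we get $\HilbP \vdash \pari{\dnnf{\phi}}$; but $\HilbFull \vdash \phi \lra \pari{\dnnf{\phi}}$ by~(ii), and applying~(i) to this equivalence pushes the derivation back down to $\HilbP \vdash \phi$.

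For part~2, we proceed by induction on the $\GLms$-derivation of $\xi$, interpreting a sequent $\Phi = \{\phi_1,\dots,\phi_n\}$ as the disjunction $\bigvee\Phi$. It suffices to establish, for each $\GLms$-rule with premises $\Phi_1,\dots,\Phi_k$ and conclusion $\Phi_0$, that the implication $(\bigvee\Phi_1 \land \dots \land \bigvee\Phi_k) \to \bigvee\Phi_0$ is $\HilbFull$-derivable; combined with the induction hypothesis and modus ponens (which plays the role of cut in $\HilbP$), this yields the step. Most cases are routine: $\Ax$ is a propositional tautology; $\RuWeak$, $\RuOr$, $\RuAnd$ use propositional reasoning; $\RuMonMod$ combines the dual-game axiom with $\RuMon$; the rules $\RuStar$, $\RuCross$, $\RuChoiceA$, $\RuChoiceD$, $\RuQuestion$, $\RuBang$, $\RuCompD$ mirror corresponding $\HilbFull$-axioms; and the deep rules $\RuMonDgame$, $\RuMonDfmla$ are handled by iterating $\RuMon$ through the context $\psi(\cdot)$ after proving a formula-substitution lemma by induction on the context.

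The only genuinely substantive case is the strengthened induction rule $\RuIndS$. Set $\chi \isdef \nnf{\Phi}$ and $\psi \isdef \diam{(\chi!\comp\gamma)^\cross}\diam{\chi!}\phi$, so that the induction hypothesis reads $\HilbFull \vdash \chi \to \phi \land \diam{\gamma}\psi$ and the goal is $\HilbFull \vdash \chi \to \diam{\gamma^\cross}\phi$. Unfolding the greatest fixpoint $\psi$ once and using $\diam{\chi!}\eta \lra \chi \lor \eta$, a short propositional calculation gives $\HilbFull \vdash \psi \lra \chi \lor (\phi \land \diam{\gamma}\psi)$. Combining this with the induction hypothesis yields first $\chi \to \psi$, and then $\psi \to \phi \land \diam{\gamma}\psi$, whence in particular $\psi \to \diam{\gamma}\psi$. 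Since $\RuBarInd$ dualises, via the dual-game axiom, to the coinduction principle \emph{``$X \to \diam{\gamma}X$ implies $X \to \diam{\gamma^\cross}X$''}, we obtain $\psi \to \diam{\gamma^\cross}\psi$; applying $\RuMon$ together with $\psi \to \phi$ gives $\psi \to \diam{\gamma^\cross}\phi$; and composing with $\chi \to \psi$ delivers $\chi \to \diam{\gamma^\cross}\phi$, as required.

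The main obstacle is the $\RuIndS$ case, where care is needed to verify that all the $\nu$-style manipulations---the fixpoint unfolding of $\diam{(\chi!\comp\gamma)^\cross}\diam{\chi!}\phi$, and the derivation of coinduction from $\RuBarInd$ plus the dual-game axiom---really do go through in $\HilbFull$ (and hence, via (i), in $\HilbP$). A secondary, more bookkeeping-style obstacle is the deep rules, which require a substitution lemma stating that if $\HilbFull \vdash \phi \to \phi'$ then $\HilbFull \vdash \psi(\phi) \to \psi(\phi')$ for any one-hole context $\psi$; this is routine by induction on $\psi$ using $\RuMon$ at the modal steps.
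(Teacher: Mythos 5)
Your overall route is the same as the paper's: go through the intermediate system $\HilbFull$, read sequents disjunctively, and show every $\GLms$-rule admissible in $\HilbFull$, then come back down to $\HilbP$ via the $\pari{-}$-translation and the provable equivalence of a formula with its normal form. Your treatment of the genuinely hard case, $\RuIndS$, is correct and in fact slightly cleaner than the paper's: you work directly with $\psi = \diam{(\chi!\comp\gamma)^\cross}\diam{\chi!}\phi$, prove $\psi \lra \chi \lor (\phi \land \diam{\gamma}\psi)$ from the unfolding, composition and $!$ axioms, and then use coinduction for $\cross$ plus $\RuMon$; the paper instead runs a four-claim argument centred on $\diam{(\chi!\comp\gamma)^\cross}\phi$, but the content is the same. (Two harmless imprecisions: the paper's $\HilbFull$ has the coinduction rule $\RuBdi$ as a primitive rather than deriving it from $\RuBarInd$ and the dual axiom, and the paper only needs the forward direction of your biconditional (i).)

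The genuine gap is in the deep rules, which you dismiss as ``routine by induction on the context using $\RuMon$ at the modal steps.'' First, $\RuMonDgame$ rewrites a \emph{game} occurrence ($\gamma \mapsto \chi!\comp\gamma$) deep inside the formula, so a formula-substitution lemma does not even apply to it; you need a statement about game-position contexts. Second, and more importantly, the induction you describe breaks exactly where the paper has to work: when the rewritten occurrence (whether a game or a formula inside a test) sits under an iteration $(\cdot)^*$ or $(\cdot)^\cross$. The rule $\RuMon$ only gives monotonicity in the formula argument of a single modality; it does not let you pass an implication through the game argument of $\diam{\delta^*}$ or $\diam{\delta^\cross}$ when $\delta$ itself has changed. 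To prove, e.g., $\diam{\gamma(\delta)^*}\phi \to \diam{\gamma(\chi!\comp\delta)^*}\phi$ and its $\cross$-analogue, one has to invoke $\RuBarInd$ (for $*$) and the coinduction rule $\RuBdi$ (for $\cross$), which is precisely what the paper's admissibility lemma for $\RuMonDgame$ does via a mutual induction on formulas and game terms; the $*$ and $\cross$ cases there are the only substantive ones and occupy most of the argument. The same issue infects $\RuMonDfmla$ and $\RuCompD$ whenever the hole lies inside a test nested under an iteration. So your proof plan is the right one, but the ``substitution lemma by $\RuMon$'' step would fail as stated and must be replaced by the bar-induction/coinduction argument.
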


\section{An annotated proof system}
\label{sec:annot-sys}



The completeness of \GLms will follow from the completeness of the annotated tableau system $\CloG$ which we introduce now.

\subsection{The $\CloG$ system for Game Logic}
\label{subsec:clog}

In $\CloG$, formulas are annotated with names that are used to
detect repeated unfoldings of greatest fixpoint formulas in the same context.
With each greatest fixpoint formula $\phi \in F^\cross$
we associate a countable set $\names[\phi]$ of \emph{names for $\phi$}.
We assume that $\names[\phi] \cap \names[\psi] =\emptyset$ if $\phi\neq\psi$.
The set of all names is $\names = \bigcup_{\phi\in {F^\cross}}\names[\phi]$.
Names will typically be denoted by $\nam{x}, \nam{y}, \ldots$ or with subscripts $\nam{x}_0, \nam{x}_1, \ldots$.
Names inherit the order $\leqfp$ on the set $F$ of fixpoint formulas:
For all $\nam{x} \in \names[\phi]$, $\nam{y} \in \names[\psi]$,
we define $\nam{x} \leqfp \nam{y}$ iff $\phi \leqfp \psi$. 
For a sequence of names $\nam{a} = \nam{x}_0,\nam{x}_1,\ldots, \nam{x}_{n-1} \in
N^*$ and a fixpoint formula $\phi \in F$, we will write  $\nam{a} \leqfp
\phi$ if for all $\nam{x}_i$ occurring in $\nam{a}$, $\nam{x}_i \in
\names[\psi]$ such that $\psi \leqfp \phi$. The empty sequence is
denoted by $\nam{\eps}$.
An \emph{annotation} is a sequence $\nam{a} = \nam{x}_0,\nam{x}_1,\ldots, \nam{x}_{n-1} \in N^*$ that is
non-repeating and monotone w.r.t. $\leqfp$, i.e., for all $i < n-1$, $\nam{x}_i \leqfp \nam{x}_{i+1}$.
An \emph{annotated game logic formula $\phi^\nam{a}$} consists of a
formula $\phi \in \langNorm$ and an annotation $\nam{a} \in N^*$.
\emph{Annotated $\CloG$-sequents} are finite sets of annotated game
logic formulas, and will be denoted by $\Phi, \Psi,$ etc.

The system $\CloG[\xi]$ derives $\CloG$-sequents using the axiom and rules in
Figure~\ref{fig:CloG}.
The \emph{closure rule} $\RuClo_{\nam{x}}$ discharges \emph{all} occurrences of
the sequent $\Phi, \diam{\gamma^\cross} \varphi^{\nam{{ax}}}$ appearing as an 
assumption above the proof node where the rule is applied.
The side conditions ensure that no fixpoint formula of
higher priority than $\diam{\gamma^\cross}\phi$ is unfolded between
the application of 
$\RuClo_{\nam{x}}$ and its discharged assumption.

A $\CloG$-proof is a finite tree of $\CloG$-inferences in which each leaf is
labelled by an axiom or a discharged assumption.
Intuitively, a $\CloG$-proof can be understood as a finitary representation of a non-wellfounded/circular proof. The discharged assumptions are the nodes where the circularity is detected.
For a formula $\xi\in \langNorm$, we write $\CloG \vdash \xi$ to mean that
there is a $\CloG$-proof of $\xi^{\eps}$.
Note that $\CloG$ is \emph{analytic} in the sense that 
any $\CloG$-proof of $\xi^\eps$ will contain only formulas from $\Cl{\xi}$, and names for fixpoint formulas in $F^\cross(\xi)$.

\rem{For the sake of readability, we will occasionally be sloppy about the distinction between plain and annotated formulas, for instance, by omitting explicit annotations, \margincomment{HH: Do we need this?}
or by saying that $\varphi^\nam{a}$ belongs to the closure of
$\psi^\nam{b}$ to mean that $\varphi$ belongs to the closure of $\psi$.
}

\begin{figure}
\fbox{
\begin{minipage}[t]{.47\textwidth}
\begin{minipage}[t]{.49\textwidth}
\begin{prooftree}
 \AxiomC{\phantom{X}}
 \RightLabel{\AxOne}
 \UnaryInfC{$p^{\nam{\eps}}, (\neg p)^{\nam{\eps}}$}
\end{prooftree}
\begin{prooftree}
 \AxiomC{$\Phi, \varphi^{\nam{a}}, \psi^{\nam{a}}$}
 \RightLabel{\RuOr}
 \UnaryInfC{$\Phi, (\varphi \lor \psi)^{\nam{a}}$}
\end{prooftree}
\end{minipage}
\begin{minipage}[t]{.49\textwidth}
\begin{prooftree}
 \AxiomC{$\varphi^{\nam{a}}, \psi^{\nam{b}}$}
 \RightLabel{\RuMonMod}
 \UnaryInfC{$(\diam{g} \varphi)^{\nam{a}}, (\diam{g^d} \psi)^{\nam{b}}$}
\end{prooftree}
\begin{prooftree}
 \AxiomC{$\Phi, \varphi^{\nam{a}}$}
 \AxiomC{$\Phi, \psi^{\nam{a}}$}
 \RightLabel{\RuAnd}
 \BinaryInfC{$\Phi, (\varphi \land \psi)^{\nam{a}}$}
\end{prooftree}
\end{minipage}

\medskip
\begin{minipage}[t]{.49\textwidth}
\begin{prooftree}
 \AxiomC{$\Phi, (\diam{\gamma} \phi \lor \diam{\delta} \phi)^{\nam{a}}$}
 \RightLabel{\RuChoiceA}
 \UnaryInfC{$\Phi, (\diam{\gamma \gcha \delta} \phi)^{\nam{a}}$}
\end{prooftree}
\end{minipage}
\begin{minipage}[t]{.48\textwidth}
\begin{prooftree}
 \AxiomC{$\Phi, (\diam{\gamma} \phi \land \diam{\delta} \phi)^{\nam{a}}$}
 \RightLabel{\RuChoiceD}
 \UnaryInfC{$\Phi, (\diam{\gamma \gchd \delta} \phi)^{\nam{a}}$}
\end{prooftree} 
\end{minipage}

\medskip
\begin{minipage}[b]{.32\textwidth}
\begin{prooftree}
 \AxiomC{$\Phi$}
 \RightLabel{\RuWeak}
 \UnaryInfC{$\Phi,\varphi^{\nam{a}}$}
\end{prooftree}
\end{minipage}
\begin{minipage}[b]{.32\textwidth}
\begin{prooftree}
 \AxiomC{$\Phi, \varphi^{\nam{{ab}}},$}
 \RightLabel{\RuExp}
 \UnaryInfC{$\Phi, \varphi^{\nam{{axb}}}$}
\end{prooftree}
\end{minipage}
\begin{minipage}[b]{.33\textwidth}
\begin{prooftree}
 \AxiomC{$\Phi, (\diam{\gamma} \diam{\delta} \phi)^{\nam{a}}$}
 \RightLabel{\RuComp}
 \UnaryInfC{$\Phi, (\diam{\gamma \comp \delta} \phi)^{\nam{a}}$}
\end{prooftree}
\end{minipage}

\medskip
\begin{minipage}[t]{.66\textwidth}
\begin{prooftree}
 \AxiomC{$\Phi, (\phi  \lor \diam{\gamma} \diam{\gamma^*} \phi)^{\nam{a}}$}
 \LeftLabel{$(\nam{a} \leqfp[\xi] \diam{\gamma^*} \phi)$}
 \RightLabel{\RuStar}
 \UnaryInfC{$\Phi, (\diam{\gamma^*}\phi)^{\nam{a}}$}
\end{prooftree}

\begin{prooftree}
 \AxiomC{$\Phi, (\phi  \land \diam{\gamma} \diam{\gamma^\cross} \phi)^{\nam{a}}$}
 \LeftLabel{$(\nam{a} \leqfp[\xi] \diam{\gamma^\cross} \phi)$}
 \RightLabel{\RuCross}
 \UnaryInfC{$\Phi, (\diam{\gamma^\cross} \phi)^{\nam{a}}$}
\end{prooftree}
\end{minipage}
\begin{minipage}[t]{.3\textwidth}
\begin{prooftree}
 \AxiomC{$\Phi,(\psi  \land \phi)^{\nam{a}}$}
 \RightLabel{\RuTestA}
 \UnaryInfC{$\Phi, (\diam{\psi ?} \phi)^{\nam{a}}$}
\end{prooftree}

\begin{prooftree}
 \AxiomC{$\Phi,(\psi  \lor \phi)^{\nam{a}}$}
 \RightLabel{\RuTestD}
 \UnaryInfC{$\Phi, (\diam{\psi !} \phi)^{\nam{a}}$}
\end{prooftree}
\end{minipage}

\medskip
\begin{minipage}[b]{.99\textwidth}
\begin{prooftree}
 \AxiomC{$[\Phi, \diam{\gamma^\cross} \varphi^{\nam{{ax}}}]^{\nam{x}}$}
 \noLine
 \UnaryInfC{$\vdots$}
 \noLine
 \UnaryInfC{$\Phi, (\varphi \land \diam{\gamma} \diam{\gamma^\cross}
\varphi)^{\nam{{ax}}}$}
 \LeftLabel{($\nam{a} \leqfp[\xi] \nam{x} \in N_{\diam{\gamma^\cross} \varphi}, \nam{x} \notin \Phi,\nam{a}$)}
 \RightLabel{$\RuClo_\nam{x}$}
 \UnaryInfC{$\Phi, (\diam{\gamma^\cross} \varphi)^{\nam{a}}$}
\end{prooftree}
\end{minipage}

\end{minipage}
}
\caption{The axiom and rules of the system $\CloG$. In the side condition of $\RuClo_x$, ``$\nam{x} \notin \Phi,\nam{a}$'' means that $\nam{x}$ does not 
occur in $\Phi$ or $\nam{a}$.
}
\label{fig:CloG}
\end{figure}

Completeness of $\CloG$ will follow from the completeness of the system $\CloM$, which we introduce in Section~\ref{sec:CloM}.
We prove Theorem~\ref{thm:sound-complete-CloG}
in Section~\ref{sec:s-c}.

\begin{theorem}[Soundness and Completeness of \CloG]
\label{thm:sound-complete-CloG}
For all $\xi \in\formNorm$, we have
$\CloG \vdash \xi \text{ iff } \models \xi$.
\end{theorem}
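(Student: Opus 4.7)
The plan is to prove both directions by chaining together the transformation results announced in the paper's overview and later sections. At this point in the development, the statement of Theorem~\ref{thm:sound-complete-CloG} itself does not require new constructions: it is a corollary of the commutative-square of translations between $\CloG$, $\GLms$, $\HilbP$, $\CloM$ and $\Clo$, together with soundness/completeness facts about the endpoints $\HilbP$ and $\Clo$.

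For the soundness direction $\CloG \vdash \xi \Rightarrow \models \xi$, I would argue as follows. Starting from a $\CloG$-proof of $\xi^{\eps}$, Theorem~\ref{thm:CloG-to-KozenG} (transformation of $\CloG$-derivations into $\GLms$-derivations) yields $\GLms \vdash \xi$. Theorem~\ref{thm:GLms-to-HilbP}(2) then gives $\HilbP \vdash \pari{\xi}$. Parikh's system $\HilbP$ is easily seen to be sound over monotone neighbourhood models (each axiom is valid and each rule preserves validity, as noted just after Figure~\ref{fig:pfParikhRu}), so $\models \pari{\xi}$. Since the translation $\pari{-}\colon \langFull \to \langPar$ is truth-preserving (Proposition~\ref{p:trl}) and $\nf{\pari{\xi}}$ is semantically equivalent to $\xi$, we conclude $\models \xi$.

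For the completeness direction $\models \xi \Rightarrow \CloG \vdash \xi$, I would run the corresponding chain backwards through the $\mu$-calculus side of the diagram. Given a valid $\xi \in \langNorm$, first apply the validity-preserving translation $\tr{(-)}$ from game logic to the monotone $\mu$-calculus (Section~\ref{sec:translation}) to obtain a valid monotone $\mu$-calculus formula $\tr{\xi}$. Next, apply the standard simulation $(-)^t$ from monotone to normal modal logic (Section~\ref{sec:mon-mu}) to get a valid normal $\mu$-calculus formula $(\tr{\xi})^t$. By completeness of $\Clo$ established in~\cite{AfshariLeigh:LICS17}, there is a $\Clo$-proof of $(\tr{\xi})^t$. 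The transformation from $\Clo$ to $\CloM$ sketched in Section~\ref{sec:mon-mu} (step 4) converts this into a $\CloM$-proof of $\tr{\xi}$. Finally, Theorem~\ref{mu to gl} of Section~\ref{sec:translation} converts the $\CloM$-proof of $\tr{\xi}$ into a $\CloG$-proof of $\xi$.

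The conceptually hard steps of the chain are the two bridges to and from the monotone $\mu$-calculus: the translation $\tr{(-)}$ must commute with fixpoint unfolding and reflect the order $\leqfp$ on fixpoint formulas into the priority order on fixpoint variables, so that the side conditions of $\RuClo_{\nam{x}}$ in $\CloG$ faithfully mirror those of the analogous closure rule in $\CloM$; and Theorem~\ref{mu to gl} uses this to translate $\CloM$-proofs rule-by-rule. Once these technical ingredients are in place, however, the proof of Theorem~\ref{thm:sound-complete-CloG} itself is just the diagram chase above; no additional argument local to $\CloG$ is needed.
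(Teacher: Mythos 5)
Your proposal is correct and follows essentially the same route as the paper: soundness via $\CloG \to \GLms \to \HilbP$ (Theorems~\ref{thm:CloG-to-KozenG} and~\ref{thm:GLms-to-HilbP}(2)) plus soundness of $\HilbP$ and truth-preservation of $\pari{-}$, and completeness via the chain $\models \xi \Rightarrow \models \tr{\xi} \Rightarrow \CloM \vdash \tr{\xi} \Rightarrow \CloG \vdash \xi$. The only cosmetic difference is that you inline the proof of Theorem~\ref{thm:CloM-complete} (the $(-)^t$ translation, completeness of $\Clo$, and the $\Clo$-to-$\CloM$ transformation) where the paper simply cites that theorem as a packaged result.
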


\subsection{Removing annotations with the bullet translation}

In order to translate $\CloG$-proofs into $\GLms$-proofs,
we must remove annotations. 
First, we introduce some notation.
We let
\[
\dd{\gamma}{\phi} \isdef 
   \left\{\begin{array}{ll}
      \dd{\gamma_{1}}{\dd{\gamma_{2}}{\phi}} 
      & \text{ if } \gamma = \gamma_{1}\comp\gamma_{2}
   \\ \diam{\gamma}\phi & \text{ otherwise }
   \end{array}\right.
\]
That is, if $\gamma = \gamma_{1}\comp \cdots \comp \gamma_{k}$, and none of the
game terms $\gamma_{i}$ is itself a composition, then
$\dd{\gamma}{\phi} = \diam{\gamma_{1}}\cdots \diam{\gamma_{k}}\phi$.

Given a sequence $\vec{\phi} = \phi_1,\ldots,\phi_{n}$ of formulas 
and a game term $\gamma$, we define
\[\begin{array}{lll}
   \ula{\phi}          & \isdef 
   & \phi_{n}{!} \comp ( \dots ( \phi_{1}{!} ) \dots)
\\ \ulag{\phi}{\gamma} & \isdef 
   & \phi_{n}{!} \comp ( \dots ( \phi_{1}{!} \comp \gamma ) \dots).
\end{array}\]


We can now define the translation $\startr{}{(-)}$ which removes annotations. Intuitively, what this translation does is to weaken fixpoint formulas by adding dual tests corresponding to formulas associated with names in the annotation of a fixpoint formula. This will be used to ``remember'' contexts in which greatest fixpoint formulas have been unfolded. The translation needs to be set up carefully, so that it can be used to transform $\CloG$-proofs to $\GLms$-proofs. In particular, it is tailored to fit with the strengthened induction rule in $\GLms$.

\begin{definition}\label{def:bullet-tr}
Assume that we have an assignment 
$\{\chi_{\nam{x}} \mid \nam{x} \in N\}$ of
a game logic formula $\chi_{\nam{x}} \in \formNorm$ to each name $\nam{x}$.
We define the \emph{bullet translation} $\startr{}{(-)}$ from annotated game logic formulas to $\langNorm$ by
$\startr{\eps}{\phi} = \phi$, and for non-empty annotations $\nam{a}$ as follows:
\[
\begin{aligned}
   \startr{\nam{a}}{p} &\isdef   p, 
\\ \startr{\nam{a}}{(\neg p)} &\isdef   \neg p, \quad
\\ \phantom{x}
   \startr{\nam{a}}{(\varphi \lor \psi)} & \isdef   
     \startr{\nam{a}}{\varphi} \lor \startr{\nam{a}}{\psi}, 
\\ \startr{\nam{a}}{(\varphi \land \psi)} & \isdef  
     \startr{\nam{a}}{\varphi} \land \startr{\nam{a}}{\psi}, 
\\ \startr{\nam{a}}{(\diam{\gamma} \varphi)} & \isdef  
     \dd{\startrg{\nam{a}}{\gamma}{\varphi}}{\startr{\nam{a}}{\varphi}},\quad
\end{aligned}
\begin{aligned}
   \startrg{\nam{a}}{g}{\varphi}   &\isdef   g,
\\ \startrg{\nam{a}}{g^d}{\varphi} &\isdef   g^d,
\\ \startrg{\nam{a}}{\psi?}{\varphi} &\isdef   (\startr{\nam{a}}{\psi})?, 
\\ \startrg{\nam{a}}{\psi!}{\varphi} &\isdef   (\startr{\nam{a}}{\psi})!,
\\ \startrg{\nam{a}}{\gamma^*}{\varphi} &\isdef   \gamma^*, 
\\
\end{aligned}
\]
\[
\begin{aligned}
 \startrg{\nam{a}}{\gamma \comp \delta}{\varphi} &{}\isdef  
     \startrg{\nam{a}}{\gamma}{\diam{\delta} \varphi} \comp
     \startrg{\nam{a}}{\delta}{\varphi}, 
\\ \startrg{\nam{a}}{\gamma \gcha \delta}{\varphi} &{}\isdef  
 \startrg{\nam{a}}{\gamma}{\varphi} \gcha \startrg{\nam{a}}{\delta}{\varphi}, 
\\ \startrg{\nam{a}}{\gamma \gchd \delta}{\varphi} &{}\isdef  
   \startrg{\nam{a}}{\gamma}{\varphi} \gchd \startrg{\nam{a}}{\delta}{\varphi}. 
\\
\end{aligned}
\]
The crucial clause of the translation is the case for the demonic iteration. 
If $\nam{a} = \nam{b} \nam{x}_1 \cdots \nam{x}_n \nam{c}$,
where $\nam{x}_1,\dots,\nam{x}_n$ are all the names for
$\diam{\gamma^\cross} \varphi$ in $\nam{a}$, then we
define
\[
\startrg{\nam{b} \nam{x}_1 \cdots \nam{x}_n \nam{c}}{\gamma^\cross}{\varphi} 
  \isdef (\ulag{\chi}{\gamma})^\cross;\ula{\chi}
\]
where $\vec{\chi} \isdef \chi_{\nam{x}_{1}},\ldots,\chi_{\nam{x}_{n}}$.
Note that as a special case we have
$\startrg{\nam{a}}{\gamma^\cross}{\varphi} = \gamma^\cross$ if there are
no names for $\diam{\gamma^\cross} \varphi$ in $\nam{a}$.
\end{definition}

The bullet translation only affects the outermost
fixpoint operators of a game term. This does, however, not mean that
there is only ever one fixpoint affected in a formula. For instance
when following the trace of the formula $\diam{(g \comp
(h^\cross))^\cross} p$ in some $\CloG$-proof the
fixpoint might unravel such that we obtain the formula $\diam{h^\cross}
\diam{(g \comp (h^\cross))^\cross} p$. Applying the bullet
translation to this formula might affect the outermost fixpoints of
both modalities.

The following lemma shows how the bullet translation applies to annotated 
fixpoint formulas. 
It is needed in the proof of Theorem~\ref{thm:CloG-to-KozenG} below.

\def\stateLemmaBulletTrCross{
  Let $\nam{a} = \nam{b} \nam{x}_1 \dots \nam{x}_n$ where $\nam{x}_1,\dots,\nam{x}_n$ are all the names in $\nam{a}$ for $\diam{\gamma^\cross} \phi \in F^\cross$.
Then we have:
\begin{eqnarray} 
\label{bulltr of cross}
\startr{\nam{a}}{(\diam{\gamma^\cross} \phi)} &=&
\diam{(\ulag{\chi}{\gamma})^\cross}\dd{\ula{\chi}}\startr{\nam{b}}{\phi} \text{ and }\\
\label{bulltr of unfolded cross}
\startr{\nam{a}}{(\phi \land \diam{\gamma} \diam{\gamma^\cross} \phi)}
&=&
\startr{\nam{b}}{\phi} \land \dd{\gamma}
\diam{(\ulag{\chi}{\gamma})^\cross}
\dd{\ula{\chi}}\startr{\nam{b}}{\phi}\quad
\end{eqnarray}
}
\begin{lemma} 
\label{lem:bulltr-of-cross} \label{lem:startr-of-cross}
\stateLemmaBulletTrCross
\end{lemma}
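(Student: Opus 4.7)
My plan is to prove both equations by direct computation from the recursive definition of the bullet translation, aided by a single auxiliary observation. The observation---call it \emph{name-stripping}---states that for any fixpoint formula $\zeta$, any $\nam{y} \in \names[\zeta]$, and any formula $\varphi$ in which $\zeta$ does not occur as a subformula (and, analogously, any game $\delta$ in which $\zeta$ does not occur as a fixpoint subgame formula), one has $\startr{\nam{a}\nam{y}}{\varphi} = \startr{\nam{a}}{\varphi}$ and $\startrg{\nam{a}\nam{y}}{\delta}{\varphi} = \startrg{\nam{a}}{\delta}{\varphi}$. Its proof is a routine simultaneous induction on $\varphi$ and $\delta$, since the only clause of the bullet translation that examines which names in the annotation match a specific fixpoint formula is the demonic iteration clause. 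As $\phi$ is a strict subformula of $\diam{\gamma^\cross}\phi$, iterating this principle $n$ times yields $\startr{\nam{a}}{\phi} = \startr{\nam{b}}{\phi}$; this identity is used in both equations.

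For equation~(\ref{bulltr of cross}), I would unfold the modal clause of the bullet translation to get $\startr{\nam{a}}{(\diam{\gamma^\cross}\phi)} = \dd{\startrg{\nam{a}}{\gamma^\cross}{\phi}}{\startr{\nam{a}}{\phi}}$. Because, by hypothesis, $\nam{x}_1, \dots, \nam{x}_n$ are exactly the names in $\nam{a}$ for $\diam{\gamma^\cross}\phi$, the crucial clause for demonic iteration evaluates $\startrg{\nam{a}}{\gamma^\cross}{\phi}$ to $(\ulag{\chi}{\gamma})^\cross \comp \ula{\chi}$. Since $(\ulag{\chi}{\gamma})^\cross$ is not itself a composition, the recursive definition of $\dd{}{}$ on compositions rewrites this to $\diam{(\ulag{\chi}{\gamma})^\cross}\dd{\ula{\chi}}{\startr{\nam{a}}{\phi}}$, and a final application of the name-stripping observation replaces $\startr{\nam{a}}{\phi}$ with $\startr{\nam{b}}{\phi}$, completing the verification of the first equation.

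For equation~(\ref{bulltr of unfolded cross}), I would use that the bullet translation distributes over conjunction, so name-stripping reduces the left conjunct to $\startr{\nam{b}}{\phi}$. For the remaining conjunct, unfolding gives $\startr{\nam{a}}{(\diam{\gamma}\diam{\gamma^\cross}\phi)} = \dd{\startrg{\nam{a}}{\gamma}{\diam{\gamma^\cross}\phi}}{\startr{\nam{a}}{(\diam{\gamma^\cross}\phi)}}$, and inserting equation~(\ref{bulltr of cross}) produces the required inner formula $\diam{(\ulag{\chi}{\gamma})^\cross}\dd{\ula{\chi}}{\startr{\nam{b}}{\phi}}$ inside the outer $\dd{}{}$. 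The final step is to verify that $\dd{\startrg{\nam{a}}{\gamma}{\diam{\gamma^\cross}\phi}}{-}$ agrees with $\dd{\gamma}{-}$ on this argument, which I would handle by a secondary induction on the top-level composition structure of $\gamma$. This is the main technical obstacle: using the order $\lessfp$ one must show that the trailing names $\nam{x}_i$ can only match $\diam{\gamma^\cross}\phi$ itself, a formula that cannot occur as a strict fixpoint subgame of $\gamma$; combined with the name-stripping observation applied in the test positions of $\gamma$, this is what allows the outer $\dd{}{}$ to see $\gamma$ essentially undisturbed by the translation.
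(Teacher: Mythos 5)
Your overall computation follows the same skeleton as the paper's proof (unfold the modal and $\cross$ clauses of the bullet translation, distribute over $\land$, replace $\startr{\nam{a}}{\phi}$ by $\startr{\nam{b}}{\phi}$, and reduce $\startrg{\nam{a}}{\gamma}{\diam{\gamma^\cross}\phi}$ to $\gamma$), but the auxiliary ``name-stripping'' observation that carries both reduction steps is false as you state it, so there is a genuine gap. The condition you key it to, namely that $\zeta$ does not occur as a \emph{subformula} of $\varphi$, is not the relevant one: the fixpoint formulas that the demonic-iteration clause compares against the annotation are in general not subformulas of $\varphi$, because the continuation parameter changes under the composition clause. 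Concretely, let $\zeta = \diam{g^\cross}\diam{h}p$, $\varphi = \diam{g^\cross \comp h}p$ and $\nam{y} \in \names[\zeta]$. Then $\zeta$ is not a subformula of $\varphi$, yet
\[
\startr{\nam{y}}{\varphi} \;=\; \diam{(\chi_{\nam{y}}! \comp g)^\cross}\diam{\chi_{\nam{y}}!}\diam{h}p \;\neq\; \diam{g^\cross}\diam{h}p \;=\; \startr{\eps}{\varphi},
\]
since $\startrg{\nam{y}}{g^\cross}{\diam{h}p}$ matches $\nam{y}$ against exactly the fixpoint formula $\zeta$. For the same reason your ``routine simultaneous induction'' does not go through: in the clauses for $\diam{\gamma}\psi$ and for $\gamma \comp \delta$ the continuation becomes $\diam{\delta}\psi$, and the hypothesis ``$\zeta$ is not a subformula'' is not preserved.

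The two instances you actually need are true, but they require closure-level bookkeeping rather than subformula reasoning, and this is precisely what the paper's auxiliary lemmas supply. The paper isolates the sets $\Scross(\phi)$ and $\Scross(\gamma,\phi)$ of surface-level greatest fixpoints --- exactly the formulas against which names are matched --- proves that $\startr{\nam{a}}{\phi}$ depends only on $\restr{\nam{a}}{\Scross(\phi)}$, and then (i) rules out $\diam{\gamma^\cross}\phi \in \Scross(\phi)$ by counting occurrences of $\cross$ (a subformula argument cannot do this, since members of $\Scross(\phi)$ need not be subterms of $\phi$), which yields $\startr{\nam{a}}{\phi} = \startr{\nam{b}}{\phi}$; and (ii) proves $\startrg{\nam{a}}{\gamma}{\diam{\gamma^\cross}\phi} = \gamma$ from $\nam{a} \leqfp \diam{\gamma^\cross}\phi$, because any $\diam{\delta^\cross}\psi \in \Scross(\gamma,\diam{\gamma^\cross}\phi)$ has $\delta^\cross \subterm \gamma$, which is incompatible with $\gamma^\cross \subterm \delta^\cross$. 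Your remark that the trailing names can only match $\diam{\gamma^\cross}\phi$ itself is in the right spirit for (ii), but it does not account for the names in $\nam{b}$ (which can interfere inside $\gamma$ unless one invokes the order $\leqfp$ as above), and your treatment of test positions again leans on the false stripping observation. Unless that observation is replaced by an invariant of this closure-level kind, both reduction steps --- and hence both equations --- remain unjustified.
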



\subsection{Embedding $\CloG$ into $\GLms$}

We are now ready to show how $\CloG$-derivations
can be transformed to $\GLms$-derivations.
This will be used in Section~\ref{sec:s-c}
to transfer completeness from $\CloG$ to $\GLms$
and soundness from $\GLms$ to $\CloG$.

\begin{theorem}\label{thm:CloG-to-KozenG}
  For all $\xi \in \langNorm$,
  if $\CloG \vdash \xi$ then $\GLms \vdash \xi$.
\end{theorem}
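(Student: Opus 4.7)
The plan is to define a translation from $\CloG$-proofs to $\GLms$-proofs by means of the bullet translation $\startr{}{(-)}$, where the name assignment $\chi_\nam{x}$ is chosen based on the context at each $\RuClo_\nam{x}$ application. Concretely, I would traverse the given $\CloG$-proof $\pi$ of $\xi^\eps$ from root to leaves and, whenever I encounter an application of $\RuClo_\nam{x}$ with conclusion $\Phi, (\diam{\gamma^\cross}\varphi)^\nam{a}$, set
\[
\chi_\nam{x} \isdef \nnf{\bigvee \startr{\nam{a}}{\Phi}}.
\]
This is well-defined by induction along the derivation, since, by the side condition of $\RuClo_\nam{x}$, the name $\nam{x}$ does not occur in $\Phi$ or $\nam{a}$, so $\startr{\nam{a}}{\Phi}$ only depends on assignments $\chi_\nam{y}$ for names $\nam{y}$ introduced by closure rules nearer the root. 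Once all $\chi_\nam{x}$ are fixed, the claim is that applying the bullet translation to every sequent in $\pi$ produces (up to local rearrangement) a well-formed $\GLms$-proof of $\startr{\eps}{\xi} = \xi$.

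The main obstacle is the $\RuClo_\nam{x}$ step, which I expect to translate into an instance of the strengthened induction rule $\RuIndS$. The conclusion $\Phi, (\diam{\gamma^\cross}\varphi)^\nam{a}$ bullet-translates to $\startr{\nam{a}}{\Phi}, \diam{\gamma^\cross}\startr{\nam{a}}{\varphi}$ (no name for $\diam{\gamma^\cross}\varphi$ occurs in $\nam{a}$, again by the side condition of $\RuClo_\nam{x}$). Applying $\RuIndS$ to this sequent with the chosen $\chi_\nam{x}$ yields the premise
\[
\startr{\nam{a}}{\Phi}, \;\startr{\nam{a}}{\varphi} \land \diam{\gamma}\diam{(\chi_\nam{x}! \comp \gamma)^\cross}\diam{\chi_\nam{x}!}\startr{\nam{a}}{\varphi},
\]
and by Lemma~\ref{lem:bulltr-of-cross}, this coincides (after possibly applying $\RuCompD$ to replace $\diam{\gamma}$ by $\dd{\gamma}$ when $\gamma$ is a composition) with the bullet translation of the $\RuClo_\nam{x}$ premise $\Phi, (\varphi \land \diam{\gamma}\diam{\gamma^\cross}\varphi)^\nam{ax}$. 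The delicate point here is the interaction between multiple names in $\nam{a}$ for the same fixpoint formula and nested bullet translations, which must be checked carefully using the monotonicity of annotations and the specific form of Lemma~\ref{lem:bulltr-of-cross}.

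The discharged assumptions $[\Phi, (\diam{\gamma^\cross}\varphi)^\nam{ax}]^\nam{x}$ bullet-translate to sequents of the form
\[
\startr{\nam{a}}{\Phi}, \;\diam{(\chi_\nam{x}! \comp \gamma)^\cross}\diam{\chi_\nam{x}!}\startr{\nam{a}}{\varphi},
\]
which, by our choice of $\chi_\nam{x}$, are propositional consequences of the tautology $\bigvee \startr{\nam{a}}{\Phi} \lor \nnf{\bigvee \startr{\nam{a}}{\Phi}}$: whenever $\chi_\nam{x}$ holds, Angel wins $(\chi_\nam{x}!\comp\gamma)^\cross$ with outcome $\chi_\nam{x} \lor \startr{\nam{a}}{\varphi}$ by perpetually invoking the demonic test. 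These are therefore $\GLms$-derivable from $\Ax$ using only propositional rules and the fixpoint rules $\RuBang$, $\RuComp$, and $\RuCross$, so the discharged leaves of $\pi$ become closed subderivations. For the remaining $\CloG$-rules, the bullet translation commutes modulo applications of the deep rules $\RuCompD$, $\RuMonDgame$, $\RuMonDfmla$: the propositional and monotone modal rules go through directly, the game-operation rules $\RuChoiceA$, $\RuChoiceD$, $\RuTestA$, $\RuTestD$, $\RuComp$, $\RuStar$, $\RuCross$ translate to their $\GLms$ counterparts (noting that the translation leaves the inner structure of $\gamma^*$ untouched and only modifies $\gamma^\cross$ when names for its fixpoint appear), $\RuWeak$ is preserved, and $\RuExp$ is absorbed into $\RuWeak$ combined with the deep rules $\RuMonDgame$/$\RuMonDfmla$ to introduce the extra tests corresponding to the added name in the annotation.
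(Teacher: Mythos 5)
Your overall strategy is the same as the paper's: assign to each name $\nam{x}$ introduced by an application of $\RuClo_{\nam{x}}$ the context formula $\chi_{\nam{x}} = \nnf{\startr{}{\Phi}}$ by induction on the distance to the root, bullet-translate the whole proof, simulate $\RuClo$ by $\RuIndS$, derive the translated discharged assumptions outright, and absorb the remaining discrepancies into the deep rules. There is, however, a genuine error at the crucial step. You justify the identity $\startr{\nam{a}}{(\diam{\gamma^\cross}\varphi)} = \diam{\gamma^\cross}\startr{\nam{a}}{\varphi}$ by claiming that the side condition of $\RuClo_{\nam{x}}$ guarantees that no name for $\diam{\gamma^\cross}\varphi$ occurs in $\nam{a}$. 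That is false: the side condition only requires $\nam{a} \leqfp \nam{x}$ and that the \emph{fresh} name $\nam{x}$ itself is absent from $\Phi,\nam{a}$; since $\leqfp$ is reflexive on $\diam{\gamma^\cross}\varphi$, the annotation $\nam{a}$ may contain other names $\nam{x}_1,\dots,\nam{x}_n$ for that very fixpoint formula --- indeed this is precisely the situation the annotations are designed to track, namely repeated unfoldings of one greatest fixpoint in successively changing contexts along a branch. In that general case Lemma~\ref{lem:startr-of-cross} gives $\startr{\nam{a}}{(\diam{\gamma^\cross}\varphi)} = \diam{(\ulag{\chi}{\gamma})^\cross}\dd{\ula{\chi}}\startr{\nam{b}}{\varphi}$ with $\vec{\chi} = \chi_{\nam{x}_1},\dots,\chi_{\nam{x}_n}$, not $\diam{\gamma^\cross}\startr{\nam{a}}{\varphi}$, and the translated discharged assumption likewise carries $\ulag{\chi}{\gamma}$ and $\dd{\ula{\chi}}$ rather than bare $\gamma$ and $\startr{\nam{a}}{\varphi}$. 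So your proposed instance of $\RuIndS$ is aimed at the wrong formula: it must be applied to the already-weakened fixpoint $(\ulag{\chi}{\gamma})^\cross$, and the mismatch between its premise and the bullet translation of the $\RuClo$-premise (which has $\dd{\gamma}$ and left conjunct $\startr{\nam{b}}{\varphi}$ where the $\RuIndS$-premise has $\diam{\ulag{\chi}{\gamma}}$ and $\dd{\ula{\chi}}\startr{\nam{b}}{\varphi}$) has to be bridged by applications of $\RuCompD$, $\RuMonDgame$ and $\RuMonDfmla$, which is exactly what the paper's derivation does. You do flag the interaction of multiple names for the same fixpoint as ``delicate'', but your explicit argument disposes of it with an incorrect appeal to the side condition, so as written the proof covers only the special case $n=0$.

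The remaining parts are essentially correct and coincide with the paper's treatment: the translated discharged assumptions are derivable from $\startr{}{\Phi}, \nnf{\startr{}{\Phi}}$ (itself obtainable from \Ax with \RuWeak and \RuAnd) using \RuWeak, \RuOr, \RuBang, \RuCompD, \RuAnd and \RuCross, provided you again replace $\gamma$ by $\ulag{\chi}{\gamma}$ there as well; the propositional, modal, choice, test and iteration rules commute with the bullet translation modulo the deep rules; and $\RuExp$ is simulated by the deep monotonicity rules alone (note that $\RuComp$ is a $\CloG$-rule, composition in $\GLms$ is available only through $\RuCompD$, and $\RuWeak$ plays no role in simulating $\RuExp$).
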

%
%
\rem{
The theorem is proven by transforming a $\CloG$-proof 
of $\xi^\eps$ into  a $\GLms$-proof of $\xi$.
We will only provide a proof sketch 
here and refer the reader to the appendix for more details.
}
\begin{proof}
Consider a game logic formula $\xi$ and assume that $\pi$ is a proof of 
$\xi^\eps$ in $\CloG$. 
We assume that each application of the $\RuClo$-rule in $\pi$ introduces a
distinct name, i.e., for any distinct pair of rule applications 
$\RuClo_{\nam{x_1}}$ and $\RuClo_{\nam{x_2}}$ in $\pi$ we have $\nam{x_1} \not= 
\nam{x_2}$. 
This assumption is w.l.o.g. as we can rename the variable names occurring in 
$\pi$ appropriately if needed. 
The shape of the rules of $\CloG$ also imply that for each variable name 
$\nam{x}$ occurring in $\pi$, there is a corresponding occurrence of the 
$\RuClo_{\nam{x}}$-rule.
        
We now assign a formula $\chi_{\nam{x}}$ to each variable name $\nam{x}$ 
occurring in $\pi$.
This assignment is defined by induction on the distance of the (unique)
$\RuClo_{\nam{x}}$ instance in $\pi$ from the root of $\pi$. 
Concretely, for a variable name $\nam{x}$ we consider the sequent $\Phi$ 
consisting of the side formulas of the application of $\RuClo_{\nam{x}}$ in 
$\pi$ and set $\chi_{\nam{x}} \isdef \nnf{\startr{}{\Phi}}$.
Here the bullet translation of $\overline{\Phi}$ is well-defined as any variable 
name $\nam{y}$ occurring in $\Phi$ must have been introduced by an instance of
$\RuClo_{\nam{y}}$ that is closer to the root of the proof tree than 
$\RuClo_{\nam{x}}$, so that the formula $\chi_{\nam{y}}$ is already defined by 
the induction hypothesis.
\medskip

We now show how to transform the $\CloG$-proof $\pi$ of $\xi^\eps$ into a 
$\GLms$-proof of $\xi$ by demonstrating that 
(i) for all (discharged) assumptions $\Phi$ of $\pi$  there is a \GLms-derivation 
of $\startr{}{\Phi}$, and 
(ii) for all \CloG-rule applications $\Phi_1/\Phi_2$ in $\pi$ there is a 
corresponding \GLms derivation of $\startr{}{\Phi_2}$ from assumptions in 
$\startr{}{\Phi_1}$.

Consider first the bullet translation of an arbitrary discharged assumption of 
an application of $\RuClo_{\nam{x}}$ in $\pi$.
Such a translation is of the form
$\startr{}{\Phi},\startr{\nam{ax}}{(\diam{\gamma^\cross} \varphi)}$
for some annotated sequent $\Phi$ and a game logic formula
$(\diam{\gamma^\cross} \varphi)$.
Furthermore, by definition we have $\chi_{\nam{x}} = \nnf{\startr{}{\Phi}}$.
Now consider the following \GLms proof:  
%
%

{\small 
\begin{prooftree}
 \AxiomC{$\startr{}{\Phi}, \chi_{\nam{x}}$}
 \RightLabel{\RuWeak}
 \UnaryInfC{$\startr{}{\Phi}, \chi_{\nam{x}}, \theta$}
 \RightLabel{\RuOr}
 \UnaryInfC{$\startr{}{\Phi}, \chi_{\nam{x}} \lor \theta$}
 \RightLabel{\RuBang}
 \UnaryInfC{$\startr{}{\Phi}, \diam{\chi_{\nam{x}}!}\theta$}
 \AxiomC{$\startr{}{\Phi}, \chi_{\nam{x}}$}
 \RightLabel{\RuWeak}
 \UnaryInfC{$\startr{}{\Phi}, 
   \chi_{\nam{x}}, \diam{\ulag{\chi}{\gamma}} 
   \diam{(\chi_{\nam{x}}!\comp \ulag{\chi}{\gamma})^\cross} 
   \diam{\chi_{\nam{x}}!}\theta
  $}
 \RightLabel{\RuOr}
 \UnaryInfC{$\startr{}{\Phi}, 
   \chi_{\nam{x}} \lor \diam{\ulag{\chi}{\gamma}} 
   \diam{(\chi_{\nam{x}}! \comp \ulag{\chi}{\gamma})^\cross} 
   \diam{\chi_{\nam{x}}!}\theta
  $}
\RightLabel{\RuBang}
 \UnaryInfC{$\startr{}{\Phi}, 
   \diam{\chi_{\nam{x}}!}
   \diam{\ulag{\chi}{\gamma}} 
   \diam{(\chi_{\nam{x}}! \comp \ulag{\chi}{\gamma})^\cross} 
   \diam{\chi_{\nam{x}}!}\theta
  $}
\RightLabel{\RuCompD}
 \UnaryInfC{$\startr{}{\Phi}, 
   \diam{\chi_{\nam{x}}! \comp \ulag{\chi}{\gamma}} 
   \diam{(\chi_{\nam{x}}! \comp \ulag{\chi}{\gamma})^\cross} 
   \diam{\chi_{\nam{x}}!}\theta
  $}
\RightLabel{\RuAnd}
\BinaryInfC{$\startr{}{\Phi}, 
   \diam{\chi_{\nam{x}}!}\theta \land 
   \diam{\chi_{\nam{x}}! \comp \ulag{\chi}{\gamma}} 
   \diam{(\chi_{\nam{x}}! \comp \ulag{\chi}{\gamma})^\cross} 
   \diam{\chi_{\nam{x}}!}\theta
  $}
 \RightLabel{\RuCross}
\UnaryInfC{$\startr{}{\Phi},
   \diam{(\chi_{\nam{x}}! \comp \ulag{\chi}{\gamma})^\cross} 
   \diam{\chi_{\nam{x}}!}\theta
   $}
\end{prooftree}}
\noindent
where $\theta \isdef \dd{\ula{\chi}}\startr{\nam{a}}{\phi}$
and $\vec{\chi} = \chi_{\nam{x}_{1}},\ldots,\chi_{\nam{x}_{n}}$
with $\nam{x}_1,\dots,\nam{x}_n$ being all names of
$\diam{\gamma^\cross}\phi$ in $\nam{a}$.
The remaining assumption in this \GLms proof is the sequent 
$\startr{}{\Phi}, \chi_{\nam{x}} = \startr{}{\Phi}, \nnf{\startr{}{\Phi}}$.
But in fact for \emph{any} finite set $\Psi = \{ \psi_{1}, \ldots, \psi_{n}\}$
we can easily derive the sequent $\Psi, \nnf{\Psi} = \Psi, 
\nnf{\psi_{1}}\land \cdots \land \nnf{\psi_{n}}$ in \GLms using $n$
instances of \Ax and \RuWeak followed by an application of \RuAnd.
Using Lemma~\ref{lem:startr-of-cross} one can verify that 
\[ \startr{}{\Phi},
   \diam{(\chi_{\nam{x}}! \comp \ulag{\chi}{\gamma})^\cross} 
   \diam{\chi_{\nam{x}}!}\theta
   =  \startr{}{\Phi},\startr{\nam{ax}}{(\diam{\gamma^\cross} \varphi)} \]
which shows that we have  constructed the required \GLms derivation of the
translated assumption. 

We show claim (ii) above, ie., that for each rule application in $\pi$
there is a corresponding \GLms derivation. We only consider the rules
\RuExp and \RuClo.
For the other rules the reasoning is either trivial or it follows from
reasoning that is similar but simpler as the one for \RuClo.

Suppose that an instance of the \RuExp-rule is applied in $\pi$ to obtain
$\Phi,\phi^{\nam{axb}}$ from  $\Phi,\phi^{\nam{ab}}$. 
Let $\theta = \diam{\gamma^\cross}\phi'$ be the fixpoint formula
corresponding to $\nam{x}$ and suppose w.l.o.g.
that $\theta \in F^\cross(\phi)$ and that the bullet translation
$\startr{\nam{ab}}{\phi}$ is of the form
$\psi(\dd{(\ulag{\chi} \gamma)^\cross; \ula{\chi}}\psi')$ where $\vec{\chi} = \chi_{\nam{x}_1} \dots \chi_{\nam{x}_n}$
are the context formulas corresponding to the names $\nam{x}_1,\dots,\nam{x}_n$ of $\theta$ that
occur in $\nam{ab}$. Let $\vec{\chi'}= \nam{x}_1 \dots \nam{x} \dots \nam{x}_n$ be the list of names
of $\theta$ in $\nam{axb}$. Then 
$\startr{\nam{axb}}{\phi} = \psi(\dd{(\ulag{\chi'} \gamma)^\cross; \ula{\chi'}}\psi')$ and it is now easy to see
that this formula is derivable from $\startr{\nam{ab}}{\phi}$ in $\GLms$ by applying the \RuMonDgame-rule
twice for each occurrence of $\theta$ that got expanded by the bullet translation. 

Lastly, consider an application of the \RuClo-rule in $\pi$ that derives
from ${\Phi},
{(\varphi\land\diam{\gamma}\diam{\gamma^\cross}\varphi)}^\nam{ax}$ the
conclusion $\Phi, (\diam{\gamma^\cross}\varphi)^{\nam{a}}$. We need to
construct a corresponding \GLms derivation. First observe that by
Lemma~\ref{lem:startr-of-cross} we have 
\[
\startr{\nam{a}}{(\diam{\gamma^\cross} \phi)} =
\diam{(\ulag{\chi}{\gamma})^\cross}\dd{\ula{\chi}}\startr{\nam{b}}{\phi},
\]
where $\nam{b}$ and $\vec{\chi}$ are chosen as in the previous case.
Furthermore 
{\small
\begin{align*}
\startr{\nam{ax}}{(\varphi \land \diam{\gamma} \diam{\gamma^\cross} \varphi)} 
   & = \startr{\nam{b}}{\varphi} \land
        \dd{\gamma}
        \diam{(\chi_{\nam{x}}!\comp\ulag{\chi}{\gamma})^\cross} 
        \diam{\chi_{\nam{x}}!}\dd{\ula{\chi})}
        \startr{\nam{b}}{\phi},
\\ & = \startr{\nam{b}}{\varphi} \land
        \dd{\gamma}
        \diam{(\nnf{\startr{}{\Phi}}!\comp\ulag{\chi}{\gamma})^\cross} 
        \diam{\nnf{\startr{}{\Phi}}!}\dd{\ula{\chi})}
        \startr{\nam{b}}{\phi}
\end{align*}}
where we again used Lemma~\ref{lem:startr-of-cross} and the fact that
$\chi_{\nam{x}} = \nnf{\startr{}{\Phi}}$. Now we build the following
\GLms derivation:
\begin{prooftree}
\AxiomC{$\startr{}{\Phi},
   \startr{\nam{b}}{\varphi} \land
   \dd{\gamma}
   \diam{(\nnf{\startr{}{\Phi}!} ; \ulag{\chi}{\gamma})^\cross} 
   \diam{\nnf{\startr{}{\Phi}}!} 
   \dd{\ula{\chi}}\startr{\nam{b}}{\varphi}$}
\doubleLine
\RightLabel{\RuCompD}
\UnaryInfC{$\startr{}{\Phi}, 
   \startr{\nam{b}}{\phi} \land 
   \diam{\gamma}
   \diam{(\nnf{\startr{}{\Phi}!} ; \ulag{\chi}{\gamma})^\cross} 
   \diam{\nnf{\startr{}{\Phi}}!} 
   \dd{\ula{\chi}}\startr{\nam{b}}{\varphi}$}
\doubleLine
\RightLabel{\RuMonDgame}
\UnaryInfC{$\startr{}{\Phi}, 
   \startr{\nam{b}}{\phi} \land 
   \diam{\ulag{\chi}{\gamma}}
   \diam{(\nnf{\startr{}{\Phi}}! ; \ulag{\chi}{\gamma})^\cross}
   \diam{\nnf{\startr{}{\Phi}}!} 
   \dd{\ula{\chi}}\startr{\nam{b}}{\varphi}$}
\doubleLine
\RightLabel{\RuMonDfmla}
\UnaryInfC{$\startr{}{\Phi}, 
   \dd{\ula{\chi}}\startr{\nam{b}}{\phi} \land 
   \diam{\ulag{\chi}{\gamma}}
   \diam{(\nnf{\startr{}{\Phi}!} ; \ulag{\chi}{\gamma})^\cross}
   \diam{\nnf{\startr{}{\Phi}}!} 
   \dd{\ula{\chi}}\startr{\nam{b}}{\varphi}$}
\RightLabel{\RuIndS}
\UnaryInfC{$\startr{}{\Phi}, 
   \diam{(\ulag{\chi}{\gamma})^\cross}
   \dd{\ula{\chi}}\startr{\nam{b}}{\phi},
   $}
\end{prooftree}
Here, the double lines indicate that multiple applications
of the specified rule could be required to reach the next sequent. 
Using the equations given above the proof tree, we have 
given a \GLms-derivation of $\startr{}{\Phi}, \startr{\nam{a}}{(\diam{\gamma^\cross} \phi)}$
from assumption
$\startr{}{\Phi}, \startr{\nam{ax}}{(\varphi\land\diam{\gamma}\diam{\gamma^\cross}\varphi)}$.
This shows that for each instance of \RuClo there is a corresponding \GLms-derivation as required.
\end{proof}


\section{The monotone $\mu$-calculus}
\label{sec:mon-mu}


In this section we give the basic definitions of the monotone $\mu$-calculus,
and we introduce an annotated proof system for it.

\subsection{The monotone $\mu$-calculus: syntax and semantics}

Additionally to the sets $\AtProp$ and $\AtGame$ from
Section~\ref{subsec:game logic basics} we now also fix a countable set
$\Var$ of \emph{fixpoint variables}. 
We shall only consider $\mu$-calculus formulas in negation normal form.

\begin{definition} \label{def:langmonmml}
The language $\langM$ of the \emph{monotone $\mu$-calculus} consists of the
formulas:
\[
\begin{array}{ccl}
  \langM \ni A, B
  & ::= &
  p \mid \lnot p \mid x
  \mid A \lor B
  \mid A \land B \\
 & &
 \phantom{p}
  \mid \diam{g} A
  \mid \diam{g^d} A \mid \mu x . A \mid \nu x . A
\end{array}
\]
where $p \in \AtProp$, $g \in \AtGame$ and $x \in \Var$.

We apply the usual notions concerning variable binding, writing 
$\Var(A)/\FVar(A)$ for the sets of all/all free variables in $A$.
A formula $A$ is a \emph{sentence} if $\FVar(A) = \emptyset$.
\end{definition}

This is essentially the language of a multi-modal $\mu$-calculus,
except we write $\diam{g^d} \varphi$ instead of $[g]\varphi$
in order to stay closer to game logic syntax.

Given a $\langM$-formula $A$, we define its (Fischer-Ladner) closure $\Cl{A}$
in the usual way (via subformulas and unfoldings).
Our definition of the system $\CloM$ below crucially involves the following
priority order $\leqAL[A]$ on $\Var(A)$, and a notion of well-namedness 
from \cite{AfshariLeigh:LICS17}.

\begin{definition} \label{def:AL-order}
 Let $A$ be a $\mu$-calculus formula and $x, y \in \Var(A)$. We write $x
\lessAL[A] y$ if for some subformula of $A$ of the form
$\sigma y . B$ where $\sigma \in \{\mu,\nu\}$,
the variable $x$ occurs freely in $\sigma y . B$. We denote by
$\lesstrAL[A]$ the transitive closure of $\lessAL[A]$ on $\Var(A)$. We
denote by $\leqAL[A]$ the reflexive, transitive closure of $\lessAL[A]$
on $\Var(A)$. We say that $A$ is \emph{locally well-named} if
$\lesstrAL[A]$ is irreflexive.
\end{definition}
We show examples of $\lessAL[A]$ in Examples~\ref{ex:tr-ord-1}~and~\ref{ex:tr-ord-2} below. 

The semantics of the monotone $\mu$-calculus over the game models from 
Section~\ref{subsec:game logic basics} is standard.

\begin{definition} \label{def:monML semantics}
We define the \emph{meaning} $\mng{A}^{\bbS}_h$ of a formula $A \in \langM$ in 
the game model $\bbS = (S,E,V)$, relative to an assignment $h : \Var \to \wp(S)$,
by a standard induction, where, e.g., $\mng{\diam{g^d} A}^{\bbS}_h \isdef 
S \setminus E_{g}(S \setminus \mng{A}^{\bbS}_h)$, and 
\[
\begin{array}{lll}
   \mng{\mu x . A}^{\bbS}_h & \isdef & \lfp\, X . \,
\mng{A}^\bbS_{h[x \mapsto X]}
\\ \mng{\nu x . A}^{\bbS}_h & \isdef & \gfp\, X . \,
\mng{A}^\bbS_{h[x \mapsto X]}.
\end{array}
\]
Here $h[x \mapsto X]$ is the assignment $h'$ given by
$h'(x) \isdef X$ and $h'(y) \isdef h(y)$ for $y \neq x$.

The meaning of a sentence $A$ in $\bbS$ does not depend on the  assignment, and
so we denote this set as $\mng{A}^\bbS$.
Notions like satisfiability, validity, etc., are all defined in the standard 
way.
\end{definition}

\subsection{The $\CloM$-system for the monotone $\mu$-calculus}
\label{sec:CloM}

The proof system $\CloM$ is the monotone analogue of the annotated sequent
system $\Clo$ for the $\mu$-calculus from \cite{AfshariLeigh:LICS17}, with one 
further difference. 
In order to prove our key Proposition~\ref{p:translation} below, we need more
control over the order on fixpoint variables than allowed by the global order 
of~\cite{AfshariLeigh:LICS17}.
Instead, we will use the fixed order $\leqAL[C]$, for some ambient formula $C$, 
and define $\CloM$ to be parametric in such a $C$.

So fix a well-named $\mu$-calculus formula $C$.
In the derivation system $\CloM[C]$, formulas are annotated with sequences of
names, as in the system $\CloG$.  
To each variable $\var{x} \in \Var(C)$ we link a set $\names[\var{x}]$ of
\emph{names for $\var{x}$}, in such a way that $\names[\var{x}]\cap\names[\var{y}]=
\emptyset$ if $\var{x}\neq\var{y}$. 
As in Section~\ref{subsec:clog}, we let the names inherit the order $\leqAL[C]$
over variables, introduce annotations, and extend the relation $\leqAL[C]$
to hold between annotations and variables.

The proof rules of $\CloM[C]$ are in Figure~\ref{fig:CloM}.
A $\CloM[C]$-proof is a finite tree of inferences in which each leaf is 
labelled by an axiom or a discharged assumption.
For a formula $A\in \langM$, 
we write $\CloM \vdash A$ if there is a $\CloM[A]$-proof of $A^{\eps}$.

Analogous to $\CloG$, the system $\CloM$ is analytic in the sense that any 
$\CloM[C]$-proof of $C^\eps$ contains only formulas from $\Cl{C}$ and
names for variables in $\Var(C)$. 
Hence, the order $\leqAL[C]$ is defined for these names and variables.

\begin{figure}
\fbox{
\begin{minipage}[t]{.47\textwidth}
\begin{minipage}[t]{.49\textwidth}
\begin{prooftree}
 \AxiomC{\phantom{X}}
 \RightLabel{\AxOne}
 \UnaryInfC{$p^{\nam{\eps}}, (\neg p)^{\nam{\eps}}$}
\end{prooftree}
\begin{prooftree}
 \AxiomC{$\Gamma, A^{\nam{a}} , B^{\nam{a}} $}
 \RightLabel{\RuOr}
 \UnaryInfC{$\Gamma, (A  \lor B)^{\nam{a}} $}
\end{prooftree}
\end{minipage}
\begin{minipage}[t]{.49\textwidth}
\begin{prooftree}
 \AxiomC{$A^{\nam{a}} , B^{\nam{b}} $}
 \RightLabel{\RuMonMod}
 \UnaryInfC{$\diam{g}A^{\nam{a}} , \diam{g^d}B^{\nam{b}}$}
\end{prooftree}
\begin{prooftree}
 \AxiomC{$\Gamma, A^{\nam{a}} $}
 \AxiomC{$\Gamma, B^{\nam{a}} $}
 \RightLabel{\RuAnd}
 \BinaryInfC{$\Gamma, (A \land B)^{\nam{a}} $}
\end{prooftree}
\end{minipage}

\medskip
\begin{minipage}[b]{.32\textwidth}
\begin{prooftree}
 \AxiomC{$\Gamma$}
 \RightLabel{\RuWeak}
 \UnaryInfC{$\Gamma,A^{\nam{a}} $}
\end{prooftree}
\end{minipage}
\begin{minipage}[b]{.66\textwidth}
\begin{prooftree}
 \AxiomC{$\Gamma, A(\mu x . A(x))^{\nam{a}}$}
 \LeftLabel{$(\nam{a} \leqAL[C] \var{x})$} 
 \RightLabel{\RuLFP}
 \UnaryInfC{$\Gamma, (\mu x . A(x))^{\nam{a}}$}
\end{prooftree}
\end{minipage}

\medskip

\begin{minipage}[b]{.32\textwidth}
\begin{prooftree}
  \AxiomC{$\Gamma, A^{\nam{{ab}}},$}
 \RightLabel{\RuExp}
 \UnaryInfC{$\Gamma, A^{\nam{{axb}}}$}
\end{prooftree}
\end{minipage}
\begin{minipage}[b]{.66\textwidth}
\begin{prooftree}
 \AxiomC{$\Gamma, A(\nu x . A(x))^{\nam{a}}$}
 \LeftLabel{$(\nam{a} \leqAL[C] \var{x})$} 
 \RightLabel{\RuGFP}
 \UnaryInfC{$\Gamma, (\nu x . A(x))^{\nam{a}}$}
\end{prooftree}
\end{minipage}

\medskip

\begin{minipage}[b]{.99\textwidth}
\begin{prooftree}
 \AxiomC{$[\Gamma, \nu x . A(x)^{\nam{{ax}}}]^{\nam{x}}$}
 \noLine
 \UnaryInfC{$\vdots$}
 \noLine
 \UnaryInfC{$\Gamma, A(\nu x . A(x))^{\nam{{ax}}}$}
 \LeftLabel{($\nam{a} \leqAL[C] \nam{x} \in N_{x}, \nam{x} \notin
\Gamma, \nam{a}$)}
 \RightLabel{$\RuNuClo_{\nam{x}}$}
 \UnaryInfC{$\Gamma, (\nu x . A(x))^{\nam{a}}$}
\end{prooftree}
\end{minipage}

\end{minipage}
}
\caption{The axiom and rules of the system $\CloM[C]$.}
\label{fig:CloM}
\end{figure}

We show that
$\CloM$ is sound and complete for the semantics
of the monotone $\mu$-calculus
given in Definition~\ref{def:monML semantics} 
by reduction to the soundness and completeness of the system $\Clo$
with respect to Kripke models, which has been proven in
\cite{AfshariLeigh:LICS17}.

The reduction uses a translation $(-)^t \colon \langM \to \langBimod$
from monotone $\mu$-calculus into normal $\mu$-calculus 
that is based on well-known ideas, going back to
\cite{KrachtWolter99},
though our approach is closer to the one from \cite[Ch.10]{Helle03},
for simulating monotone modal logics with normal modal logics.
The basic idea is that an effectivity function
$E \colon \wp(S) \to \wp(S)$
corresponds to a monotone neighbourhood function
$N \colon S \to \wp(\wp(S))$
which can be encoded with two Kripke relations on state space $S \cup \wp(S)$:
A relation $R_N \subseteq S \times \wp(S)$ that relates states to their neighbourhoods,
and a relation $R_\ni \subseteq \wp(S) \times S$ that relates neighbourhoods to their elements. 
Conversely, from two Kripke relations on a state space $S$
one can define
a monotone neighbourhood function $N \colon S \to \wp(\wp(S))$ 
using the same idea.
The language $\langBimod$ is the modal $\mu$-calculus language
that has two (normal) modalities $\diam{g_N}$ and $\diam{g_\ni}$ for each $g \in \AtGame$.
The translation of atomic formulas, Boolean connectives and fixpoints
is defined by straightforward recursion. For modalities, we take
$(\diam{g}A)^t = \diam{g_{N}}[g_{\ni}] (A^t)$,
and
$(\diam{g^d}A)^t = [g_{N}]\diam{g_{\ni}}(A^t)$.
\rem{
  \begin{tabular}{lcl}
$(\mu x. A)^t = \mu x. (A^t)$, $(\diam{g}A)^t = \diam{g_{N}}[g_{\ni}] (A^t)$,\\ 
$(\nu x. A)^t = \nu x. (A^t)$, $(\diam{g^d}A)^t = [g_{N}]\diam{g_{\ni}}(A^t)$.
\end{tabular}\\[.3em]
}
\noindent Using the model translatons described above, we can show that 
$(-)^t$ preserves satisfiability and validity.

\begin{lemma}\label{lem:t-sat-val}
For all formulas $C \in \langM$,
\begin{enumerate}
\item if $C$ is satisfiable in a game model,
then $C^t$ is satisfiable in a Kripke model for $\langBimod$.
\item if $C$ is valid over game models, 
then $C^t$ is valid over Kripke models for $\langBimod$.
\end{enumerate}
\end{lemma}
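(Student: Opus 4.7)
The strategy is to establish both parts via model constructions together with truth-preservation lemmas proved by induction on $C$.

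For part 1, I would construct from a game model $\bbS = (S,E,V)$ a Kripke model $\bbS^\flat$ for $\langBimod$ on the state space $W := S \uplus \wp(S)$, with
\[
R_{g_N} := \{(s,X) \in S \times \wp(S) : s \in E(g)(X)\}
\]
and $R_{g_\ni} := \{(X,t) \in \wp(S) \times S : t \in X\}$ for each $g \in \AtGame$, and valuation $V^\flat(p) := V(p) \subseteq S$ (so neighbourhood states carry no atomic propositions). The key claim, proved by induction on $C$, is that for every $C \in \langM$ and every assignment $h : \Var \to \wp(W)$,
\[
\mng{C^t}^{\bbS^\flat}_h \cap S \;=\; \mng{C}^\bbS_{h^S},
\]
where $h^S(x) := h(x) \cap S$. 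Specialising to the empty assignment, satisfaction of $C$ at some $s \in S$ in $\bbS$ yields satisfaction of $C^t$ at $s$ in $\bbS^\flat$.

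For part 2, I would construct from any Kripke model $\mathfrak{K} = (W, \{R_{g_N}, R_{g_\ni}\}_{g \in \AtGame}, V)$ a game model $\mathfrak{K}^\sharp := (W, E, V)$ with
\[
E(g)(X) := \{w \in W : \exists t.\, R_{g_N}(w,t) \text{ and } R_{g_\ni}[t] \subseteq X\},
\]
which is manifestly monotone in $X$. A parallel induction then establishes $\mng{C^t}^\mathfrak{K}_h = \mng{C}^{\mathfrak{K}^\sharp}_h$ for every assignment $h$. If $C$ is valid over game models, then $\mng{C}^{\mathfrak{K}^\sharp}_h = W$ for every Kripke $\mathfrak{K}$ and $h$, whence $\mng{C^t}^\mathfrak{K}_h = W$, giving validity of $C^t$ over Kripke models.

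The only nontrivial inductive cases are the modal operators and the fixpoints. The modal case reduces to the tautological observation $s \in E(g)(X) \Leftrightarrow \exists t.\, R_{g_N}(s,t) \land R_{g_\ni}[t] \subseteq X$, which is exactly what the translation clause $\diam{g_N}[g_\ni]$ encodes; the dual clause $[g_N]\diam{g_\ni}$ follows by a short contrapositive computation using monotonicity of $E(g)$. The main obstacle I expect is the fixpoint case of part 1, because the operators $G(Y) := \mng{A^t}^{\bbS^\flat}_{h[x \mapsto Y]}$ on $\wp(W)$ and $F(X) := \mng{A}^\bbS_{h^S[x \mapsto X]}$ on $\wp(S)$ live on different complete lattices. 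Letting $r : \wp(W) \to \wp(S)$ denote the restriction $r(Y) := Y \cap S$, the inductive hypothesis for $A$ gives $r \circ G = F \circ r$. Since $r$ preserves arbitrary joins and meets, a transfinite induction along the Knaster--Tarski approximant sequences yields $r(\lfp G) = \lfp F$ and $r(\gfp G) = \gfp F$, so the two fixpoint semantics agree on $S$. In part 2 the corresponding step is trivial, since $r$ is the identity. This will complete the proof.
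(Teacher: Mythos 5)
Your proposal is correct and takes essentially the same route as the paper: part 1 uses the same two-sorted Kripke model on $S \cup \wp(S)$ with the relations $R_{g_N}$ (state to neighbourhood) and $R_{g_\ni}$ (neighbourhood to element), and part 2 uses the same effectivity function $E_g(Z) = \{w \mid \exists v\,(w\,R_{g_N}\,v \text{ and } R_{g_\ni}[v] \subseteq Z)\}$ on an unchanged state space, followed in both cases by a truth-preservation induction. The only difference is that you spell out the details the paper calls ``straightforward''/``routine'', namely the strengthened inductive statement with assignments and the restriction map $r(Y)=Y\cap S$ together with the approximant argument for the fixpoint cases, which is indeed the right way to make that induction precise.
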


Lemma~\ref{lem:t-sat-val} gives the semantic part needed in the proof of
the following theorem.

\begin{theorem}[Soundness and Completeness of ${\CloM}$]
\label{thm:CloM-complete}
For all 
$C \in \langM$,
$C$ is valid (on game models) iff ${\CloM} \vdash C$.
\end{theorem}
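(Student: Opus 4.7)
The plan is to derive both directions of the biconditional from the corresponding properties of $\Clo$ via the translation $(-)^t \colon \langM \to \langBimod$. Lemma~\ref{lem:t-sat-val} supplies the semantic bridge, and two proof transformations --- one in each direction between $\CloM[C]$-derivations of $C$ and $\Clo$-derivations of $C^t$ --- supply the syntactic bridge. Throughout I assume without loss of generality that $C$ is well-named, since bound variables can be renamed to achieve this.

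For the completeness direction, suppose $\models C$ on game models. Lemma~\ref{lem:t-sat-val}(2) yields that $C^t$ is valid over Kripke models, and the completeness result of \cite{AfshariLeigh:LICS17} then provides a $\Clo$-derivation of $C^t$. I would transform this derivation into a $\CloM[C]$-derivation of $C$ by mapping rules one-to-one: since $(-)^t$ commutes with Booleans and fixpoints, preserves the set of fixpoint variables, and therefore identifies $\leqAL[C^t]$ with $\leqAL[C]$, the structural, Boolean, fixpoint, expansion, and closure rules transfer essentially verbatim. The only non-routine case is the modal step, where the two consecutive $\Clo$-applications that produce $\diam{g_N}[g_\ni]$ (respectively $[g_N]\diam{g_\ni}$) must be collapsed into a single application of $\RuMonMod$.

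For the soundness direction, I would perform the reverse transformation, simulating each $\CloM[C]$ rule by the corresponding $\Clo$ rule(s) on translated sequents. The delicate case is $\RuMonMod$: the inference with premise $A^\nam{a}, B^\nam{b}$ and conclusion $\diam{g} A^\nam{a}, \diam{g^d}B^\nam{b}$ is simulated by first applying the normal modal rule for $g_\ni$ to $(A^t)^\nam{a}, (B^t)^\nam{b}$, obtaining $([g_\ni] A^t)^\nam{a}, (\diam{g_\ni} B^t)^\nam{b}$, and then applying the rule for $g_N$ to yield $(\diam{g_N}[g_\ni] A^t)^\nam{a}, ([g_N]\diam{g_\ni} B^t)^\nam{b}$, which is exactly $((\diam{g}A)^t)^\nam{a}, ((\diam{g^d}B)^t)^\nam{b}$. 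Once $\Clo \vdash C^t$ has been obtained, soundness of $\Clo$ gives that $C^t$ is valid on Kripke models, and the converse of Lemma~\ref{lem:t-sat-val}(1) --- reading a monotone neighbourhood function off a bimodal Kripke model via the relations $R_{g_N}$ and $R_{g_\ni}$ --- transfers validity back from $C^t$ to $C$ by contraposition.

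The main obstacle I anticipate is the bookkeeping of annotations and side conditions across the translation: checking that the $\RuNuClo_\nam{x}$ conditions (that $\nam{x}$ does not occur in the context and that annotations remain non-repeating and monotone in $\leqAL[C]$) align under the collapse or expansion of modal steps, and that the converse direction of the neighbourhood-to-Kripke correspondence is stated precisely enough --- typically by restricting to bimodal Kripke models in which the pair $(R_{g_N}, R_{g_\ni})$ faithfully encodes a monotone neighbourhood frame --- so that $\Clo$-validity of $C^t$ genuinely entails $\models C$. The remaining rule-by-rule simulation should be largely mechanical, thanks to the tight syntactic correspondence between $\CloM[C]$ and $\Clo$ that motivated the design of $\CloM$ in the first place.
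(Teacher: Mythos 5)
Your proposal follows essentially the same route as the paper: completeness via Lemma~\ref{lem:t-sat-val}(2), completeness of $\Clo$, and a root-up transformation of $\Clo$-proofs into $\CloM$-proofs that collapses the two normal modal steps (plus an intermediate weakening) into one application of \RuMonMod (the paper's Lemma~\ref{lem:from-Clo-to-CloM}); and soundness via the reverse simulation of \RuMonMod by two modal steps (Lemma~\ref{lem:from-CloM-to-Clo}), soundness of $\Clo$, and a contrapositive semantic transfer. The only cosmetic difference is in that last step: the paper needs no ``converse'' model correspondence or restriction to Kripke models faithfully encoding neighbourhood frames, since it simply applies the satisfiability preservation of Lemma~\ref{lem:t-sat-val}(1) to $\nnf{C}$ and uses that $\nnf{C}^t$ is equivalent to the negation of $C^t$.
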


A detailed proof of Theorem~\ref{thm:CloM-complete}
is found in the appendix. Here we only give a sketch.
What remains is to show that we can translate between proofs in $\CloM$
and proofs in the similar annotated proof system
$\Clo$ \cite{AfshariLeigh:LICS17} for the normal $\mu$-calculus.

First, we note that the translation $(-)^t$ extends to annotated formulas and sequents in the obvious manner.
For both directions, we transform proof trees starting from the root going up.
The most interesting case is in 
the construction of a $\CloM$-proof for $A \in \langM$ from a $\Clo$-proof of $A^t \in \langBimod$
when the modal rule from $\Clo$ is applied.
So suppose some node $v$ in a $\Clo$-proof $\pi$ is obtained from an application of the (normal) modal rule,
and $v$ is labelled with a sequent $\Gamma^t$
where $\Gamma$ is a sequent of annotated $\langM$-formulas.
Then $\Gamma^t$ must have the form
$\diam{g_{N}} [g_\ni] A_1 ,..., \diam{g_{N}} [g_\ni] A_n, [g_{N}]\diam{g_{\ni}} B$, and 
hence $\Gamma$ must have the form
$\diam{g} C_1,...,\diam{g} C_n, \diam{g^d} D$ where $C_1^t = A_1,...,C_n^t = A_n$, $D^t = B$. 
By inspection of the rules of $\Clo$, and assuming that $n\geq 2$ (since the other case is easier), 
we see that the subtree of the $\Clo$-proof $\pi$ rooted at $v$
must have the following shape: 
\begin{prooftree}
\AxiomC{}
 \noLine
 \UnaryInfC{$\vdots$}
 \noLine
\UnaryInfC{$A_i,B$}
\RightLabel{$\mathsf{mod}$}
\UnaryInfC{$[g_\ni] A_i,\diam{g_{\ni}} B$}
\RightLabel{$\RuWeak$}
 \UnaryInfC{$ [g_\ni] A_1 ,...,  [g_\ni] A_n,\diam{g_{\ni}} B$}
 \RightLabel{$\mathsf{mod}$}
 \UnaryInfC{$\diam{g_{N}} [g_\ni] A_1 ,..., \diam{g_{N}} [g_\ni] A_n, [g_{N}]\diam{g_{\ni}} B$}
\end{prooftree}
We mimic this by the following $\CloM$-derivation steps: 
\begin{prooftree}
\AxiomC{$C_i,D$}
\RightLabel{$\mathsf{mod}$}
 \UnaryInfC{$\diam{g} C_i, \diam{g^d} D$}
 \RightLabel{$\RuWeak$}
 \UnaryInfC{$\diam{g} C_1,...,\diam{g} C_n, \diam{g^d} D$}
\end{prooftree} 
The label of the top node of this derivation translates to $A_i,B$, and so we can inductively continue the construction using the corresponding smaller subtree of $\pi$. 

\section{Game Logic and the monotone $\mu$-calculus}
\label{sec:translation}
\label{sec:complete}


In this section we define a novel translation from formulas in game
logic to formulas in the monotone $\mu$-calculus, and 
prove that if the translation of a formula is provable in \CloM then the
formula is provable in \CloG.

\subsection{Translating Game Logic to the monotone $\mu$-calculus}
\label{sec:sharp-trans}

It is shown in \cite[sec.~6.4.2]{Pau:phd} that game logic can be translated 
into the two-variable fragment of the monotone $\mu$-calculus.
However, we use more than two variables because we need to keep track
of the nesting of fixpoints.
Before we give the formal definition of our translation,
we first explain informally how we achieve this.
Consider the translation of a game logic formula $\xi\in \langNorm$.
Formulas $\diam{\gamma^\circ}\phi \in F(\xi)$ translate
to fixpoint formulas of the form $\sigma x. A(x)$ on the $\mu$-calculus side.
In order to synchronise the translation across unfolding of fixpoint formulas,
we syntactically encode $\diam{\gamma^\circ}\phi$ into the fixpoint variable
that it gives rise to in the translation of $\xi$.

\begin{definition}\label{def:tr}
We define the translation $\tr{(-)}\colon \langNorm \to \langM$
by a mutual induction on formulas and games as follows:
\[
\begin{array}{rcl}
   \tr{p} & \isdef  & p 
\\ \tr{(\neg p)} & \isdef  & \neg p 
\\ \tr{(\varphi \land \psi)} & \isdef  & \tr{\varphi} \land \tr{\psi} 
\\ \tr{(\varphi \lor \psi)} & \isdef  & \tr{\varphi} \lor \tr{\psi} 
\\ \tr{(\diam{\gamma} \varphi)} & \isdef  & \itrg{\varphi}{\gamma}(\tr{\varphi})
\\[.3em]
 \itrg{\varphi}{g}(A) & \isdef  & \diam{g} A \\
 \itrg{\varphi}{g^d}(A) & \isdef  & \diam{g^d} A \\
 \itrg{\varphi}{\gamma \gchd \delta}(A) & \isdef  &
\itrg{\varphi}{\gamma}(A) \land \itrg{\varphi}{\delta}(A) \\
 \itrg{\varphi}{\gamma \gcha \delta}(A) & \isdef  &
\itrg{\varphi}{\gamma}(A) \lor \itrg{\varphi}{\delta}(A) \\
 \itrg{\varphi}{\gamma^*}(A) & \isdef  & \mu x^{\diam{\gamma^*} \varphi} . A
\lor \itrg{\diam{\gamma^*}\varphi}{\gamma}(x^{\diam{\gamma^*} \varphi})
\\
 \itrg{\varphi}{\gamma^\cross}(A) & \isdef  & \nu x^{\diam{\gamma^\cross}
\varphi} . A \land
\itrg{\diam{\gamma^\cross}\varphi}{\gamma}(x^{\diam{\gamma^\cross}
\varphi}) \\
 \itrg{\varphi}{\gamma \comp \delta}(A) & \isdef  &
\itrg{\diam{\delta}\varphi}{\gamma}(\itrg{\varphi}{\delta}(A)) \\
 \itrg{\varphi}{\psi ?}(A) & \isdef  & \tr{\psi} \land A \\
 \itrg{\varphi}{\psi !}(A) & \isdef  & \tr{\psi} \lor A \\
\end{array}
\]
\end{definition}
\rem{In the translation of a game term $\gamma^\circ$, whose top-level
operator is an iterator $\circ \in \{*,\cross\}$, we tag the introduced
fixpoint variable with the formula in the closure of the translated game
logic formula that this fixpoint is coming from.
}
\begin{example}\label{ex:tr-ord-1}
For $\varphi = \diam{(a^* \comp (b^\cross \gcha c))^\cross} p$, the translation is 
\[\begin{array}{rcl}
\tr{\varphi} &=&
 \nu x^{\varphi} . p \land  \\ & & \quad \mu x^{\psi} .
         ( (\nu x^\theta . x^\phi \land \diam{b}x^\theta) \lor \diam{c} x^\phi )  \lor \diam{a}x^\psi 
\\
\text{ with }
\psi   &=& \diam{a^*}\diam{b^\cross \gcha c}\phi\\
\theta &=& \diam{b^\cross}\phi
\end{array}\]
Applying the definitions of order on game logic fixpoint formulas (Def.~\ref{def:F-order}) and $\mu$-calculus fixpoint variables (Def.~\ref{def:AL-order}), we find that:
\[
\phi \lessfp \psi,\;
\phi \lessfp\theta
\quad\text{ and } \quad
x^\phi \lessAL[\tr{\phi}] x^\psi,\;
x^\phi \lessAL[\tr{\phi}] x^\theta
\]
\end{example}

\begin{example}\label{ex:tr-ord-2}
For $\phi = \diam{ (a^* ; (\diam{b^\cross}p)?)^\cross}\diam{c^*}q$,
\[\begin{array}{rcl}
\tr{\phi} &=&
    \nu x^\phi . \, ( \mu x^\psi . q \lor \diam{c}x^\psi)
                \land\\
          & & \qquad\;
                (\mu x^\zeta . ( (\nu x^\theta . p \land \diam{b}x^\theta) \land x^\phi )  \lor \diam{a}x^\zeta)
\\
\text{ with } \psi &=& \diam{c^*}q\\
 \zeta &=& \diam{a^*}\diam{ \diam{b^\cross}p)? }\phi, \quad\text{and}\\
 \theta &=& \diam{b^\cross}p
\end{array}\]
Applying the definitions of order on game logic fixpoint formulas (Def.~\ref{def:F-order}) and $\mu$-calculus fixpoint variables (Def.~\ref{def:AL-order}), we find that:
\[
\phi \lessfp \zeta,\; \phi \lessfp\theta
\quad\text{ and } \quad
x^\phi \lessAL[\tr{\phi}] x^\zeta
\]
\end{example}

\rem{For example, the translation $\tr{\varphi}$ of the formula
$\varphi = \diam{(a^* \comp b)^\cross} p$ is $\nu x^{\varphi} . p \land \mu
x^{\psi} . \diam{b} x^{\varphi} \lor \diam{a} x^{\psi}$, where $\psi =
\diam{a^*} \diam{b} \diam{(a^* \comp b)^\cross} p$.
}

The above examples illustrate how
the order on fixpoint variables in $\mu$-calculus
is reflected in game logic fixpoints along the translation,
and that translations are always locally well-named.
These are the syntactic properties of $\tr{(-)}$ that are crucial to our proofs.

\begin{proposition} \label{p:translation}
For all $\xi \in \formNorm$ the translation $\tr{\xi}$ is locally
well-named, and for all $\varphi,\psi \in F(\xi)$ we have
$x^\varphi,x^\psi \in \Var(\tr{\xi})$, and that $x^\varphi
\leqAL[\tr{\xi}] x^\psi$ implies $\phi \leqfp \psi$.
\end{proposition}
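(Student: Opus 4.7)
The plan is to prove all three claims by a simultaneous structural induction on $\xi$, using a strengthened statement about the auxiliary translation $\itrg{\chi}{\gamma}(A)$ for arbitrary $\chi \in \formNorm$ and $A \in \langM$. The unifying observation is that every fixpoint binder $\sigma x^\varphi$ appearing in $\tr{\xi}$ is introduced by the single translation clause for $\itrg{\chi}{\gamma^\circ}(A)$ with $\circ \in \{*,\cross\}$, which produces $\sigma x^{\diam{\gamma^\circ}\chi} . A \,\square\, \itrg{\diam{\gamma^\circ}\chi}{\gamma}(x^{\diam{\gamma^\circ}\chi})$ (with $\square$ and $\sigma$ chosen according to $\circ$). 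Inspection of the remaining clauses shows that such a tagged variable $x^\varphi$ is introduced only as the continuation argument of the nested recursive call, and is then merely threaded downward until it reaches an atomic game or a test leaf; consequently every occurrence of $x^\varphi$ in $\tr{\xi}$ lies within the scope of a binder $\sigma x^\varphi$.

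Part 2 will then follow by induction on $\xi$, tracing the Fischer-Ladner closure rules against the recursive structure of $\tr{(-)}$: every $\varphi = \diam{\delta^\dagger}\chi' \in F(\xi)$ is reached by extracting subformulas, splitting compositions, and unfolding fixpoints, and one verifies that each such $\varphi$ is exactly the tag of a binder produced by some application of the clause above during the translation of $\xi$.

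For part 3, by transitivity of $\lessfp$ it suffices to show that $x^\varphi \lessAL[\tr{\xi}] x^\psi$ implies $\varphi \lessfp \psi$. Write $\varphi = \diam{\gamma^\circ}\chi$ and $\psi = \diam{\delta^\dagger}\chi'$. The assumption means that $x^\varphi$ is free in some subformula $\sigma x^\psi . B$ of $\tr{\xi}$; by the scope observation such an occurrence of $x^\varphi$ sits inside the body of some $\sigma x^\varphi$, so $\sigma x^\psi$ occurs within $\sigma x^\varphi$. Moreover it must occur in the recursive-call branch $\itrg{\varphi}{\gamma}(x^\varphi)$ rather than in the continuation branch $\tr{\chi}$, since $\chi \ssubterm \varphi$ prevents $\tr{\chi}$ from mentioning $x^\varphi$. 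Since $\sigma x^\psi$ is introduced while translating a fixpoint subterm $\delta^\dagger$ of $\gamma$, we get $\delta^\dagger \ssubterm \gamma^\circ$, which is exactly $\varphi \lessfp \psi$.

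Part 1 is then a consequence of part 3 together with the irreflexivity of $\lessfp$: any cycle $x^{\varphi_0} \lessAL \cdots \lessAL x^{\varphi_n} = x^{\varphi_0}$ in $\lessAL[\tr{\xi}]$ would yield $\varphi_0 \lessfp \cdots \lessfp \varphi_0$, and hence $\varphi_0 \lessfp \varphi_0$ by transitivity, contradicting irreflexivity. The main obstacle will be the careful bookkeeping in the composition clause, where $\itrg{\varphi}{\gamma_1 \comp \gamma_2}(A) = \itrg{\diam{\gamma_2}\varphi}{\gamma_1}(\itrg{\varphi}{\gamma_2}(A))$ shifts the context from $\varphi$ to $\diam{\gamma_2}\varphi$ as translation descends into $\gamma_1$; one must verify that the tags of newly introduced binders match fixpoint formulas in $F(\xi)$ and that their nesting in $\tr{\xi}$ faithfully reflects the subterm ordering on the games.
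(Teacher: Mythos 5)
Your overall strategy is the same as the paper's: reduce everything to showing that $x^\varphi \lessAL[\tr{\xi}] x^\psi$ implies $\varphi \lessfp \psi$ by locating where binders and their variables can occur in $\tr{\xi}$, and then get local well-namedness from irreflexivity of $\lessfp$. The gap is in the case you dismiss in one line: you assume that the binder $\sigma x^{\varphi}$ with $\varphi = \diam{\gamma^\circ}\chi$ always has body $\tr{\chi} \mathbin{\square} \itrg{\varphi}{\gamma}(x^{\varphi})$, and you rule out the ``continuation branch'' by saying that $\chi \ssubterm \varphi$ prevents $\tr{\chi}$ from mentioning $x^{\varphi}$. But the continuation argument of the call that introduces a binder is in general \emph{not} $\tr{\chi}$: the composition clause $\itrg{\varphi}{\gamma_1 \comp \gamma_2}(A) = \itrg{\diam{\gamma_2}\varphi}{\gamma_1}(\itrg{\varphi}{\gamma_2}(A))$ and the iteration clauses (which pass a bare fixpoint variable as the argument) mean that the argument is an accumulated formula that may contain \emph{free occurrences of other bound variables} of $\tr{\xi}$. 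For example, for $\xi = \diam{(g \comp h^{*})^{\cross}}p$ one gets $\tr{\xi} = \nu x^{\xi}.\, p \land \diam{g}\bigl(\mu x^{\diam{h^{*}}\xi}.\, x^{\xi} \lor \diam{h}x^{\diam{h^{*}}\xi}\bigr)$: the inner binder's continuation argument is the free variable $x^{\xi}$, not a closed translation of any subformula. So the statement you actually need --- that $x^{\varphi}$ cannot occur free in the continuation argument of the very call that creates the binder $\sigma x^{\varphi}$ --- is a nontrivial invariant about \emph{all} recursive calls of the translation, not something that follows from $\chi \ssubterm \varphi$.

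This invariant is precisely where the paper invests its work: it defines the set of ``call triples'' $(\gamma,\phi,A)$ reachable as recursive calls of $\tau$ on $\xi$ (a least fixpoint $C_\xi$), shows that every fixpoint subformula of $\tr{\xi}$ arises as $\itrg{\varphi}{\delta^\circ}(A)$ for some such call, and proves by fixpoint induction that if a bound variable $x^{\diam{\delta^\circ}\psi}$ of $\tr{\xi}$ is free in the argument $A$ of a call on a game $\gamma'$, then $\gamma' \ssubterm \delta^\circ$; applied to the binder-creating call this rules out your problematic case, and combined with the fact that $\itrg{\varphi}{\gamma}(A)$ introduces no new free variables it yields exactly the implication from $\lessAL[\tr{\xi}]$ to $\lessfp$. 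You do flag the composition clause as ``the main obstacle'' and note that bookkeeping is needed, but the proposal neither states the required invariant on call arguments nor indicates how to prove it, and the justification it does give for the crucial case is based on an incorrect identification of the continuation argument with $\tr{\chi}$. The remaining parts (that every $\varphi \in F(\xi)$ contributes a variable $x^{\varphi} \in \Var(\tr{\xi})$, that binders in the recursive-call branch carry tags whose game part is a subterm of $\gamma$, and the derivation of well-namedness from irreflexivity) are sound in outline and match the paper.
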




On the semantic side, our translation is adequate in the sense that
it is truth- and validity preserving.
Recall that $\langNorm$ and $\langM$ are both interpreted over game models,
i.e., monotone neighbourhood models.

\begin{proposition} 
    \label{p:adequacy translation}
For every  $\xi \in \langNorm$ and every game model
$\bbS$ it holds that $\mng{\xi}^\bbS = \mng{\tr{\xi}}^\bbS$.
\end{proposition}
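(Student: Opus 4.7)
The plan is to prove the proposition by mutual induction on the structure of terms, supplemented by the following stronger auxiliary lemma, which I would state and prove simultaneously with the proposition:

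\emph{Lemma.} For all $\gamma \in \gameNorm$, $\varphi \in \formNorm$, $A \in \langM$, all game models $\bbS = (S,E,V)$ and all assignments $h : \Var \to \wp(S)$, we have
\[
\mng{\itrg{\varphi}{\gamma}(A)}^\bbS_h = \wh{E}_\gamma(\mng{A}^\bbS_h).
\]
The key observation that makes this lemma semantically well-behaved is that the tag $\varphi$ in $\itrg{\varphi}{\gamma}(A)$ only serves to index the fixpoint variables introduced in the translation of iterated games; it plays no role in the meaning. Moreover, by the syntactic construction every fixpoint variable introduced inside $\itrg{\varphi}{\gamma}(A)$ carries a tag of the form $\diam{\delta^\circ}\psi$ with $\delta^\circ$ a subgame of $\gamma$, and the injectivity of the tagging ensures no accidental capture with variables occurring free in $A$.

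With the lemma in hand, the proposition follows from a short induction on $\xi$: the cases for atoms and Boolean connectives are immediate from the inductive hypothesis and the clauses in Definition~\ref{def:tr}, while for $\xi = \diam{\gamma}\varphi$ we compute
\[
\mng{\tr{\diam{\gamma}\varphi}}^\bbS = \mng{\itrg{\varphi}{\gamma}(\tr{\varphi})}^\bbS = \wh{E}_\gamma(\mng{\tr{\varphi}}^\bbS) = \wh{E}_\gamma(\mng{\varphi}^\bbS) = \mng{\diam{\gamma}\varphi}^\bbS,
\]
using the lemma for the second equality and the inductive hypothesis on $\varphi$ for the third.

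For the lemma itself, I would induct on the game term $\gamma$ (with a subordinate induction on the test subformulas via the main proposition). The atomic cases $g, g^d$ hold by Definition~\ref{def:monML semantics} and the clauses defining $\wh{E}_g$, $\wh{E}_{g^d}$. The choice cases $\gcha, \gchd$ translate directly into $\lor, \land$ and match $\cup, \cap$ in the semantics. The composition case unfolds as $\itrg{\diam{\delta}\varphi}{\gamma}(\itrg{\varphi}{\delta}(A))$; by the inductive hypothesis on $\delta$ the inner formula denotes $\wh{E}_\delta(\mng{A}^\bbS_h)$, and by the inductive hypothesis on $\gamma$ applied with this interpretation of the argument, the outer formula denotes $\wh{E}_\gamma(\wh{E}_\delta(\mng{A}^\bbS_h))$, matching the semantic clause for composition. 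The test cases $\psi?, \psi!$ follow from the proposition applied inductively to the smaller formula $\psi$.

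The heart of the argument, and the step I expect to require the most care, is the iteration case $\gamma^*$ (the $\cross$ case being dual). Writing $x = x^{\diam{\gamma^*}\varphi}$, by freshness $x$ does not occur free in $A$, and we compute
\begin{align*}
\mng{\itrg{\varphi}{\gamma^*}(A)}^\bbS_h
  &= \lfp\, Y.\, \mng{A \lor \itrg{\diam{\gamma^*}\varphi}{\gamma}(x)}^\bbS_{h[x \mapsto Y]} \\
  &= \lfp\, Y.\, \mng{A}^\bbS_h \cup \mng{\itrg{\diam{\gamma^*}\varphi}{\gamma}(x)}^\bbS_{h[x \mapsto Y]} \\
  &= \lfp\, Y.\, \mng{A}^\bbS_h \cup \wh{E}_\gamma(Y)
  = \wh{E}_{\gamma^*}(\mng{A}^\bbS_h),
\end{align*}
where the penultimate equality uses the inductive hypothesis on the strictly smaller game $\gamma$ (applied for each value $Y$ of the assignment to $x$). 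The dual computation for $\gamma^\cross$ uses Knaster--Tarski on the greatest fixpoint. The remaining subtlety, and the main obstacle, is to verify carefully that the lexicographic induction on (game size, formula size) is well-founded across the mutual recursion, and that the freshness conditions on tagged variables persist through the substitutions performed by $\itrg{}{}$; both follow from the syntactic observation that tags are uniquely determined by subterms of $\diam{\gamma^\circ}\varphi$, so no tag conflict can arise between the fixpoint variables introduced by the translation and those appearing free in the argument of $\itrg{}{}$.
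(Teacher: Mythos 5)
Your proposal is correct and is essentially the ``straightforward induction'' that the paper invokes without giving details: a mutual induction on formulas and game terms whose core is exactly your auxiliary lemma that $\itrg{\varphi}{\gamma}(A)$ denotes $\wh{E}_{\gamma}$ applied to the denotation of $A$, with the fixpoint cases handled via the $\lfp$/$\gfp$ clauses of the two semantics. The only refinement I would insist on is to build the freshness requirement (no free variable of $A$ tagged by a subterm of $\gamma$) explicitly into the statement of the lemma so that the $\gamma^{*}$ and $\gamma^{\cross}$ cases and the recursive call on $x^{\diam{\gamma^{\circ}}\varphi}$ go through literally --- which is precisely the subtlety you already flag and resolve correctly via the tagging of variables by game terms.
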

\begin{proof}
By a straightforward induction.
\end{proof}

\subsection{From \CloM to \CloG}
\label{sec:CloM-to-CloG}

We now show how to construct a $\CloG$-derivation of a game logic
formula $\xi$ from a $\CloM$-derivation of $\tr{\xi}$.

For this purpose, we identify the set $N_\varphi$ of names for
$\varphi \in F(\xi)$ with the set $N_{x^\varphi}$ of names for the
variable $x^\varphi \in \Var(\tr{\xi})$. This is possible since both
sets are defined to be arbitrary countable sets.
We then extend the translation $\tr{(-)}$ to annotated formulas and sequents by taking
\begin{equation*}\label{eq:named-tr}
  \tr{(\varphi^{\nam{a}})} \isdef (\tr{\varphi})^{\nam{a}}
  \quad\text{ and }\quad
  \tr{\Phi} \isdef \{ \tr{(\phi^{\nam{a}})} \mid \phi^{\nam{a}} \in \Phi \}.  
\end{equation*}
That is, the translation leaves annotations unchanged.

\begin{theorem}
  \label{thm:CloM-to-CloG}
  \label{mu to gl}
For all $\xi\in\langNorm$,
if $\CloM \vdash\tr{\xi}$ then  $\CloG \vdash \xi$.
\end{theorem}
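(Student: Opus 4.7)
My plan is to transform a $\CloM[\tr{\xi}]$-proof $\pi$ of $\tr{\xi}^{\nam{\eps}}$ into a $\CloG$-proof $\pi'$ of $\xi^{\nam{\eps}}$ by recursion on $\pi$, maintaining the invariant that every sequent $\Gamma$ occurring in $\pi$ is of the form $\tr{\Phi}$ for some $\CloG$-sequent $\Phi$, and that the sub-derivation of $\pi$ above $\Gamma$ is simulated by a $\CloG$-derivation below $\Phi$. The decoding is not determined by $\tr{(-)}$ alone (it is not injective), so it has to be guided: the root $\tr{\xi}^{\nam{\eps}}$ is decoded as $\xi^{\nam{\eps}}$, and at each $\CloM$-inference the decoding of the premises is dictated by the $\CloG$-rules used to simulate it. A preliminary analyticity check is needed: every sentence in $\Cl{\tr{\xi}}$ is of the form $\tr{\phi}$ for some $\phi \in \Cl{\xi}$, so a preimage is always available.

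For the simulation itself I would proceed case by case through the $\CloM$-rules. The rules $\AxOne$, $\RuWeak$, $\RuExp$, $\RuMonMod$ transfer directly to their $\CloG$-counterparts, since $\tr{(-)}$ fixes atoms and leaves annotations unchanged. A $\CloM$-application of $\RuOr$ to a formula $(A\lor B)^{\nam{a}}$ with game-logic preimage $\phi$ is simulated by first applying $\RuComp$ as many times as needed to expose the outermost game constructor of $\phi$, then applying the matching constructor rule, $\RuChoiceA$ if $\phi=\diam{\gamma\gcha\delta}\chi$ or $\RuTestD$ if $\phi=\diam{\psi!}\chi$, and finally $\RuOr$ itself (which is the only step required if the preimage already is a disjunction). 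The rule $\RuAnd$ is handled dually via $\RuChoiceD$, $\RuTestA$, and $\RuAnd$. For the fixpoint rules, an inspection of $\tr{(-)}$ shows that $\mu$ and $\nu$ can only be introduced by translating $\gamma^*$ and $\gamma^\cross$, possibly under a composition; hence $(\mu x.A(x))^{\nam{a}}$ occurring in $\pi$ must come from $\tr{(\diam{\gamma^*}\chi)}^{\nam{a}}$, and its $\RuLFP$-unfolding is syntactically equal to $\tr{(\chi\lor\diam{\gamma}\diam{\gamma^*}\chi)}^{\nam{a}}$, so $\RuLFP$ is simulated by a few $\RuComp$-steps followed by $\RuStar$; analogously $\RuGFP$ is simulated by $\RuCross$ and $\RuNuClo_{\nam{x}}$ by $\RuClo_{\nam{x}}$, under the identification $\names[x^{\diam{\gamma^\cross}\chi}] = \names[\diam{\gamma^\cross}\chi]$.

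The main obstacle is discharging the side conditions on $\RuStar$, $\RuCross$ and especially $\RuClo_{\nam{x}}$, which are phrased in terms of the order $\leqfp$ on game-logic fixpoint formulas, whereas the side conditions of the $\CloM$-rules are phrased in terms of $\leqAL[\tr{\xi}]$ on $\mu$-calculus variables. This is precisely where Proposition~\ref{p:translation} does the work: the implication $\nam{a} \leqAL[\tr{\xi}] \nam{x} \,\Rightarrow\, \nam{a} \leqfp \nam{x}$ transfers each $\CloM$-side condition to its $\CloG$-counterpart. The secondary and essentially immediate task is to check that the discharge structure of $\RuClo_{\nam{x}}$ matches that of $\RuNuClo_{\nam{x}}$: because $\tr{(-)}$ leaves annotations untouched, a discharged assumption $[\Gamma,(\nu x.A(x))^{\nam{ax}}]^{\nam{x}}$ in $\pi$ corresponds verbatim to the assumption $[\Phi,(\diam{\gamma^\cross}\chi)^{\nam{ax}}]^{\nam{x}}$ discharged by the simulating $\RuClo_{\nam{x}}$, and the freshness conditions on $\nam{x}$ are automatically preserved.
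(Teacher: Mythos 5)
Your overall strategy is the paper's: proceed through the $\CloM[\tr{\xi}]$-proof rule by rule, decode each $\mu$-calculus sequent as a game-logic sequent, use \RuComp-steps to expose the relevant game constructor, and discharge the side conditions of \RuStar, \RuCross and $\RuClo_{\nam{x}}$ via Proposition~\ref{p:translation}. All of that matches the paper. The genuine gap is in the last step, which you dismiss as ``essentially immediate'': the claim that a discharged assumption of $\RuNuClo_{\nam{x}}$ corresponds \emph{verbatim} to an assumption dischargeable by $\RuClo_{\nam{x}}$. The translation $\tr{(-)}$ is not injective on game-logic sequents (e.g.\ $\tr{(\diam{\gamma\comp\delta}\psi)} = \tr{(\diam{\gamma}\diam{\delta}\psi)}$), and your decoding is only forced at the root; along the path from the $\RuNuClo_{\nam{x}}$-conclusion up to its discharged leaves, the context formulas are decomposed and regenerated (through fixpoint unfoldings, and through the \RuComp-normalisation your own simulation performs), so the decoded context at a leaf translating to $\Gamma$ is in general some $\Theta$ with $\tr{\Theta} = \Gamma$ but $\Theta \neq \Psi$. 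The $\CloG$ rule $\RuClo_{\nam{x}}$ requires the discharged assumption to be \emph{syntactically} $\Phi, (\diam{\gamma^\cross}\varphi)^{\nam{ax}}$ with the same side sequent $\Phi$ as the conclusion, so your construction simply cannot close the cycle at that point; only the fixpoint formula itself is guaranteed to match (because $\diam{\gamma^\cross}\varphi$ is encoded in the variable $x$), not the context.

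This is exactly where the paper's proof does its real work, and none of that machinery appears in your proposal. The paper strengthens the induction hypothesis to a statement ($\ast$) universally quantified over \emph{all} decodings $\Phi$ of the conclusion and over derivations with open assumptions; in the $\RuNuClo_{\nam{x}}$-case it then proves a claim providing, for every $\Sigma$ in the finite set $\mathcal{S}$ of decodings of $\Gamma$ and every extended annotation $\nam{a}\nam{x}_1\cdots\nam{x}_k$, a $\CloG$-derivation $\rho^{\nam{b}}_{\Sigma}$ (obtained by renaming and extra \RuExp-steps); it assembles these into a \emph{pseudo-derivation} using an auxiliary undischarged rule \RuD and fresh names $\nam{x}_\Sigma$, argues termination from the finiteness of $\mathcal{S}$, and finally eliminates the dangling \RuD-nodes, either converting them to \RuCross (erasing the name) when no repeat occurs, or inserting \RuExp-steps at the repeating leaves so that a legitimate $\RuClo_{\nam{x}_\Sigma}$ application can discharge them. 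Without some argument of this kind (or an alternative device ensuring that the decoding at the discharged leaves agrees with the decoding at the closure conclusion), the closure case of your induction does not go through, so the proof as proposed is incomplete.
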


\begin{proof}
We will prove the theorem by induction on the complexity of proof trees, and
for a proper development of the induction we need to take care of derivations 
with open branches because the \RuClo-rule allows to discharge assumptions.
We shall write $\pi: \mathcal{A} \vdash_{\CloM[C]} \Gamma$ to say that $\pi$
is a $\CloM[C]$-derivation of $\Gamma$ from assumptions in $\mathcal{A}$, and 
similarly for $\CloG$-derivations with open assumptions.

More precisely, we shall prove, by induction on the complexity of 
$\CloM$-derivations, that every $\CloM[\tr{\xi}]$-proof $\pi$ satisfies the 
following property:
\begin{center}
\fbox{\parbox{8cm}{
 for every game logic sequent $\Phi$:
    if $\pi: \mathcal{A} \vdash_{\CloM[\tr{\xi}]} \tr{\Phi}$
\\ then there is a $\CloG[\xi]$-proof $\pi': \mathcal{G} \vdash_{\CloG} \Phi$
where $\tr{\mathcal{G}} = \mathcal{A}$.
}}\hspace*{0.3cm}($\ast$)
\end{center}

Two preliminary remarks are in order before we dive into the proof details.
First, in the sequel we will often omit the annotation of formulas, for the sake 
of readability.
And second, without loss of generality we may adopt the \emph{injectivity 
assumption} stating that for each formula $A$ in $\tr{\Phi}$ there is precisely 
one formula $\varphi$ in $\Phi$ with $\tr{\varphi} = A$.
\medskip

In the base case of our proof, the derivation $\pi$ is either an application of 
the axiom \AxOne or a one-node derivation of a sequent $\tr{\Phi}$, where the 
set of assumptions of $\pi$ is the singleton set $\{ \tr{\Phi}\}$.
In both cases it is straightforward to see that the derivation $\pi'$, 
consisting of a single node labelled $\Phi$, meets the requirements stated 
in ($\ast$).
%
\medskip

For the inductive step, first observe that we may assume that none of the
formulas in $\Phi$ is of the form $\diam{\gamma \comp \delta} \psi$. 
Should $\varphi \in \Phi$ be of this form then we could apply the rule \RuComp 
and subsequently work with the formula $\diam{\gamma} \diam{\delta} \psi$, for
which it holds that $\tr{(\diam{\gamma} \diam{\delta} \psi)} =
\tr{(\diam{\gamma \comp \delta} \psi)}$. 
This can be repeated until the resulting formula is of the required shape.
\smallskip

For the proof of the inductive step, we make a case distinction as to the last 
applied rule in the $\CloM$-derivation $\pi$.

In case the last applied rule is the rule \RuAnd, then $\tr{\Phi}$ must be of 
the form $\tr{\Phi} = \Gamma, A_{0} \land A_{1}$ and the rule \RuAnd is applied
to the premises $\Gamma,A_{0}$ and $\Gamma,A_{1}$. 
By our injectivity assumption there is precisely one formula $\varphi$ in $\Phi$
such that $\Phi = \Psi, \varphi$, $\tr{\Psi} = \Gamma$ and $\tr{\varphi} = A_{0}
\land A_{1}$.
But then it follows by the definition of the translation $\tr{(-)}$ and our 
assumption on the shape of the formulas in $\Phi$ that there are three 
possibilities: either 
(i) $\varphi = \phi_{0} \land \phi_{1}$ such that $\tr{\phi_{0}} = A_{0}$ and
$\tr{\phi_{1}} = A_{1}$, or 
(ii) $\varphi = \diam{\gamma_{0} \gchd \gamma_{1}} \psi$ such that
$\tr{(\diam{\gamma_{0}}\psi)} = A_{0}$ and $\tr{(\diam{\gamma_{1}}\psi)} = A_{1}$,
or (iii)
$\phi = \diam{\psi ?}\chi$ such that $\tr{\psi} = A_{0}$ and $\tr{\chi} = A_{1}$.


The other cases being similar, we only consider case (ii).
Here we have $\CloM[\tr{\xi}]$-proofs $\pi_{0}, \pi_{1}$ of the sequents
$\tr{\Psi},\tr{\diam{\gamma_{0}} \psi}$ and 
$\tr{\Psi},\tr{\diam{\gamma_{1}} \psi}$, 
from two respective sets of assumptions $\mathcal{A}_{0}$ and $\mathcal{A}_{1}$
such that $\mathcal{A}_{0} \cup \mathcal{A}_{1} = \mathcal{A}$.
Use the induction hypothesis to obtain, for $i = 0,1$, a set $\mathcal{G}_{i}$ 
of game logic sequents such that $\tr{\mathcal{G}_{i}} = \mathcal{A}_{i}$, as 
well as a $\CloG[\xi]$-proof $\pi_{i}': \mathcal{A}_{i} \vdash \Psi,
\diam{\gamma_{i}} \psi$.
We then apply the rule \RuAnd to get a proof of the sequent 
$\Psi, \diam{\gamma_{0}} \psi \land \diam{\gamma_{1}} \psi$, followed by the 
rule \RuChoiceD to derive the sequent 
$\Phi = \Psi, \diam{\gamma_{0} \gchd \gamma_{1}} \psi$.
Finally, the set of assumptions of the resulting derivation $\pi'$ is the set
$\mathcal{G}_{0} \cup \mathcal{G}_{1}$, which clearly satisfies the condition 
that $\tr{(\mathcal{G}_{0} \cup \mathcal{G}_{1})} = \mathcal{A}$.
\smallskip

The cases where the last rule applied in $\pi$ is one of \RuOr, \RuMonMod, or
\RuWeak, are similarly easy to deal with; we omit the details.

%
%

Now consider the case where $\pi$ ends with an application of the rule
\RuLFP for a least fixpoint. We then have that $\tr{\Phi} = \Gamma, \mu
x. A(x)^{\nam{a}}$, the premise of this application of $\RuLFP$ is the sequent
$\Gamma, A(\mu x. A(x))^{\nam{a}}$, and the side condition $\nam{a} 
\leqAL[\tr{\xi}] \var{x}$ is fulfilled. 
As explained above we can assume that there is a single formula $\varphi$ in 
$\Phi$ such that $\Phi = \{\Psi, \varphi\}$,
$\tr{\phi}{\Psi} = \Gamma$ and $\tr{\varphi} = \mu x. A(x)$.
As we have already excluded the possibility that $\varphi$ is a modality whose
main operator is the composition it follows from the definition of the 
translation $\tr{(-)}$ that $\varphi = \diam{\gamma^*} \psi$ such that 
$A(x) = \tr{\psi} \lor \itrg{\varphi}{\gamma}(x)$.
Note that $x = x^{\diam{\gamma^*}\psi}$ by definition of the translation
$\tr{(-)}$.
Some further calculations show that
\[
A(\mu x . A(x)) 
  = \tr{\psi} \lor \itrg{\varphi}{\gamma}(\mu x . A(x)) 
  = \tr{(\psi \lor \diam{\gamma} \diam{\gamma^*} \psi)}.
\]
We can thus apply the induction hypothesis to obtain a $\CloG[\xi]$-proof of the
sequent $\Psi, \psi \lor \diam{\gamma} \diam{\gamma^*} \psi$, from a 
set of assumptions $\mathcal{G}$ satisfying $\tr{\mathcal{G}} = \mathcal{A}$.
We then want to use the rule \RuStar to obtain a proof of
$\Phi = \Psi,\varphi= \Psi, \diam{\gamma^*} \psi$ from the same set $\mathcal{G}$ of assumptions.
To do so we need to ensure that the side condition $\nam{a} \leqfp[\xi]
\diam{\gamma^*} \psi$ is satisfied.
Hence consider any name $\nam{y}$ that
occurs in $\nam{a}$ and let $\chi$ be the fixpoint formula such that $\nam{y} 
\in N_\chi$.
From the side condition $\nam{a} \leqAL[\tr{\xi}] x$ it follows that $\nam{y}
\leqAL[\tr{\xi}] x$, and then from 
Proposition~\ref{p:translation}
that $\chi \leqfp[\xi] \diam{\gamma^*} \psi$, and hence we obtain the required
$\nam{y} \leqfp[\xi] \diam{\gamma^*} \psi$.
\smallskip

If the last rule applied in $\pi$ is the fixpoint rule \RuGFP for the
greatest fixpoint then we can use a similar argument as in the paragraph
using \RuCross instead of \RuStar.
\medskip

Finally, consider the case where the last rule applied in $\pi$ is 
$\RuNuClo_{\nam{x}}$ 
for some name $\nam{x}$, discharging the assumption $\Omega = \Gamma, \nu x. A(x)^{\nam{{ax}}}$. 
We then may observe that $\tr{\Phi} = \Gamma, \nu x. A(x)^{\nam{a}}$, that the premise 
of this application of $\RuNuClo_{\nam{x}}$ is the sequent $\Gamma, A(\nu x. A(x))^{\nam{{ax}}}$, 
and that the side conditions $\nam{a} \leqAL[\tr{\xi}] \nam{x}$ and
$\nam{x} \not\in \Gamma,\nam{a}$ are fulfilled. 
As explained above we can assume that
there is a single formula $\varphi$ in $\Phi$ such that $\Phi = \{\Psi,
\varphi\}$, $\tr{\Psi} = \Gamma$ and $\tr{\varphi} = \nu x. A(x)$.
And, similar to the case of the rule \RuLFP discussed above,
we may assume that 
$\varphi = \diam{\gamma^\cross} \psi$ for $\gamma$ and $\psi$ such that $A(x) = 
\tr{\psi} \land \itrg{\varphi}{\gamma}(x)$, and that  
\[
A(\nu x . A(x)) 
  = \tr{\psi} \land \itrg{\varphi}{\gamma}(\nu x . A(x)) 
  = \tr{(\psi \land \diam{\gamma} \diam{\gamma^\cross} \psi)}.
\]
We then apply the induction hypothesis and obtain a $\CloG$-derivation of the 
premise of the $\RuNuClo_{\nam{x}}$-rule from assumptions $\mathcal{G} \cup 
\mathcal{G'}$, where each sequent in $\mathcal{G'}$ translates to $\Omega$ (the
assumption discharged by the application of the $\RuNuClo_{\nam{x}}$-rule with
conclusion $\tr{\Phi}$).
It follows that every sequent in $\mathcal{G'}$ must be of the form $\Theta,
\phi_0^{\nam{ax}}$ with $\tr{\Theta} = \Gamma$ and $\tr{\phi_0} = \nu x. A(x)$.
\rem{In addition it follows from the shape of its annotation (viz., $\nam{ax}$), that 
$\phi_0$ belongs to the closure of the formula $\phi = \diam{\gamma^\cross}\psi$.
But since we encode the formula $\diam{\gamma^\cross}\psi$ into the fixpoint
variable of its translation, this can only be the case if $\phi_0$ is actually
syntactically identical to $\phi$.}
From $\tr{\phi_0} = \nu x. A(x) = \tr{\phi}$ it follows that $\phi_0 = \phi$ (syntactically),
since we encode the formula $\phi = \diam{\gamma^\cross}\psi$ into the fixpoint
variable $x$ of its translation.
That is, we may take $\mathcal{G'} =  \{ \Theta, \diam{\gamma^{\cross}}
\psi^{\nam{{ax}}} \mid \Theta \in \mathcal{L} \}$ for some set $\mathcal{L}$ 
with $\tr{\mathcal{L}_{\Sigma}} = \{ \Gamma \}$.
Note, however, that the sequents in $\mathcal{L}$ will generally not be 
identical to $\Psi$, which means that we cannot simply finish our proof with an
application of the $\RuNuClo_{\nam{x}}$-rule of the $\CloG$-system here. We need 
a more elaborate construction.

In fact we need to generalise the statement about $\Psi$ and $\nam{a}$ to the
observation below, where we let $\mathcal{S}$ be the (finite!) set of game logic
sequents $\Sigma$ such that $\tr{\Sigma} = \Gamma$.

\begin{claimfirst}
\label{cl:dag}
For every $\Sigma \in \mathcal{S}$ there are game logic sequents 
$\mathcal{G}_{\Sigma}$ and $\mathcal{L}_{\Sigma}$ such that 
(\dag 1) 
$\tr{\mathcal{G}_{\Sigma}} = \mathcal{A}$, 
   $\tr{\mathcal{L}_{\Sigma}} = \{ \Gamma \}$
   and
(\dag 2) 
{for every $\nam{b} = \nam{ax}_{1}\cdots \nam{x}_{k}$, 
   with $\nam{x}_{1},\ldots, \nam{x}_{k}$ names for $\diam{\gamma^{\cross}}\psi$, 
   there is a $\CloG[\xi]$-proof }
\[\rho^{\nam{{b}}}_{\Sigma}: \, 
   \mathcal{G}_{\Sigma} \cup \{ \Theta, \diam{\gamma^{\cross}} 
   \psi^{\nam{b}} 
      \mid \Theta \in \mathcal{L}_{\Sigma} \}
   \;\vdash\; 
   \Sigma, (\psi \land \diam{\gamma} \diam{\gamma^{\cross}} 
   \psi)^{\nam{b}}
\]
\end{claimfirst}
\begin{pfclaim}
Fix a sequent $\Sigma \in \mathcal{S}$.
Repeating the argument that we just gave and that is directly based on 
the inductive hypothesis, we obtain sets of game logic sequents 
$\mathcal{G}_{\Sigma}$ and $\mathcal{L}_{\Sigma}$ satisfying condition (\dag1),
together with a $\CloG[\xi]$-derivation $\rho^{\nam{ax}}_{\Sigma}$ of the
sequent 
$\Sigma, (\psi \land \diam{\gamma} \diam{\gamma^{\cross}} \psi)^{\nam{{ax}}}$
from the assumptions 
$\mathcal{G}_{\Sigma} \cup \{ \Theta, \diam{\gamma^{\cross}} \psi^{\nam{{ax}}}
\mid \Theta \in \mathcal{L}_{\Sigma} \}$.

Now consider an annotation $\nam{b} = \nam{a}\ol{\nam{x}}$, where 
$\ol{\nam{x}} = \nam{x}_{1}\cdots \nam{x}_{k}$, with $k \geq 1$.
We will transform the derivation $\rho^{\nam{{ax}}}_{\Sigma}$ into the desired
derivation $\rho^{\nam{{b}}}_{\Sigma}$ in two stages.
First, we simply replace every occurrence of $\nam{x}$ as (part of) an annotation 
in $\rho^{\nam{{ax}}}_{\Sigma}$ with $\ol{\nam{x}}$.
This transforms $\rho^{\nam{{ax}}}_{\Sigma}$ into a structure $\rho'$ which is 
\emph{almost} a proper $\CloG[\xi]$-proof.
The only problem concerns applications in $\rho^{\nam{{ax}}}_{\Sigma}$ of the 
\emph{expansion} rule \RuExp of the form $\Delta, \phi^{\nam{{cd}}} / 
\Delta,\phi^{\nam{{cxd}}}$.
In $\rho'$ we may not be allowed to derive $\Delta,\phi^{\nam{{c\ol{\nam{x}}d}}}$ from
$\Delta,\phi^{\nam{{cd}}}$ by one application of the expansion rule,
but we can easily take care of this problem in the second state of the 
construction, namely by deriving $\Delta,\phi^{\nam{{c\ol{\nam{x}}d}}}$ from 
$\Delta, \phi^{\nam{{cd}}}$ by a \emph{series} of applications of the expansion
rule.
This finishes the proof of the claim.
\end{pfclaim}

We will use derivations of the form $\rho^{\nam{{b}}}_{\Sigma}$ as 
\emph{building blocks} for our $\CloG[\xi]$-derivation of the sequent $\Psi, 
\diam{\gamma^{\cross}} \psi^{\nam{{a}}}$.
The idea is to first build up, step by step, a 
\emph{pseudo-derivation} of $\Psi, \diam{\gamma^{\cross}} \psi^{\nam{{a}}}$ which 
differs from a proper $\CloG[\xi]$-proof in that not all assumptions of prospective 
applications of the $\RuClo$-rule are discharged.
Once we have completed the construction of this pseudo-derivation, we transform
it into a proper $\CloG[\xi]$-proof by taking care of these undischarged assumptions.
To do this in a proper way we need to be precise about the annotations, and we
need to introduce some auxiliary definitions.

Most importantly, we define a \emph{pseudo-derivation} to be a proof in the 
derivation system $\CloG[\xi]$ extended with the derivation rule \RuD:
\begin{prooftree}
 \AxiomC{$\Phi, (\varphi \land \diam{\gamma} 
    \diam{\gamma^\cross}\varphi)^{\nam{{a}\nam{x}}}$}
 \LeftLabel{$\nam{a} \leqfp[\xi] 
    \nam{x} \in N_{\diam{\gamma^\cross} \varphi}, 
    \nam{x} \notin \Phi,\nam{a}$}
 \RightLabel{$\RuD_{\nam{x}}$}
 \UnaryInfC{$\Phi, (\diam{\gamma^\cross} \varphi)^{\nam{a}}$}
\end{prooftree}
Clearly, \RuD is identical to the rule \RuClo, apart from the fact that it does
not require that the assumptions of the form $\Psi, \diam{\gamma^\cross} 
\varphi^{\nam{{ax}}}$ in the proof tree leading up to the premise of \RuD are 
discharged.
We shall call a node $t$ in a proof tree \emph{dangling} if the rule applied at
$t$ is \RuD.
Observe that a pseudo-derivation is a proper $\CloG[\xi]$-derivation just in case it
has no dangling nodes. 

We now construct a pseudo-derivation for the sequent $\Psi, \diam{\gamma^{\cross}}
\psi^{\nam{{a}}}$.
We shall make use of a set $\{ \nam{x}_{\Sigma} \mid \Sigma \in \mathcal{S} \}$ of
special, fresh names, all associated with the fixpoint formula 
$\diam{\gamma^\cross}\psi$.
Our starting point of the construction is the one-node derivation consisting 
of the sequent $\Psi, \diam{\gamma^{\cross}} \psi^{\nam{{a}}}$.

Now suppose that the current approximation $\sigma$ of the pseudo-derivation 
contains an assumption of the form $\Sigma, \diam{\gamma^{\cross}} \psi^{\nam{{b}}}$, 
where $\Sigma \in \mathcal{S}$ and the annotation $\nam{b}$ is of the form
$\nam{b} = \nam{a} \overline{\nam{x}}$ with $\nam{x}_{\Sigma}$ \emph{not} 
occurring in the sequence 
$\overline{\nam{x}} = \nam{x}_{\Sigma_{1}}\cdots \nam{x}_{\Sigma_{k}}$.
By our Claim~\ref{cl:dag}, we may take a $\CloG[\xi]$-proof
$\rho_{\Sigma}^{\nam{b}\nam{x}_{\Sigma}}$ of the sequent
$\Sigma, (\psi \land \diam{\gamma} \diam{\gamma^{\cross}} \psi)^{\nam{{bx}}_{\Sigma}}$
from the assumptions $\mathcal{G}_{\Sigma} \cup \{ \Theta, \diam{\gamma^{\cross}}
\psi^{\nam{{bx}}_{\Sigma}} \mid \Theta \in \mathcal{L}_{\Sigma} \}$.
We adjoin copies of the derivation $\rho_{\Sigma}^{\nam{b}\nam{x}_{\Sigma}}$ 
to the derivation tree, linking each leaf in the current approximation $\sigma$ 
which is labelled as indicated, to the root of a copy of 
$\rho_{\Sigma}^{\nam{{b}\nam{x}_{\Sigma}}}$
through an application of the rule $\RuD_{\nam{x}_\Sigma}$.

The above construction must terminate after finitely many steps, basically as
a consequence of the fact that the set $\mathcal{S}$ is finite.
Let $\rho$ denote the pseudo-derivation that we arrive at in this way, and let
$\mathcal{G}$ be the set of assumptions of $\rho$ that belong to the set
$\bigcup \{ \mathcal{G}_{\Sigma} \mid \Sigma \in \mathcal{S}\}$; 
clearly then we have that $\tr{\mathcal{G}} = \mathcal{A}$.

It is not difficult to verify that the pseudo-derivation $\rho$ satisfies the
following conditions:
\begin{enumerate}
\item 
All leaves of $\rho$ are labelled with an axiom, a sequent from $\mathcal{G}$, 
or else a sequent of the form 
$\Sigma, \diam{\gamma^{\cross}} \psi^{\nam{{b}}}$, where $\Sigma \in \mathcal{S}$
and the annotation $\nam{b}$ is of the form $\nam{b} = \nam{a} 
\nam{x}_{\Sigma_{1}}\cdots \nam{x}_{\Sigma_{k}}$,
with $\Sigma_{1} = \Psi$, $\Sigma \in \{ \Sigma_{1}, \ldots, \Sigma_{k} \}$,
and the $\Sigma_{i}$ are all distinct.
\item
If a leaf $l$ is labelled $\Sigma, \diam{\gamma^{\cross}} \psi^{\nam{{b}}}$, where
$\nam{b} = \nam{a} \nam{x}_{\Sigma_{1}}\cdots \nam{x}_{\Sigma_{k}}$, then the path from the root $r$ of 
$\rho$ to $l$ passes through nodes $r = t_{1},\ldots, t_{k}$, in that order, 
such that 
(a) every $t_{j}$ is either dangling or the conclusion of an application of 
the \RuClo-rule, and
(b) the name $\nam{x}_{\Sigma_{i}}$ was introduced at the successor of $t_{i}$.
\item 
If $t$ is a dangling node of $\rho$, labelled, say, with the sequent $\Sigma, 
\diam{\gamma^{\cross}} \psi^{\nam{{b}}}$, and $l$ is a leaf above $t$ labelled
with $\Sigma, \diam{\gamma^{\cross}} \psi^{\nam{c}}$, then $\nam{bx}_{\Sigma}$ is an initial 
segment of $\nam{c}$.
\end{enumerate}

Step by step we will now transform this pseudo-derivation into a proper 
$\CloG[\xi]$-derivation. 
Clearly it suffices to prove that we can turn any pseudo-derivation 
satisfying the conditions 1) -- 3) into a pseudo-derivation that still satisfies 
mentioned conditions, but has a smaller number of dangling nodes.

So let $\sigma$ be such a pseudo-derivation, and pick a dangling node, say, $t$,
that has maximal distance to the root; this means in particular that there are 
no dangling nodes above $t$. 
Let $t$ and its successor be labelled with, respectively, the sequents
$\Sigma, \diam{\gamma^{\cross}} \psi^{\nam{{a}} \overline{\nam{x}}}$
and 
$\Sigma, (\psi \land \diam{\gamma} \diam{\gamma^\cross} 
  \psi)^{\nam{{a}} \overline{\nam{x}}\nam{x}_{\Sigma}}$, 
and let $L_{t}$ be the set of leaves above $t$ that are labelled with a sequent
of the form $\Sigma, \diam{\gamma^{\cross}} \psi^{\nam{{b}}}$.
Now make a case distinction.

If $L_{t}$ is empty, the pseudo-derivation does not record a proper circular
dependency at $t$, so to speak. 
This is in fact the simplest case: we obtain a pseudo-derivation $\sigma'$ from 
$\sigma$ by (a) replacing $\RuD_{\nam{x}}$ with $\RuCross$ as the rule applied
at $t$, and (b) simply erasing all occurrences of the name $\nam{x}_{\Sigma}$ in the
pseudo-derivation above $t$.

If $L_{t}$ is non-empty, consider an arbitrary leaf $l$ in $L_{t}$, and
let $\Sigma, \diam{\gamma^{\cross}} \psi^{\nam{{b}}_{l}}$ be the sequent labelling $l$.
It follows from condition 3 that $\nam{b}_{l}$ is of the form
$\nam{a} \overline{\nam{x}} \nam{x}_{\Sigma} \nam{c}_{l}$ for some sequence
$\nam{c}_{l}$.
Now, extend $\sigma$ to $\sigma'$ by attaching a successor $l'$ to each $l\in L_{t}$
(so $l'$ is a leaf in $\sigma'$, but $l$ is not),
and label each such $l'$ with the sequent 
$\Sigma, \diam{\gamma^\cross} \psi^{\nam{{a}} \overline{\nam{x}}\nam{x}_{\Sigma}}$,
so that we may obtain the sequent of $l$ from that of $l'$ by applications of 
the expansion rule.
We then obtain the desired pseudo-derivation from $\sigma'$ 
by discharging the assumption
$\Sigma, \diam{\gamma^\cross} \psi^{\nam{a} \overline{\nam{x}}\nam{x}_{\Sigma}}$
at every leaf $l'$ with $ l \in L_t$, 
and simultaneously changing the proof rule applied at node $t$ into a 
(now legitimate) application of the $\RuClo_{\nam{x}_\Sigma}$-rule.

In both cases it is not hard to verify that the structure $\sigma'$ is in fact a
(pseudo-)derivation satisfying the clauses 1) -- 3), that the node $t$ is not a 
dangling node of $\sigma'$, and that the transformation of $\sigma$ into 
$\sigma'$ has not created any new dangling node. 

Finally, as a result of these transformations we obtain, as required, a 
$\CloG[\xi]$-derivation of the sequent $\Phi, \diam{\gamma^{\cross}}\psi$ from
the collection of assumptions $\mathcal{G}$ for which we already saw that 
$\tr{\mathcal{G}} = \mathcal{A}$.

This finishes the proof of Theorem~\ref{mu to gl}, since the \RuClo-rule 
was the last rule to be considered in the induction step.
\end{proof}

\section{Soundness and Completeness via Transformations}
\label{sec:s-c}

We now prove the soundness and completeness
of the proof systems
$\GLms$ (Theorem~\ref{thm:compGLms}) and
$\CloG$ (Theorem~\ref{thm:sound-complete-CloG})
as well as the completeness of $\HilbP$ (Theorem~\ref{t:par-completeness}).
We do this
using the translations and transformations we introduced earlier.
An overview is given by the following diagram.
Here, $\Clo$ is the system from \cite{AfshariLeigh:LICS17}, and 
$\langNormAnn$, $\langMAnn$ and $\langBimodAnn$ denote, respectively, the sets of 
annotated formulas of $\langNorm$, $\langM$ and $\langBimod$.
 
\[\xymatrix@R=2mm@C=6mm{
  \HilbP
  & \ar[l]_-{\text{Thm~\ref{thm:GLms-to-HilbP}}} \GLms
  & \ar[l]_-{\text{Thm.~\ref{thm:CloG-to-KozenG}}} \CloG
  & \ar[l]_-{\text{Thm.~\ref{thm:CloM-to-CloG}}} \CloM
  & \ar[l] \Clo
  \\
  \langPar \ar@<2pt>[r]^-{\nf{-}}
  & \ar@<2pt>[l]^-{\pari{-}} \langNorm
  & \ar[l]_-{\startr{}{(-)}} \langNormAnn \ar[r]^-{\tr{(-)}}
  & \langMAnn \ar[r]^-{(-)^t}
  & \langBimodAnn
}\]

The completeness of $\CloG$ and $\GLms$ is obtained from the completeness of $\CloM$,
and the fact that Proposition~\ref{p:adequacy translation} implies that
the translation $\tr{(-)}$ preserves validity over game models.
Hence for all $\xi \in \langNorm$ we find (\dag)
\[
\models \xi
\Ra[\text{Prop.~\ref{p:adequacy translation}}]
\models \tr{\xi}
\Ra[\text{Thm.~\ref{thm:CloM-complete}}]
\CloM \vdash \tr{\xi}
\Ra[\text{Thm.~\ref{thm:CloM-to-CloG}}]
\CloG \vdash \xi
\Ra[\text{Thm.~\ref{thm:CloG-to-KozenG}}]
\GLms \vdash \xi
\]

From the completeness of $\GLms$, we obtain the completeness
of $\HilbP$ as follows.
For all $\phi \in \langPar$,  we have
  \[
  \models \phi
  \quad\LRa[\text{Prop.~\ref{p:trl}}]\quad \models \nf{\phi}
  \quad\Ra[\text{(\dag)}]\quad
  \GLms \vdash \nf{\phi}
  \Ra[\text{Thm~\ref{thm:GLms-to-HilbP}(1)}]
  \HilbP \vdash \phi
  \]
  
To prove the soundness of $\GLms$  and $\CloG$,
let $\xi\in\langNorm$. We then have,
  \[
  \CloG \vdash \xi
  \Ra[\text{Thm~\ref{thm:CloG-to-KozenG}}]
  \GLms \vdash \xi
  \Ra[\text{Thm~\ref{thm:GLms-to-HilbP}(2)}]
  \HilbP \vdash \pari{\xi}
  \]
By the soundness of $\HilbP$, it follows that $\pari{\xi}$ is valid over game
models, and since $\pari{\xi}$ is equivalent with $\xi$ by 
Proposition~\ref{p:trl}, also $\xi$ is valid over game models.

\section{Conclusion}
\label{sec:conclusion}

In this paper we introduced two cut-free sequent calculi for Parikh's
game logic and established their soundness and completeness. From this
result, we also obtained completeness of the original Hilbert-style
proof system for game logic. This confirms a conjecture made by Parikh
in \cite{Parikh85}.
The completeness of these two systems was obtained
by translating game logic into the monotone $\mu$-calculus,
for which we also gave a cut-free sequent calculus that
we showed to be sound and complete.
\rem{For instance, we showed that their annotated proof system for the modal
$\mu$-calculus can be adapted to yield a complete system for the monotone 
$\mu$-calculus as well.}

\subsection{Discussion}

Our proof makes essential use of ideas and results
from Afshari and Leigh's  paper \cite{AfshariLeigh:LICS17}. In
particular, the idea of using the proof systems $\CloG$ and $\CloM$ to obtain cut-free
completeness is central here.
An important reason that our approach is possible is that
these annotated proof systems allow good control over the
structure of proofs.
In particular, formal proofs in $\CloG$ and $\CloM$ only
contain formulas that are in the Fischer-Ladner closure of the formula
at the root of the proof. This means that if the root formula of an
annotated proof is the translation of a game logic formula, then indeed
the entire proof can in a sense be carried out within game logic, modulo
the translation.
Also, the annotations provide a powerful machinery for
keeping track of unfoldings of fixpoint formulas along traces in a proof
tree. This is crucial in order to decide where to apply the
strengthened induction rule when we construct cut-free sequent proofs
from annotated ones.
\rem{For this step, we make use of our observation that
the strengthened induction rule can be formulated in game logic by
making use of the (dual) test operator. 
}

\subsection{Future research}

Completeness for fixpoint logics is generally considered to be difficult
as witnessed by the long wait for a completeness proof for the
modal $\mu$-calculus \cite{koze:resu83,walu:comp93}
and game logic.
Our work demonstrates that the techniques from  Afshari \&
Leigh~\cite{AfshariLeigh:LICS17}
can be transferred to other fixpoint logics, and we expect that it is the
beginning of a fruitful line of research into cut-free complete proof systems
for fixpoint logics.

\rem{
We see this paper both as an application and a continuation of Afshari and
Leigh's work, contributing to it as well as building upon it. 
In particular, we hope to have shown how their techniques fit in a wider setting,
pointing towards more applications in the future and towards a uniform approach 
to axiomatization of fixpoint logics. 
}

More generally, we believe this approach can be used to provide cut-free 
complete proof systems for
coalgebraic $\mu$-calculi~\cite{cirs:expt11, enqvist2018completeness}, and
for coalgebraic dynamic logics~\cite{HK:FICS15}.
Also, there are many fragments of the modal $\mu$-calculus that could be studied
by similar techniques. 
As one example, it would be interesting to develop annotated proof systems for
CTL$^*$, and see if this could help to simplify Reynold's axiomatization  of
CTL$^*$ \cite{reyn:axio01}.
It should also be checked whether our proof can be adapted to provide a cut-free
complete proof system for PDL. 
An indication that this is possible is that the deep rules in our system \GLms are reminiscent of display 
calculi. The latter have been successfully applied to obtain a complete proof system for PDL~\cite{Fritella2016:DisplayPDL}.

Going the opposite direction, similar techniques could potentially be applied to extensions of the $\mu$-calculus, such as the two-way $\mu$-calculus \cite{vardi1998reasoning}, hybrid $\mu$-calculus \cite{sattler2001hybrid}, and alternating $\mu$-calculus \cite{Alur:2002:ATL}.

Finally, we would like to investigate applications of our cut-free proof systems
for game logic to prove interpolation. 

\section*{Acknowledgements}

We would like to thank the anonymous LICS 2019 reviewers for their helpful comments and suggestions. 
We acknowledge financial support from Swedish Research
Council grant 2015-01774, EPSRC grant EP/N015843/1,
and ERC consolidator grant 647289 CODA.

\bibliographystyle{IEEEtran}


\newpage

\appendix
\subsection{Omitted proofs of Section \ref{sec:two-sys}}


\noindent
\textbf{Proposition~\ref{p:trl}}
There are recursively defined, truth-preserving translations
\[\begin{array}{lll}
   \trParNorm{-} \colon & \langFull \to \langNorm
\\ \trNormPar{-} \colon & \langFull \to \langPar
\end{array}\]

Below we give an explicit definition of the translations $\trParNorm{-}$ and 
$\trNormPar{-}$, leaving it for the reader to check that the translations land 
in the proper fragments and are truth-preserving.

Translating from $\langPar$ to $\langNorm$ is simply taking dual and negation
normal form.

\begin{definition}\label{def:trParNorm}
We define the translation $\trParNorm{-} \colon \langFull \to \langNorm$ as 
follows:
\[\begin{array}[t]{lclcl}
  \dnnf{p} &=& p &  
  \\ \dnnf{\lnot p} &=& \lnot p & 
  \\ \dnnf{\lnot\lnot \phi} &=& \dnnf{\phi} & 
  \\ \dnnf{\lnot (\phi \lor \psi)} &=& \dnnf{\lnot\phi} \land \dnnf{\lnot\psi} 
  \\ \dnnf{\lnot (\phi \land \psi)} &=& \dnnf{\lnot\phi} \lor \dnnf{\lnot\psi} 
  \\ \dnnf{\lnot \diam{\gamma}\phi} &=& \diam{\dnnf{\gamma^d}}\dnnf{\lnot\phi} 
  \\[5mm]
    \dnnf{g} &=&  g
  \\ \dnnf{g^d} &=& g^d 
  \\ \dnnf{(\gamma^d)^d} &=& \dnnf{\gamma}
  \\ \dnnf{(\gamma \gcha \delta)^d} &=& \dnnf{\gamma^d} \gchd \dnnf{\delta^d} 
  \\ \dnnf{(\gamma \gchd \delta)^d} &=& \dnnf{\gamma^d} \gcha \dnnf{\delta^d} 
  \\ \dnnf{(\gamma ; \delta)^d} &=& \dnnf{\gamma^d} ; \dnnf{\delta^d} 
  \\ \dnnf{(\iter{\gamma})^d} &=& \diter{\dnnf{\gamma^d}} 
  \\ \dnnf{(\diter{\gamma})^d} &=& \iter{\dnnf{\gamma^d}} 
  \\ \dnnf{(\test{\phi})^d} &=& \dtest{\dnnf{\lnot\phi}} 
  \\ \dnnf{(\dtest{\phi})^d} &=& \test{\dnnf{\lnot\phi}} 
\end{array}
  \]
\end{definition}

We translate from $\langNorm$ to $\langPar$ by expanding demonic operations as
dual angelic ones.

\begin{definition}\label{def:trNormPar}
We define the translation $\trNormPar{-} \colon \langFull \to \langPar$ as
follows:
\[\begin{array}[t]{lclcl}
  \pari{p} &=& p &  
  \\ \pari{\lnot p} &=& \lnot p & 
  \\ \pari{\phi \land \psi} &=& \lnot(\lnot\pari{\phi} \lor \lnot\pari{\psi}), 
  \\ \pari{\phi \lor \psi} &=& \pari{\phi} \lor \pari{\psi} 
  \\ \pari{\diam{\gamma}\phi} &=& \diam{\pari{\gamma}}\pari{\phi}  
  \\[5mm]
    \pari{g} &=&  g
  \\ \pari{\gamma^d} &=& \gamma^d 
  \\ \pari{\gamma \gcha \delta} &=& \pari{\gamma} \gcha \pari{\delta} 
  \\ \pari{\gamma \gchd \delta} &=& (\pari{\gamma}^d \gcha \pari{\delta}^d)^d, 
  \\ \pari{\gamma ; \delta} &=& \pari{\gamma} ; \pari{\delta} 
  \\ \pari{\iter{\gamma}} &=& \iter{\pari{\gamma}} 
  \\ \pari{\diter{\gamma}} &=& (\iter{(\pari{\gamma}^d)})^d 
  \\ \pari{\test{\phi}} &=& {\test{\pari{\phi}}} 
  \\ \pari{\dtest{\phi}} &=& (\test{(\lnot\pari{\phi})})^d 
\end{array}
  \]
\end{definition}
\smallskip


\subsection{Omitted proofs of Section \ref{sec:annot-sys}}
\subsubsection{Lemmas~\ref{lem:bulltr-of-cross}~and~\ref{lem:bulltr-of-star}}

We will work towards proving Lemmas~\ref{lem:bulltr-of-cross}~and~\ref{lem:bulltr-of-star}.
To this end, we introduce some auxiliary notions. For $\nam{a} \in N^*$
and $\Gamma \subseteq F^\cross$, let $\restr{\nam{a}}{\Gamma}$ denote
the subsequence of $\nam{a}$ of all names $\nam{x}$ in $\nam{a}$ such
that $\nam{x} \in \names[\phi]$ for $\phi \in \Gamma$. Similarly, for $X
\subseteq \names$ we write $\restr{\nam{a}}{X}$ to denote the
subsequence of $\nam{a}$ consisting of names from $X$. By minor abuse of
notation, we may write $\nam{a} \subseteq X$ to indicate that all names
occurring in $\nam{a}$ are in $X$.

We define the set $\Scross(\phi) \subseteq F^\cross(\phi)$
of \emph{surface level greatest fixpoints} of $\phi \in \langNorm$ as follows.
\[
\begin{aligned}[t]
 \Scross(p) &=  \emptyset \\
 \Scross(\neg p) &=  \emptyset \\
 \Scross(\phi \land \psi) &=  \Scross(\phi) \cup \Scross(\psi) \\
 \Scross(\phi \lor \psi) &=  \Scross(\phi) \cup \Scross(\psi) \\
 \Scross(\diam{\gamma} \phi) &=  \Scross(\gamma,\phi) \cup \Scross(\phi)
\end{aligned}
\;\;\;
\begin{aligned}[t]
 \Scross(g,\phi) &=  \emptyset \\
 \Scross(g^d,\phi) &=  \emptyset \\
 \Scross(\gamma^*,\phi) &=  \emptyset \\
 \Scross(\gamma^\cross,\phi) &=  \{\diam{\gamma^\cross} \phi\}  \\
 \Scross(\psi?,\phi) &=  \Scross(\psi) \\
 \Scross(\psi!,\phi) &=  \Scross(\psi)
\end{aligned}
\]
\[\begin{aligned}[t]
 \Scross(\gamma \gcha \delta,\phi) &=  \Scross(\gamma, \phi) \cup
\Scross(\delta,\phi) \\
 \Scross(\gamma \gchd \delta,\phi) &=  \Scross(\gamma, \phi) \cup
\Scross(\delta,\phi) \\
 \Scross(\gamma \comp \delta,\phi) &=  \Scross(\gamma, \diam{\delta} \phi) \cup
\Scross(\delta,\phi) \\
\end{aligned}
\]
In other words, $\Scross(\phi)$ are those $\diam{\gamma^\cross}\psi$ in $\Fcross(\phi)$ such that $\gamma^\cross$ 
is not a direct subterms of another (angelic or demonic) iteration operator.
For example,
$\Scross(\diam{g^\cross \comp h^\cross}p) = \{ \diam{g^\cross}\diam{h^\cross}p, \diam{h^\cross}p\}$,
$\Scross(\diam{g^\cross \gcha h^\cross}p) = \{ \diam{g^\cross}p, \diam{h^\cross}p\}$,
$\Scross(\diam{(g^\cross})^*p) = \emptyset$, and
$\Scross(\diam{(\diam{g^\cross}p?)^\cross}q) = \{\diam{ (\diam{g^\cross}p?)^\cross}q \} $.
Note that $\diam{\gamma^\cross}\psi \in \Scross(\phi)$ does not imply that 
$\diam{\gamma^\cross}\psi \subterm \phi$.

\begin{lemma}
\label{lem:restriction}
For all annotations $\nam{a} \in \names^*$, formulas $\phi\in \langNorm$, game terms $\gamma \in \gameNorm$ and $X \subseteq \names$:
\begin{enumerate}
\item If $\restr{\nam{a}}{\Scross(\phi)} \subseteq X $, then
$\startr{\nam{a}}{\phi} = \startr{\restr{\nam{a}}{X}}{\phi}$.
\item 
  If $\restr{\nam{a}}{\Scross(\gamma,\phi)} \subseteq X $, then
$\startrg{\nam{a}}{\gamma}{\phi} = \startrg{\restr{\nam{a}}{X}}{\gamma}{\phi}$.
\end{enumerate}
\end{lemma}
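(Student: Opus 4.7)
The plan is to establish (1) and (2) simultaneously by induction on the complexity of $\phi$ and $\gamma$. The guiding intuition is that the bullet translation only genuinely consults the annotation $\nam{a}$ when it rewrites a subgame of the form $\delta^\cross$ --- it uses the names of $\diam{\delta^\cross}\phi$ in $\nam{a}$ to assemble the sequence $\vec{\chi}$ of test formulas --- while in every other clause the annotation is either ignored or passed unchanged to syntactic subparts. Matching this, the function $\Scross$ is defined so that it decomposes additively over all connectives except the iterators, and satisfies $\Scross(\delta^\cross,\phi) = \{\diam{\delta^\cross}\phi\}$. With these observations aligned, the induction essentially writes itself.

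For part~(1), the atomic cases are trivial, and for $\psi \lor \chi$, $\psi \land \chi$ and $\diam{\gamma}\psi$ the set $\Scross$ decomposes as $\Scross(\psi) \cup \Scross(\chi)$, respectively $\Scross(\gamma,\psi) \cup \Scross(\psi)$, so the hypothesis on $X$ transfers to each syntactic component and the inductive hypothesis (of part~(1) or~(2), as appropriate) closes the case. For part~(2), the clauses for $g$, $g^d$ and $\gamma^*$ produce a game independent of $\nam{a}$; the clauses for $\psi?$ and $\psi!$ reduce to part~(1) on the strictly smaller formula $\psi$; and the clauses for $\gcha$, $\gchd$ and composition follow from the inductive hypotheses together with the additive decomposition of $\Scross(-,\phi)$, where for composition one uses $\Scross(\gamma \comp \delta,\phi) = \Scross(\gamma,\diam{\delta}\phi) \cup \Scross(\delta,\phi)$.

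The only non-routine case is $\gamma = \delta^\cross$, and this is the step I expect to be the main (conceptual) obstacle. The hypothesis $\restr{\nam{a}}{\Scross(\delta^\cross,\phi)} \subseteq X$ unfolds to the statement that every name in $\nam{a}$ drawn from $\names[\diam{\delta^\cross}\phi]$ lies in $X$. Hence the ordered subsequence of $\nam{a}$ consisting of names for $\diam{\delta^\cross}\phi$ is preserved verbatim in $\restr{\nam{a}}{X}$. Since the definition of $\startrg{\nam{a}}{\delta^\cross}{\phi}$ depends on $\nam{a}$ only through this subsequence (via the associated sequence $\vec{\chi} = \chi_{\nam{x}_1},\ldots,\chi_{\nam{x}_n}$ of test formulas appearing inside $\ula{\chi}$ and $\ulag{\chi}{\delta}$), both sides of the desired equation evaluate to the same game term. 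The subtle point is being confident that the bullet translation does not recursively descend into $\delta$ when building $\startrg{\nam{a}}{\delta^\cross}{\phi}$, so that names for fixpoints nested strictly inside $\delta^\cross$ play no role at this level --- which is precisely what motivates the ``surface-level'' character of $\Scross$.
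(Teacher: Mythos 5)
Your proposal is correct and follows essentially the same route as the paper's proof: a mutual induction on formulas and game terms, using the additive decomposition of $\Scross$ in all non-iterator cases, noting that $\startrg{\nam{a}}{\gamma^*}{\phi}$ ignores $\nam{a}$, and resolving the $\cross$ case by observing that $\Scross(\delta^\cross,\phi)=\{\diam{\delta^\cross}\phi\}$ forces all names for $\diam{\delta^\cross}\phi$ occurring in $\nam{a}$ into $X$, so the relevant subsequence (and hence $\vec{\chi}$) is unchanged under $\restr{\nam{a}}{X}$. This matches the paper's argument, including the key point that the translation of $\delta^\cross$ does not descend into $\delta$.
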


\begin{proof}
We prove the two claims by a mutual 
induction on the subterm-relation, 
i.e.,  for any  $t \in \formNorm \cup \gameNorm$ 
the I.H. stipulates that 
\begin{itemize}
 \item claim (1) holds for all formulas  $\psi$ such that $\psi \ssubterm t$ 
 and
 \item claim (2) holds for all game terms $\delta$ such that $\delta \ssubterm t$.
\end{itemize}
For the first claim, the only interesting induction step is for modal formulas 
$\diam{\gamma}\phi \in \formNorm$. 
Suppose  $\restr{\nam{a}}{\Scross(\diam{\gamma}\phi)} \subseteq X $. We have $\Scross(\gamma,\phi) \subseteq \Scross(\diam{\gamma}\phi)$ 
and thus $\restr{\nam{a}}{ \Scross(\gamma,\phi)} \subseteq   \restr{\nam{a}}{ \Scross(\diam{\gamma}\phi)} \subseteq X$.
By the induction hypothesis of (2) on $\gamma$ we  get:
$$\startrg{\nam{a}}{\gamma}{\phi} = \startrg{\restr{\nam{a}}{X}}{\gamma}{\phi}$$
Furthermore  $\Scross(\phi) \subseteq \Scross(\diam{\gamma}\phi)$, so we have $\restr{\nam{a}}{\Scross(\phi)} \subseteq X $. 
 So by the induction hypothesis of (1) on $\phi$ we get: $$\startr{\nam{a}}{\phi} = \startr{\restr{\nam{a}}{X}}{\phi} $$
Putting these observations together we get:
\[
\begin{array}{rcl}
\startr{\nam{a}}{(\diam{\gamma}\phi)} 
   & = & \dd{\startrg{\nam{a}}{\gamma}{\phi}} \startr{\nam{a}}{\phi} 
\\ & = & \dd{\startrg{\restr{\nam{a}}{X}}{\gamma}{\phi}} \startr{\restr{\nam{a}}{X}}{\phi} \\
& = & \startr{\restr{\nam{a}}{X}}{\diam{\gamma}\phi} \\
\end{array}
\]
as required. 

We now turn to the induction on game terms, for item (2): the atomic cases for $ \gamma =  g $ or  $\gamma = g^d$ are trivial, and the induction steps for $\gcha,\gchd$ are straightforward.  For composition suppose that $\Scross(\gamma\comp \delta,\phi) \subseteq X $. Since   $\Scross(\gamma \comp \delta,\phi)  =  \Scross(\gamma, \diam{\delta} \phi) \cup
\Scross(\delta,\phi)$ we have $\restr{\nam{a}}{\Scross(\gamma, \diam{\delta}\phi)} \subseteq X $ and $\restr{\nam{a}}{\Scross( \delta,\phi)} \subseteq X $. So the induction hypothesis on $\gamma$ and $\delta$ gives:
 \begin{eqnarray*}
\startrg{\nam{a}}{\gamma \comp \delta}{\phi} 
& = &  \startrg{\nam{a}}{\gamma}{\diam{\delta} \phi} \comp
\startrg{\nam{a}}{\delta}{\phi} \\
& = & \startrg{\restr{\nam{a}}{X}}{\gamma}{\diam{\delta} \phi} \comp
\startrg{\restr{\nam{a}}{X}}{\delta}{\phi}  \\
& = & \startrg{\restr{\nam{a}}{X}}{\gamma \comp \delta}{\phi}
\end{eqnarray*}
 For angelic tests, suppose that $\restr{\nam{a}}{\Scross(\psi?,\phi)} \subseteq X $. Since $\Scross(\psi?,\phi) = \Scross(\psi) \cup \Scross(\phi)$ we get $\restr{\nam{a}}{\Scross(\psi)} \subseteq X $. Using  the induction hypothesis of (1) on the formula $\psi$, we now get:
\begin{eqnarray*}
\startrg{\nam{a}}{\psi?}{\phi} & = & \startr{\nam{a}}{\psi}? \\
& = & \startr{\restr{\nam{a}}{X}}{\psi}? \\
& = & \startrg{\restr{\nam{a}}{X}}{\psi?}{\phi}
\end{eqnarray*}
The reasoning for $!$ is the same. 

The induction step for $*$ is trivial since  $\startrg{\nam{a}}{\gamma^*}{\phi}  = \gamma^*$ for any annotation $\nam{a}$. 
Finally, the induction step for $\cross$ is handled as follows: suppose that $\restr{\nam{a}}{\Scross(\gamma^\cross,\phi)} \subseteq X $. Let $\nam{x}_1,\ldots,\nam{x}_n$ be all the names in $\nam{a}$ for the fixpoint $\diam{\gamma^\cross}\phi$. Since $\Scross(\gamma^\cross,\phi) = \{\diam{\gamma^\cross} \phi\}$, this means that $X$ contains $\nam{x}_1,\ldots,\nam{x}_n$, and so these are also all the names for $\diam{\gamma^\cross}\phi$ in $\restr{\nam{a}}{X}$. From this it is immediate from the definition of $\startrg{-}{-}{-}$
that:
\[ \startrg{\nam{a}}{\gamma^\cross}{\phi} = \startrg{\restr{\nam{a}}{X}}{\gamma^\cross}{\phi}
\]
as required.
\end{proof}

\begin{lemma} \label{lem:Scross-subterm}
For all $\delta,\gamma \in \gameNorm$ and $\phi, \psi \in \formNorm$:
\begin{enumerate}
\item If $\diam{\delta^\cross}\psi \in \Scross(\phi)$ then $\delta^\cross \ssubterm \phi$.
\item If $\diam{\delta^\cross}\psi \in \Scross(\gamma,\phi)$ then $\delta^\cross \subterm \gamma$.
\end{enumerate}
\end{lemma}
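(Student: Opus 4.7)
The plan is to prove both items simultaneously by mutual induction on the structure of formulas and game terms. More precisely, for (1) I would induct on $\phi \in \formNorm$ and for (2) on $\gamma \in \gameNorm$, with the two inductions interleaved in the standard way: the step for $\phi = \diam{\gamma'}\phi'$ in (1) appeals to the hypothesis (2) for the strictly smaller game $\gamma'$, while the steps for $\psi?$ and $\psi!$ in (2) appeal to the hypothesis (1) for the strictly smaller formula $\psi$. This is well-founded because each appeal is to a strictly smaller subterm.

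For (1), the base cases $\phi = p$ and $\phi = \neg p$ are vacuous since $\Scross(\phi) = \emptyset$. The Boolean cases $\phi_1 \land \phi_2$ and $\phi_1 \lor \phi_2$ follow from the inductive hypothesis on $\phi_1, \phi_2$, using that $\phi_i \ssubterm \phi$ and transitivity of $\ssubterm$. The modal case $\phi = \diam{\gamma'}\phi'$ splits: if $\diam{\delta^\cross}\psi \in \Scross(\gamma', \phi')$, then (2) gives $\delta^\cross \subterm \gamma'$ and hence $\delta^\cross \ssubterm \diam{\gamma'}\phi' = \phi$; if instead $\diam{\delta^\cross}\psi \in \Scross(\phi')$, then (1) applied to $\phi'$ yields $\delta^\cross \ssubterm \phi' \ssubterm \phi$.

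For (2), the cases $\gamma = g$, $\gamma = g^d$, and $\gamma = \gamma_0^*$ are vacuous. The key case is $\gamma = \gamma_0^\cross$, where $\Scross(\gamma, \phi) = \{\diam{\gamma_0^\cross}\phi\}$, forcing $\delta^\cross = \gamma_0^\cross = \gamma$; here we get $\delta^\cross \subterm \gamma$ only reflexively, which explains why the statement of (2) uses $\subterm$ rather than the strict version $\ssubterm$. The cases $\gamma_1 \gcha \gamma_2$ and $\gamma_1 \gchd \gamma_2$ invoke (2) on $\gamma_1, \gamma_2$ directly. For $\gamma = \gamma_1 \comp \gamma_2$, depending on which summand of $\Scross(\gamma_1, \diam{\gamma_2}\phi) \cup \Scross(\gamma_2, \phi)$ the formula $\diam{\delta^\cross}\psi$ belongs to, we invoke (2) on $\gamma_1$ (with ambient formula $\diam{\gamma_2}\phi$, which is irrelevant) or on $\gamma_2$, and conclude $\delta^\cross \subterm \gamma_i \subterm \gamma$. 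Finally, for $\gamma = \psi?$ or $\gamma = \psi!$, we have $\Scross(\gamma,\phi) = \Scross(\psi)$, so (1) applied to $\psi$ gives $\delta^\cross \ssubterm \psi$, which in turn is a subterm of $\gamma$.

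There is no real obstacle here; the only subtlety worth flagging is precisely why (2) yields only $\subterm$ and not $\ssubterm$, namely the $\gamma_0^\cross$ clause, and correspondingly why this loss of strictness is harmless for (1): every application of (2) inside the proof of (1) occurs in the context of a surrounding $\diam{\gamma'}(\cdot)$, which turns a reflexive $\subterm$ into a strict $\ssubterm$. Once this bookkeeping is set up, every case is a direct unfolding of the definition of $\Scross$ together with one application of the inductive hypothesis.
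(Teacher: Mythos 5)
Your proof is correct and matches the paper's intended argument: the paper proves Lemma~\ref{lem:Scross-subterm} by exactly this mutual induction on $\phi$ and $\gamma$ (using the subterm relation, with the formula parameter in item (2) universally quantified), and simply leaves the case analysis to the reader. Your write-out of the cases, including the observation that strictness is lost only in the $\gamma_0^\cross$ clause and regained in item (1) via the surrounding modality, is exactly the "easy argument" the paper omits.
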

\begin{proof}
By a mutual induction on $\phi$ and $\gamma$
The easy argument is left to the reader.
\end{proof}

\begin{lemma} \label{lem:bulltr-equal}
  For all annotations $\nam{a} \in \names^*$,
  $\gamma \in \gameNorm$, $\phi \in \formNorm$ and $\circ \in \{ \cross, *\}$,
  if $\nam{a} \leqfp \diam{\gamma^\circ}\phi$ then $\startrg{\nam{a}}{\gamma}{\diam{\gamma^\circ}\phi} = \gamma$.
\end{lemma}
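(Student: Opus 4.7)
The plan is to use Lemma~\ref{lem:restriction}(2) to reduce the bullet translation of $\gamma$ under $\nam{a}$ to the bullet translation under the empty annotation, and then appeal to the trivial identity $\startrg{\eps}{\gamma}{\psi} = \gamma$ (proved by a straightforward induction on $\gamma$ from Definition~\ref{def:bullet-tr}). Concretely, let $X$ be the set of names for formulas in $\Scross(\gamma, \diam{\gamma^\circ}\phi)$. Then Lemma~\ref{lem:restriction}(2) yields
\[
\startrg{\nam{a}}{\gamma}{\diam{\gamma^\circ}\phi} = \startrg{\restr{\nam{a}}{X}}{\gamma}{\diam{\gamma^\circ}\phi},
\]
so it suffices to show that $\restr{\nam{a}}{X} = \eps$.

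The main step is therefore to show that no name in $\nam{a}$ is a name for a formula in $\Scross(\gamma, \diam{\gamma^\circ}\phi)$. Suppose otherwise: let some $\nam{x}$ occurring in $\nam{a}$ be a name for $\diam{\delta^\cross}\theta \in \Scross(\gamma, \diam{\gamma^\circ}\phi)$. The hypothesis $\nam{a} \leqfp \diam{\gamma^\circ}\phi$ then yields $\diam{\delta^\cross}\theta \leqfp \diam{\gamma^\circ}\phi$, so by Definition~\ref{def:F-order} either $\diam{\delta^\cross}\theta = \diam{\gamma^\circ}\phi$ or $\gamma^\circ \ssubterm \delta^\cross$. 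At the same time, Lemma~\ref{lem:Scross-subterm}(2) gives $\delta^\cross \subterm \gamma$.

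To finish, I would derive a contradiction in each case. In the first case, $\delta^\cross = \gamma^\circ$ forces $\circ = \cross$ and $\gamma = \delta$, so $\gamma^\cross \subterm \gamma$, contradicting the fact that the body of an iteration is a strict subterm of the iteration itself. In the second case, chaining $\gamma^\circ \ssubterm \delta^\cross \subterm \gamma \ssubterm \gamma^\circ$ (the last step again using that the body of an iteration is a strict subterm) yields $\gamma^\circ \ssubterm \gamma^\circ$, contradicting the irreflexivity of $\ssubterm$. I do not expect any step to be a genuine obstacle; the conceptual content of the argument is simply that names in $\nam{a}$ refer to fixpoint formulas whose game terms must contain $\gamma^\circ$, whereas the surface-level greatest fixpoints of $\gamma$ sit strictly inside $\gamma^\circ$, so these two populations of game terms are disjoint.
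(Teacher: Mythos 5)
Your proof is correct and follows essentially the same route as the paper's: reduce via Lemma~\ref{lem:restriction}(2) to the empty annotation, and then show, using Lemma~\ref{lem:Scross-subterm}(2) together with the subterm order, that no name occurring in $\nam{a}$ can be a name for a formula in $\Scross(\gamma,\diam{\gamma^\circ}\phi)$ (the paper phrases this step contrapositively rather than as a proof by contradiction, which is immaterial). The only nitpick is that the equality clause of $\leqfp$ in Definition~\ref{def:F-order} concerns the game terms, so your first case should read $\delta^\cross = \gamma^\circ$ rather than $\diam{\delta^\cross}\theta = \diam{\gamma^\circ}\phi$; since your contradiction in that case only uses $\delta^\cross = \gamma^\circ$, nothing breaks.
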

\begin{proof}
By Lemma~\ref{lem:restriction} (taking $X = \Scross(\gamma,\diam{\gamma^\circ}\phi)$), we have that
$\startrg{\nam{a}}{\gamma}{\diam{\gamma^\circ}\phi} = \startrg{\restr{\nam{a}}{\Scross(\gamma,\diam{\gamma^\circ}\phi)}}{\gamma}{\diam{\gamma^\circ}\phi}$.
Hence, it suffices to show that
$\restr{\nam{a}}{\Scross(\gamma,\diam{\gamma^\circ}\phi)} = \eps$, because
$\startrg{\eps}{\gamma}{\diam{\gamma^\circ}\phi} = \gamma$.

So let $\nam{x} \in \names[\diam{\delta^\cross}\psi]$ be a name occurring in $\nam{a}$.  
By the assumption that $\nam{a} \leqfp \diam{\gamma^\circ} \varphi$,
it follows that $\diam{\delta^\cross} \psi \leqfp \diam{\gamma^\circ} \varphi$,
i.e., $\gamma^\circ \subterm \delta^\cross$
and hence $\gamma \ssubterm \delta^\cross$,
so it is \emph{not} the case that $\delta^\cross \subterm \gamma$.
By Lemma~\ref{lem:Scross-subterm}(2) this entails that
$\diam{\delta^\cross} \psi \notin \Scross(\gamma,\diam{\gamma^\circ}\phi)$.
We have therefore shown that
$\restr{\nam{a}}{\Scross(\gamma,\diam{\gamma^\circ}\phi)} = \eps$,
which concludes the proof.
\end{proof}


We denote by $C(\phi)$, the number of occurrences of the
demonic iteration symbol $^\cross$ in the formula $\phi \in \langNorm$.
A precise, inductive definition is left to the reader.

\begin{lemma}
\label{lem:isnotins}
For all $\phi,\psi \in \formNorm$ and $\gamma \in \gameNorm$:
\begin{enumerate}
\item If $\psi \in \Scross(\phi)$ then $C(\psi) \leq C(\phi)$. 
\item If $\psi \in \Scross(\gamma,\phi)$ then $C(\psi) \leq C(\gamma) + C(\phi)$ \end{enumerate}
\end{lemma}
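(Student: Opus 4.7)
My plan is to prove both statements simultaneously by a mutual induction following the clauses of the definition of $\Scross$, structurally on $\phi$ for (1) and on $\gamma$ for (2). The induction hypotheses are that (1) holds for all proper subformulas of $\phi$ and (2) for all proper subgames of $\gamma$, and I will invoke the appropriate one at each recursive step of $\Scross$.

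For (1), the atomic cases $\phi = p$ and $\phi = \neg p$ are vacuous since $\Scross(\phi) = \emptyset$. The Boolean cases are immediate from the inductive definition $\Scross(\phi_1 \square \phi_2) = \Scross(\phi_1)\cup\Scross(\phi_2)$ together with the additivity $C(\phi_1\square\phi_2) = C(\phi_1)+C(\phi_2)$ for $\square\in\{\land,\lor\}$. The only nontrivial clause is $\phi = \diam{\gamma}\phi'$: a $\psi \in \Scross(\diam{\gamma}\phi')$ lies in $\Scross(\gamma,\phi')$ or in $\Scross(\phi')$; the first subcase uses the IH for (2) to give $C(\psi) \le C(\gamma)+C(\phi') = C(\diam{\gamma}\phi')$, and the second uses the IH for (1).

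For (2), the cases $\gamma \in \{g, g^d, \gamma_0^*\}$ and the test cases are either vacuous or direct applications of the IH for (1) on $\psi$ (noting $C(\psi?) = C(\psi!) = C(\psi)$). The choice cases $\gcha, \gchd$ are straightforward from the union structure of $\Scross$, as is composition: for $\gamma = \gamma_1\comp\gamma_2$, a $\psi$ in $\Scross(\gamma_1,\diam{\gamma_2}\phi)$ gives by IH the bound $C(\psi)\le C(\gamma_1)+C(\diam{\gamma_2}\phi) = C(\gamma_1)+C(\gamma_2)+C(\phi) = C(\gamma)+C(\phi)$, while $\psi\in\Scross(\gamma_2,\phi)$ yields $C(\psi)\le C(\gamma_2)+C(\phi)\le C(\gamma)+C(\phi)$.

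The one case that genuinely shapes the statement of the lemma is $\gamma = \gamma_0^\cross$, where $\Scross(\gamma_0^\cross,\phi) = \{\diam{\gamma_0^\cross}\phi\}$: here the bound is tight, since $C(\diam{\gamma_0^\cross}\phi) = C(\gamma_0) + 1 + C(\phi) = C(\gamma_0^\cross) + C(\phi)$, which is exactly why (2) is phrased with the sum $C(\gamma)+C(\phi)$ rather than something smaller. I do not anticipate any real obstacle; the only thing to watch is to route each recursive appeal through the correct one of the two IHs, and to make sure in the compositional case that the $\diam{\gamma_2}\phi$ appearing in $\Scross(\gamma_1,\diam{\gamma_2}\phi)$ contributes its full $^\cross$-count, which is handled by the additivity of $C$.
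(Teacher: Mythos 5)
Your proposal is correct and follows essentially the same route as the paper, which simply notes that both items follow by a straightforward mutual induction on the complexity of formulas and game terms; your case analysis (with the $\gamma_0^\cross$ clause as the tight case) fills in exactly the details the paper leaves to the reader.
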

\begin{proof}
The two items can be proved by a straightforward mutual induction on
the complexity of game terms $\gamma$ and formulas $\phi$. 
\end{proof}

\begin{lemma}
\label{lem:abequal}
 Consider a sequence $\nam{a} = \nam{b} \nam{x}_1 \dots \nam{x}_n$ where $\nam{x}_1,\dots,\nam{x}_n$
are all the names of a fixpoint $\diam{\gamma^\cross} \phi$.
Then we have that $\startr{b}{\phi} = \startr{\nam{a}}{\phi}$
\end{lemma}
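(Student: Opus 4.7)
The plan is to use Lemma~\ref{lem:restriction}(1) to reduce the claim to showing that the tail names $\nam{x}_1,\dots,\nam{x}_n$ are ``invisible'' to the bullet translation of $\phi$, which in turn will follow from a size argument involving the counter $C(-)$.

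Concretely, I would proceed as follows. Let $X \isdef \{\nam{y} \in \names : \nam{y} \in \names[\chi]\text{ for some } \chi \in \Scross(\phi)\}$. Then trivially $\restr{\nam{a}}{\Scross(\phi)} \subseteq X$ and $\restr{\nam{b}}{\Scross(\phi)} \subseteq X$, so Lemma~\ref{lem:restriction}(1) yields
\[
\startr{\nam{a}}{\phi} = \startr{\restr{\nam{a}}{X}}{\phi}
\quad\text{and}\quad
\startr{\nam{b}}{\phi} = \startr{\restr{\nam{b}}{X}}{\phi}.
\]
It therefore suffices to prove that $\restr{\nam{a}}{X} = \restr{\nam{b}}{X}$, i.e.\ that none of the $\nam{x}_i$ belongs to $X$. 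Since $\nam{x}_i \in \names[\diam{\gamma^\cross}\phi]$, this amounts to showing $\diam{\gamma^\cross}\phi \notin \Scross(\phi)$.

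The latter is the main (but small) obstacle, and it is settled by a counting argument. Suppose for contradiction that $\diam{\gamma^\cross}\phi \in \Scross(\phi)$. Then Lemma~\ref{lem:isnotins}(1) gives $C(\diam{\gamma^\cross}\phi) \leq C(\phi)$. On the other hand, from the inductive definition of $C(-)$ one has $C(\diam{\gamma^\cross}\phi) = 1 + C(\gamma) + C(\phi)$, so $1 + C(\gamma) \leq 0$, a contradiction.

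With $\diam{\gamma^\cross}\phi \notin \Scross(\phi)$ established, no $\nam{x}_i$ can belong to $X$, whence $\restr{\nam{a}}{X} = \restr{\nam{b}\nam{x}_1\cdots\nam{x}_n}{X} = \restr{\nam{b}}{X}$, and the two displayed equations combine to give $\startr{\nam{a}}{\phi} = \startr{\nam{b}}{\phi}$, as required. I do not foresee any substantial technical difficulty; the only delicate point is the size argument for $\diam{\gamma^\cross}\phi \notin \Scross(\phi)$, and even there the inequality is strict by one, so there is no room for slip-up.
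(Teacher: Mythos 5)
Your proof is correct and follows essentially the same route as the paper's: apply Lemma~\ref{lem:restriction}(1) with $X$ the set of names for fixpoints in $\Scross(\phi)$, reduce to showing $\diam{\gamma^\cross}\phi \notin \Scross(\phi)$, and settle that via Lemma~\ref{lem:isnotins}(1) and the strict increase of $C(-)$ caused by the outer $\cross$. The only difference is that you spell out the application of the restriction lemma to both $\nam{a}$ and $\nam{b}$, which the paper leaves implicit.
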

\begin{proof}
First, we observe that
$\startr{\nam{a}}{\phi} = \startr{\restr{\nam{a}}{\Scross(\phi)}}{\phi}$.
This is an instance of 
Lemma~\ref{lem:restriction}, if we take $X$ to be the set of all names
associated with a fixpoint in $\Scross(\phi)$.
It therefore suffices to prove that $\nam{x}_1,\ldots,\nam{x}_n$ do not appear in  $\restr{\nam{a}}{\Scross(\phi)}$.
Since $\nam{x}_1,\ldots,\nam{x}_n$ are all the names for $\diam{\gamma^\cross} \phi$, it suffices to prove $\diam{\gamma^\cross}\phi \notin \Scross(\phi)$.
By Lemma~\ref{lem:isnotins}(1), $\diam{\gamma^\cross}\phi \in \Scross(\phi)$
would imply that $C(\diam{\gamma^\cross}\phi) \leq C(\phi)$, which is clearly
impossible.
\end{proof}

We are now ready to prove Lemma~\ref{lem:bulltr-of-cross}.


\begin{proofof}{Lemma~\ref{lem:bulltr-of-cross}}
For item \eqref{bulltr of cross}, we apply the definition of the bullet translation to obtain: 
\begin{eqnarray*}
\startr{\nam{a}}{(\diam{\gamma^\cross} \phi)} 
  & = &\dd{\startrg{\nam{a}}{\gamma^\cross}{\phi}} \startr{\nam{a}}{\phi}
\\ & = & \diam{(\ulag{\chi}{\gamma})^\cross}\dd{\ula{\chi}}\startr{\nam{a}}{\phi}
\end{eqnarray*}
The result now follows since $\startr{\nam{a}}{\phi} = \startr{b}{\phi}$, by 
Lemma~\ref{lem:abequal}.

For item \eqref{bulltr of unfolded cross},
we apply the bullet translation again to get: 
\begin{eqnarray*}
&& \startr{\nam{a}}{(\phi \land \diam{\gamma} \diam{\gamma^\cross} \phi)} 
  \\
  &  = &  \startr{\nam{a}}{\phi} \land 
  \dd{\startrg{\nam{a}}{\gamma}{\diam{\gamma^{\cross}}\phi}}
  \dd{\startrg{\nam{a}}{\gamma^\cross}{\phi}} 
  \startr{\nam{a}}{\phi}
\\ & = &  \startr{\nam{a}}{\phi} \land
   \dd{\startrg{\nam{a}}{\gamma}{\diam{\gamma^{\cross}}\phi}}
   \diam{(\ulag{\chi}{\gamma})^\cross}
   \dd{\ula{\chi}}\startr{\nam{a}}{\phi}
\\ & = &  \startr{b}{\phi} \land
   \dd{\startrg{\nam{a}}{\gamma}{\diam{\gamma^{\cross}}\phi}}
   \diam{(\ulag{\chi}{\gamma})^\cross}
   \dd{\ula{\chi}}\startr{b}{\phi}
\end{eqnarray*}
where for the last step we used Lemma~\ref{lem:abequal} again. 
By Lemma~\ref{lem:bulltr-equal}, we get  
$\startrg{\nam{a}}{\gamma}{\diam{\gamma^{\cross}}\phi} = \gamma$,
and so we are done. 
\end{proofof}


We also need the following simpler analogue of
Lemma~\ref{lem:bulltr-of-cross} for least fixpoints.

\begin{lemma} \label{lem:bulltr-of-star} \label{l:bullet of star}
For a least fixpoint formula $\diam{\gamma^*}\phi \in F^*$ and an annotation
$\nam{a} \leqfp \diam{\gamma^*} \phi$, we have
\begin{equation}
\label{bullet of unfolded star}
\startr{\nam{a}}{(\phi \lor \diam{\gamma} \diam{\gamma^*} \phi)}
=
\startr{\nam{a}}{\phi} \lor \diam{\gamma}
\diam{\gamma^*} \startr{\nam{a}}{\phi}.
\end{equation}
\end{lemma}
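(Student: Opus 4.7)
The plan is to unfold the bullet translation step by step on the left-hand side, using the clauses from Definition~\ref{def:bullet-tr}, and then appeal to Lemma~\ref{lem:bulltr-equal} to collapse the translation of $\gamma$ back to $\gamma$ itself. The structure is considerably simpler than in the $\cross$-case of Lemma~\ref{lem:bulltr-of-cross} because the bullet translation has no special clause for the angelic iteration; this is the key syntactic asymmetry to exploit.

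Concretely, I would first observe that the bullet translation commutes with disjunction, so
\[
\startr{\nam{a}}{(\phi \lor \diam{\gamma} \diam{\gamma^*} \phi)}
= \startr{\nam{a}}{\phi} \lor \startr{\nam{a}}{(\diam{\gamma} \diam{\gamma^*} \phi)}.
\]
Next, applying the clause for $\diam{-}$ twice, and using that $\startrg{\nam{a}}{\gamma^*}{\phi} = \gamma^*$ directly from the definition (there is no special treatment of angelic iteration, unlike the $\cross$ clause),
\[
\startr{\nam{a}}{(\diam{\gamma} \diam{\gamma^*} \phi)}
= \dd{\startrg{\nam{a}}{\gamma}{\diam{\gamma^*}\phi}}\, \diam{\gamma^*}\, \startr{\nam{a}}{\phi}.
\]

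The remaining task is to show that $\startrg{\nam{a}}{\gamma}{\diam{\gamma^*}\phi} = \gamma$, which will then give $\dd{\gamma}\diam{\gamma^*}\startr{\nam{a}}{\phi} = \diam{\gamma}\diam{\gamma^*}\startr{\nam{a}}{\phi}$ by the definition of $\dd{-}$. This is exactly where I would invoke Lemma~\ref{lem:bulltr-equal} (in the form for $\circ = *$): since by hypothesis $\nam{a} \leqfp \diam{\gamma^*}\phi$, the lemma yields $\startrg{\nam{a}}{\gamma}{\diam{\gamma^*}\phi} = \gamma$ and we are done.

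There is essentially no obstacle here once Lemma~\ref{lem:bulltr-equal} is available; the only thing to double-check is that the statement of that lemma was formulated uniformly for $\circ \in \{*, \cross\}$, so that it applies to the least-fixpoint case used here. Since Lemma~\ref{lem:bulltr-equal} was proven via Lemma~\ref{lem:restriction} and Lemma~\ref{lem:Scross-subterm}, neither of which treats $*$ and $\cross$ differently at the relevant step, the argument transfers directly. Thus the whole proof amounts to three lines of unfolding plus one citation.
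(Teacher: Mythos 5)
Your proposal is correct and is essentially identical to the paper's own proof: both unfold the bullet translation through the disjunction and the two modalities, use that $\startrg{\nam{a}}{\gamma^*}{\phi} = \gamma^*$ directly from the definition, and then invoke Lemma~\ref{lem:bulltr-equal} (whose hypothesis $\nam{a} \leqfp \diam{\gamma^*}\phi$ is exactly the side condition here) to conclude $\startrg{\nam{a}}{\gamma}{\diam{\gamma^*}\phi} = \gamma$. The only cosmetic point is the silent identification of $\dd{\gamma}$ with $\diam{\gamma}$ in the final step, but the paper's proof makes exactly the same identification, so your argument matches it in both substance and level of detail.
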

\begin{proof}
We use the definition of the bullet translation to compute as follows: 
\begin{eqnarray*}
 \startr{\nam{a}}{(\phi \lor \diam{\gamma} \diam{\gamma^*} \phi)} 
  & = & \startr{\nam{a}}{\phi} \lor
\diam{\startrg{\nam{a}}{\gamma}{\diam{\gamma^{*}}\phi}}
\diam{\startrg{\nam{a}}{\gamma^*}{\phi}} \startr{\nam{a}}{\phi}
\\ & = & \startr{\nam{a}}{\phi} \lor
\diam{\startrg{\nam{a}}{\gamma}{\diam{\gamma^{*}}\phi}}
\diam{\gamma^*} \startr{\nam{a}}{\phi}
\end{eqnarray*}
By Lemma~\ref{lem:bulltr-equal} we get
$\startrg{\nam{a}}{\gamma}{\diam{\gamma^{*}}\phi} = \gamma$, and so we
are done. 
\end{proof}

\subsubsection{Missing cases in Theorem~\ref{thm:CloG-to-KozenG}}

We now give more details for the translation of some of the rules
from \CloG to \GLms in the proof of Theorem~\ref{thm:CloG-to-KozenG}.

For the weakening rule 
this is trivial. The case
of \AxOne: $\Phi= p^\eps,\nnf{p}^\eps$ is also immediate  
as $\startr{}{\Phi} = p,\nnf{p}$ which is an instance of \Ax. 
Equally straightforward to translate are the \CloG rules dealing with
the Boolean and the (basic) modal operators. Here one simply has to observe that 
these connectives commute with the bullet translation. 
Concerning the rules for tests and for angelic and demonic choice we consider only the 
demonic choice operator in detail and leave the other similar cases to the reader. 
Suppose we derive $\Phi, (\diam{\gamma \gchd \delta} \phi)^{\nam{a}}$ 
within the \CloG proof $\pi$ via an application of the \RuChoiceD-rule.
The translation of the assumption of the rule is 
$\startr{}{\Phi}, \startr{\nam{a}}{(\diam{\gamma} \phi)} \land \startr{\nam{a}}{(\diam{\delta} \phi)}$.
Spelling out the details of the bullet translation, this can be rewritten as
$\startr{}{\Phi}, \dd{\startrg{\nam{a}}{\gamma}{\varphi}} \startr{\nam{a}}{\phi} \land  \dd{\startrg{\nam{a}}{\delta}{\varphi}} \startr{\nam{a}}{\phi}$.
From here we obtain the following \GLms derivation steps:
\begin{prooftree}
 \AxiomC{$\startr{}{\Phi}, \dd{\startrg{\nam{a}}{\gamma}{\varphi}} \startr{\nam{a}}{\phi} \land  \dd{\startrg{\nam{a}}{\delta}{\varphi}} \startr{\nam{a}}{\phi}$}
 \doubleLine
 \RightLabel{(*)}
 \UnaryInfC{$\startr{}{\Phi}, \diam{\startrg{\nam{a}}{\gamma}{\varphi}} \startr{\nam{a}}{\phi} \land  \diam{\startrg{\nam{a}}{\delta}{\varphi}} \startr{\nam{a}}{\phi}$}
 \RightLabel{\RuChoiceD}
  \UnaryInfC{$\startr{}{\Phi}, \diam{\startrg{\nam{a}}{\gamma \gchd \delta}{\varphi}} \startr{\nam{a}}{\phi}$}
\end{prooftree}
where (*) possibly involves  applying the \RuCompD-rule multiple times. It is easy to see that the conclusion of the \GLms derivation
is equal to $\startr{}{\Phi}, \startr{\nam{a}}{ (\diam{\gamma \gchd \delta} \phi)}$ as required.

To see how to deal with the \RuStar-rule consider an application of the
rule in $\pi$ of the form 
\begin{prooftree}
  \AxiomC{$\Phi, (\phi  \lor \diam{\gamma} \diam{\gamma^*} \phi)^{\nam{a}}$}
  \RightLabel{\RuStar}
  \UnaryInfC{$\Phi, ( \diam{\gamma^*} \phi)^{\nam{a}}$}
\end{prooftree}
The premise of this rule translates to $\startr{}{\Phi}, \startr{\nam{a}}{\phi} \vee \diam{\gamma} \diam{\gamma^*}  \startr{\nam{a}}{\phi}$ - this can 
be seen using Lemma~\ref{l:bullet of star} and the side condition of the \RuStar-rule. An application of the \RuStar-rule in \GLms yields 
$\startr{}{\Phi}, \diam{\gamma^*}\startr{\nam{a}}{\phi}$ which in turn equals $\startr{}{\Phi}, \startr{\nam{a}}{(\diam{\gamma^*}\phi)}$ as required.

For the \RuCross rule consider a rule application
\begin{prooftree}
  \AxiomC{$\Phi, (\phi  \land \diam{\gamma} \diam{\gamma^\cross} \phi)^{\nam{a}}$}
  \RightLabel{\RuCross}
  \UnaryInfC{$\Phi, ( \diam{\gamma^\cross} \phi)^{\nam{a}}$}
\end{prooftree}
   By the side condition of the \RuCross-rule we
   can assume that $\nam{a}$ is of the form $\nam{b}  \nam{x}_1 \dots \nam{x}_n$ where $\nam{x}_1 \dots \nam{x}_n$
   are all the names of $\diam{\gamma^\cross}\phi$ occurring in $\nam{a}$.
   Applying Lemma~\ref{lem:startr-of-cross} we get
   that the premise of the rule translates  to $\startr{}{\Phi}, \startr{\nam{b}}{\phi} \land \dd{\gamma}
\diam{(\ulag{\chi}{\gamma})^\cross}
\dd{\ula{\chi}}\startr{\nam{b}}{\phi}$ where $\vec{\chi}$ consists of the context sequents as in previous cases.
   Consider now the following \GLms derivation steps:
   \begin{prooftree}
\AxiomC{$\startr{}{\Phi}, 
  \startr{\nam{b}}{\varphi} \land 
  \dd{\gamma}
  \diam{(\ulag{\chi}{\gamma})^\cross}
  \dd{\ula{\chi}}\startr{\nam{b}}{\varphi}
  $}
 \RightLabel{\RuCompD}
\UnaryInfC{$\startr{}{\Phi}, 
  \startr{\nam{b}}{\varphi} \land
  \diam{\gamma} 
  \diam{(\ulag{\chi}{\gamma})^\cross} 
  \dd{\ula{\chi}}\startr{\nam{b}}{\varphi}
  $}
  \doubleLine
 \RightLabel{\RuMonDgame, \RuMonDfmla}
\UnaryInfC{$\startr{}{\Phi}, 
  \dd{\ula{\chi}}\startr{\nam{b}}{\varphi} \land 
  \diam{\ulag{\chi}{\gamma}}
  \diam{(\ulag{\chi}{\gamma})^\cross} 
  \dd{\ula{\chi}}\startr{\nam{b}}{\varphi}
  $}
\RightLabel{\RuCross}
  \UnaryInfC{$\startr{}{\Phi}, 
  \diam{(\ulag{\chi}{\gamma})^\cross}\dd{\ula{\chi}}\startr{\nam{b}}{\phi},
  $}
\end{prooftree}
where the double line indicates multiple applications of the deep monotonicity
rules. Observe now that the conclusion $\startr{}{\Phi}, 
  \diam{(\ulag{\chi}{\gamma})^\cross}\dd{\ula{\chi}}\startr{\nam{b}}{\phi},
  $ is by Lemma~\ref{lem:startr-of-cross} equal to $\startr{}{\Phi}, \startr{\nam{a}}{(\diam{\gamma^\cross} \phi)}$
  as required.

\subsection{Omitted proofs of Section \ref{sec:mon-mu}}
This part of the appendix contains definitions and lemmas
that lead up to a detailed proof of Theorem~\ref{thm:CloM-complete}.


First, we translate monotone $\mu$-calculus into normal $\mu$-calculus 
by extending the translation from \cite{KrachtWolter99,Helle03}
of monotonic modal logic into normal bimodal logic.
More precisely, we define the language $\langBimod$ to be the set of
modal $\mu$-calculus formulas over the set of labels
$L = \{g_N \mid g \in \AtGame\} \cup \{g_\ni \mid g \in \AtGame\}$
defined in the usual way,
and interpret $\langBimod$
over Kripke models that have an accessibility relation for each label in $L$.
We define the translation $(-)^t \colon \langM \to \langBimod$
follows:
\begin{itemize}
\item $p^t=p^t$ and $(\lnot p)^t = \lnot p$.
\item For Boolean connectives: compositionally. 
\item $(\mu x. A)^t = \mu x. (A^t)$ and $(\nu x. A)^t = \nu x. (A^t)$,
\item $(\diam{g}A)^t = \diam{g_{N}}[g_{\ni}] (A^t)$,
\item $(\diam{g^d}A)^t = [g_{N}]\diam{g_{\ni}}(A^t)$.
\end{itemize} 

\begin{proofof}{Lemma~\ref{lem:t-sat-val}}
Item 1)
Given a game model $\bbS = (S,E,V)$
we construct a Kripke model $\bbS^t$ for $\langBimod$
with state space $S \cup \wp(S)$ and valuation $V'(p) = V(p)$,
by taking \\[.3em]
\begin{tabular}{llcl}
\quad & $R_{g_N} = \{ (s,U) \in S \times \wp(S) \mid U \in E_g(s)\}$,\\
& $R_{g_\ni} = \{ (U,s) \in \wp(S) \times S \mid s \in U\}$.\\
\end{tabular}\\[.3em]
It is straightforward to show by induction that for all $s \in S$ and all $C \in \langM$: $s \in \mng{C}^\bbS$ iff $s \in \mng{C^t}^\bbS$.
It follows that $(-)^t$ preserves satisfiability.

Item 2)
Given a Kripke  $\langBimod$- model $\mathcal{K}$
with state space $W$ and valuation $V$,
we can construct a game model $\mathcal{K}_m$ with the same state space and valuation, and  by defining for $g \in \AtGame$ an effectivity
function $E_g$ by
\[
 E_g(Z) = \{w \in W \mid \exists v \in W (w R_{g_N} v \mbox{ and } R_{g_\ni}[v] \subseteq Z)\}.
\]
It is again routine to show that for all Kripke models $\mathcal{K}$ and
for all $C \in \langM$, we have:
$\mng{C}^{\mathcal{K}_m} = \mng{C^t}^{\mathcal{K}}$.

From this it follows that if $C$ is valid over game models (for $\langM$)
then $C^t$ is valid over Kripke models (for $\langBimod)$.\qed
\end{proofof}


We will now show how to transform derivations from
the system $\Clo$ \cite{AfshariLeigh:LICS17} into $\CloM$
using the translation $(-)^t$.

\begin{lemma}\label{lem:from-Clo-to-CloM}
For all $C \in \langM$, if $\Clo \vdash C^t$ then $\CloM[C] \vdash C$. 
\end{lemma}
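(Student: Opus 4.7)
The plan is to prove the lemma by induction on the structure of the $\Clo$-proof $\pi$ of $C^t$, building a parallel $\CloM[C]$-proof top-down from the root. First I would extend the translation $(-)^t$ to annotated formulas by $(A^{\nam{a}})^t \isdef (A^t)^{\nam{a}}$ and to sequents pointwise. Since $(-)^t$ affects only modalities and commutes with Boolean connectives, fixpoint binders and free variables, we have $\Var(C^t) = \Var(C)$ and the priority order $\leqAL[C^t]$ coincides with $\leqAL[C]$, so all side conditions on annotations and names are transported along the translation without change.

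The invariant I would maintain is: given a $\Clo$-subproof whose conclusion has the form $\Gamma^t$ for some annotated $\CloM[C]$-sequent $\Gamma$, I can produce a $\CloM[C]$-proof of $\Gamma$. For the axiom $\AxOne$, for $\RuOr$, $\RuAnd$, $\RuWeak$, $\RuExp$, $\RuLFP$, $\RuGFP$ and $\RuNuClo_\nam{x}$, the shape of the rule in $\Clo$ matches the corresponding rule of $\CloM[C]$ modulo $(-)^t$, because the connective or fixpoint operator on which the rule acts is not touched by the translation. Hence each such rule can be mirrored by a single application of the corresponding $\CloM[C]$-rule, reducing the problem to a structurally smaller $\Clo$-subproof whose conclusion is again of the form $(\Gamma')^t$.

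The main obstacle is the modal step, since $\RuMonMod$ of $\CloM[C]$ corresponds under $(-)^t$ to \emph{two} nested applications of the $\Clo$-modal rule, as already illustrated in the excerpt. The plan here is a short lookahead: whenever the current $\Clo$-node is labelled $\Gamma^t$ with $\Gamma = \diam{g}C_1^{\nam{a}_1}, \dots, \diam{g}C_n^{\nam{a}_n}, \diam{g^d}D^{\nam{b}}$ and the rule applied is the outer modal rule, the translated conclusion is $\diam{g_N}[g_\ni](C_1^t)^{\nam{a}_1}, \dots, \diam{g_N}[g_\ni](C_n^t)^{\nam{a}_n}, [g_N]\diam{g_\ni}(D^t)^{\nam{b}}$. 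By inspection of the $\Clo$-rules, the sequent obtained by stripping $\diam{g_N}$ and $[g_N]$ contains only formulas whose top connective is $[g_\ni]$ or $\diam{g_\ni}$, so no Boolean or fixpoint rule can fire and the only possible continuation is a (possibly trivial) weakening followed by a second modal step isolating some pair $[g_\ni](C_i^t)^{\nam{a}_i}, \diam{g_\ni}(D^t)^{\nam{b}}$, above a subproof of $(C_i^t)^{\nam{a}_i}, (D^t)^{\nam{b}} = (C_i, D)^t$. I would collapse this block into a single application of $\RuMonMod$ in $\CloM[C]$ from the premise $C_i^{\nam{a}_i}, D^{\nam{b}}$, followed by $\RuWeak$ to introduce the remaining $\diam{g}C_j^{\nam{a}_j}$, and then recurse on the strictly smaller $\Clo$-subproof of $(C_i, D)^t$. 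The degenerate cases $n = 0$ or $n = 1$ without weakening are handled analogously and are in fact easier.

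The hard part is precisely the bookkeeping of this two-step pattern: one must argue that at every invocation of the outer modal rule on a translated sequent the proof indeed has the shape prescribed by $(-)^t$, and that the intermediate sequent always lies in the image of the translation so that the induction hypothesis applies. Once this is verified, the induction closes and, instantiating the invariant at the root with $\Gamma = C^\nam{\eps}$, yields the required $\CloM[C]$-proof of $C$.
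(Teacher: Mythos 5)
Your proposal is correct and follows essentially the same route as the paper: a root-up simulation that mirrors all non-modal $\Clo$-rules one-for-one under $(-)^t$, and at modal steps uses exactly the paper's lookahead argument (after stripping $\diam{g_N}/[g_N]$ only weakening and a second modal step can apply) to collapse the $\mathsf{mod}$--$\RuWeak$--$\mathsf{mod}$ block into one $\RuMonMod$ plus $\RuWeak$, recursing at the sequent $(C_i,D)^t$. The only cosmetic slip is the phrase that ``the intermediate sequent always lies in the image of the translation'' --- it does not, which is precisely why one jumps past it to $(C_i,D)^t$, as your construction in fact does.
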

\begin{proof}
Given a $\Clo$-proof $\pi$ for $C^t$, we shall construct step by step a
tree rooted at a node labelled $C$ in which every edge is labelled by a
proof rule of $\CloM[C]$, every node is labelled by a sequent or an
expression $[\Gamma]^{\mathsf{x}}$ marking the sequent $\Gamma$ as a
discharged assumption. We also construct a map $h$ sending each node in
the tree to some node in the proof tree $\pi$, such that the label of
$h(u)$ is $\Gamma^t$ if the label of $u$ is $\Gamma$. Here, we are
extending the translation $(-)^t$ to annotated sequents in the obvious
way. Rather than performing an induction on the size of proofs, the
construction will simply proceed from the root up, and will be carried
out in such a way that the end result will clearly be a $\CloM[C]$-proof.

We start by creating a root node labelled with $C^\varepsilon$ and letting $h$ map this node to the root of the proof tree $\pi$. Whenever we create a node in the construction that is mapped via $h$ to a discharged assumption $[\Gamma^t]^\nam{x}$, we make sure to label that node by the discharged assumption $[\Gamma]^\nam{x}$. Note that we can always assume that for each $A$ in the label of $h(l)$, there is at most one $B$ in the label of $l$ with $B^t = A$. (Otherwise, we just apply the weakening rule.) 

Now, given that we have constructed the tree up to some point, but we do
not yet have a proper $\CloM[C]$-proof, we pick a leaf $l$ of the tree that is not a discharged assumption and not an axiom of $\CloM[C]$, and continue the construction by a case distinction. Most of the cases are handled in a trivial manner, and we only give two examples of the easy cases: if the leaf $l$ is mapped to a node $h(l)$ labelled $\Gamma,\nu x.A^{\nam{a}}$,  and is the conclusion to an instance of the $\RuNuClo_{\nam{x}}$-rule with premise  $\Gamma,A(\nu x.A)^{\nam{a}}$, then there must be some $\Psi,B$ such that $\Psi^t = \Gamma$, $B = A$ and $l$ is labelled $\Psi,\nu x. B$. We add a new child of $l$ which we label $\Psi,B(\nu x. B)^{\nam{ax}}$. We let $h$ map this child to the premise of $h(l)$, and label the edge by the rule $\RuNuClo_{\nam{x}}$. For another example, if  the leaf $l$ is mapped to $h(l)$ labelled $\Gamma, A \wedge B$, and is the conclusion to an instance of the $\RuAnd$-rule with premises  $\Gamma,A$ and $\Gamma,B$, then $l$ must be labelled $\Psi,C \wedge D$ where $\Psi^t = \Gamma$, $C^t = A$ and $D^t = B$.  We add two new children of $l$ which we label $\Psi,C$ and $\Psi,D$ respectively. We extend the map $h$ by sending each child to its corresponding premise, and labelling the edges by the $\RuAnd$-rule. 

The only non-trivial case is when the node $h(l)$ is a conclusion to an instance of the modal rule. This is only possible if $h(l)$ has a label of the form $\diam{g_{N}} [g_\ni] A_1 ,..., \diam{g_{N}} [g_\ni] A_n, [g_{N}]\diam{g_{\ni}} B$, where $l$ is labelled with the sequent $\diam{g} C_1,...,\diam{g} C_n, \diam{g^d} D$ and $C_1^t = A_1,...,C_n^t = A_n$, $D^t = B$. By inspection of the rules of $\Clo$, and assuming that $n\geq 2$ (since the other case is easier), we can see that the subtree of the  proof $\pi$ rooted at the node $h(l)$ must have the following shape, since these are the only rules that are applicable:


\begin{prooftree}
 \AxiomC{}
 \noLine
 \UnaryInfC{$\vdots$}
 \noLine
\UnaryInfC{$A_i,B$}
\RightLabel{$\mathsf{mod}$}
\UnaryInfC{$[g_\ni] A_i,\diam{g_{\ni}} B$}
\RightLabel{$\RuWeak$}
 \UnaryInfC{$ [g_\ni] A_1 ,...,  [g_\ni] A_n,\diam{g_{\ni}} B$}
 \RightLabel{$\mathsf{mod}$}
 \UnaryInfC{$\diam{g_{N}} [g_\ni] A_1 ,..., \diam{g_{N}} [g_\ni] A_n, [g_{N}]\diam{g_{\ni}} B$}
\end{prooftree}
for some $i \in \{1,...,n\}$. We thus continue the construction as follows: 


\begin{prooftree}
\AxiomC{$C_i,D$}
\RightLabel{$\mathsf{mod}$}
 \UnaryInfC{$\diam{g} C_i, \diam{g^d} D$}
 \RightLabel{$\RuWeak$}
 \UnaryInfC{$\diam{g} C_1,...,\diam{g} C_n, \diam{g^d} D$}
\end{prooftree}
Finally, we extend the map $h$ by sending the new leaf labelled $C_i,D$ to the node in the previous figure labelled $A_i,B$. 
\end{proof}

\begin{lemma}\label{lem:from-CloM-to-Clo}
For all $C \in \langM$, if $\CloM \vdash C$ then $\Clo \vdash C^t$. 
\end{lemma}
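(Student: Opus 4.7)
The plan is to mirror the argument of Lemma~\ref{lem:from-CloM-to-Clo}, but in the opposite direction, transforming a $\CloM[C]$-proof $\pi$ of $C^\eps$ into a $\Clo$-proof of $(C^t)^\eps$ by walking down from the root and simulating each $\CloM$-rule with one or more $\Clo$-rules. As before, I extend $(-)^t$ to annotated formulas and sequents by $(A^{\nam{a}})^t \isdef (A^t)^{\nam{a}}$ and $\Gamma^t \isdef \{(A^{\nam{a}})^t \mid A^{\nam{a}} \in \Gamma\}$, which is well-defined because the translation leaves annotations untouched and commutes with all Boolean and fixpoint connectives.

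First, I would observe that most of the rules of $\CloM$ translate completely routinely, essentially because $(-)^t$ is a homomorphism on Boolean connectives and fixpoint operators: the axiom \AxOne translates to an instance of its $\Clo$-counterpart; \RuOr, \RuAnd, and \RuWeak commute with $(-)^t$ directly; the rules \RuLFP and \RuGFP become their $\Clo$ analogues once we note that $(\sigma x.A(x))^t = \sigma x.(A^t)(x)$ and that $(A(\sigma x.A(x)))^t = A^t(\sigma x.A^t(x))$; the annotation-expansion rule \RuExp is already shared between the two systems; and the closure rule $\RuNuClo_{\nam{x}}$ of $\CloM$ maps to the $\Clo$ closure rule since the discharging mechanism and name bookkeeping is preserved verbatim. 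Side conditions involving the order on variables carry over because $\Var(C^t) = \Var(C)$ and the priority order $\lessAL[C^t]$ coincides with $\lessAL[C]$ (as the fixpoint subformula structure is preserved by $(-)^t$); this is where the switch from the global order of \cite{AfshariLeigh:LICS17} to the local order $\leqAL[C]$ is harmless, because a global order extending $\leqAL[C^t]$ can always be chosen.

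The one genuinely nontrivial case, and the main obstacle, is the monotone modal rule \RuMonMod of $\CloM$:
\begin{prooftree}
\AxiomC{$A^{\nam{a}},B^{\nam{b}}$}
\RightLabel{\RuMonMod}
\UnaryInfC{$\diam{g}A^{\nam{a}},\diam{g^d}B^{\nam{b}}$}
\end{prooftree}
Its conclusion translates to $\diam{g_N}[g_\ni](A^t)^{\nam{a}},\, [g_N]\diam{g_\ni}(B^t)^{\nam{b}}$, whereas the premise translates to $(A^t)^{\nam{a}},(B^t)^{\nam{b}}$. Comparing this with the shape of the $\Clo$-derivation fragment displayed in the proof of Lemma~\ref{lem:from-Clo-to-CloM}, I would simulate a single \RuMonMod-step by two successive applications of the normal $\Clo$-modal rule \RuMod: first a \RuMod-step with respect to the $g_\ni$-modality, deriving $[g_\ni](A^t)^{\nam{a}},\,\diam{g_\ni}(B^t)^{\nam{b}}$ from $(A^t)^{\nam{a}},(B^t)^{\nam{b}}$; then a second \RuMod-step with respect to $g_N$, producing exactly the desired translated sequent. (No intermediate weakening is needed because we have only two principal formulas, corresponding to the easy case $n=1$ in the construction of Lemma~\ref{lem:from-Clo-to-CloM}.)

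Putting everything together, the induction (or equivalently, the root-up tree construction) proceeds by replacing each node of $\pi$ either by the corresponding $\Clo$-rule application, or, for \RuMonMod, by the two-step block described above. No other step changes the set of annotated formulas or their names, so the discharged-assumption pointers of \RuNuClo translate uniformly into discharged-assumption pointers of the $\Clo$ closure rule. The resulting tree is therefore a well-formed $\Clo$-proof of $(C^t)^\eps$, and the only technical subtlety worth double-checking is the simulation of \RuMonMod together with its annotation bookkeeping; the rest is bookkeeping-free translation.
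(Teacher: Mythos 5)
Your proposal is correct and follows essentially the same route as the paper's own (sketched) proof: non-modal $\CloM$-rules are translated verbatim since they are instances of $\Clo$-rules (with annotations preserved by $(-)^t$), and a single \RuMonMod step is simulated by two consecutive applications of the $\Clo$ modal rule, first for $g_\ni$ and then for $g_N$, exactly as in the paper. The only difference is that you spell out the side-condition bookkeeping for the variable order, which the paper's sketch leaves implicit.
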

\begin{proof}
The proof is very similar to the proof of Lemma~\ref{lem:from-Clo-to-CloM} above. 
In this case --again starting from the root-- we turn a $\CloM_C$-proof $\pi$ of a given formula $C \in \langM$ into a 
corresponding $\Clo$-proof of $C^t$. Unlike in the proof of Lemma~\ref{lem:from-Clo-to-CloM} we only sketch
the construction. First we note that all cases of non-modal rules are even easier than in the proof of
Lemma ~\ref{lem:from-Clo-to-CloM}
as the non-modal $\CloM_C$-rules are instances of $\Clo$-rules.

For the modal case suppose that we have constructed a partial $\Clo$-proof of $C^t$ with
a leaf labelled with $(\diam{g} D_1)^t, (\diam{g^d} D_2)^t$ and that the corresponding node in 
the $\CloM[C]$-proof $\pi$ labelled
with $\diam{g} D_1, \diam{g^d} D_2$ has been derived via an application of the modal rule:
\begin{prooftree} \AxiomC{}
 \noLine
 \UnaryInfC{$\vdots$}
 \noLine
\UnaryInfC{$D_1,D_2$}
\RightLabel{$\mathsf{mod}$}
 \UnaryInfC{$\diam{g} D_1, \diam{g^d} D_2$}
\end{prooftree}
By definition we have $(\diam{g} D_1)^t = \diam{g_{N}} [g_\ni] D_1^t$ and $(\diam{g^d} D_2)^t = [g_{N}] \diam{g_\ni} D_2^t$.
Therefore we can extend the $\Clo$-proof of $C^t$ as follows:
\begin{prooftree}
 \AxiomC{$D_1^t,D_2^t$}
 \RightLabel{$\mathsf{mod}$}
 \UnaryInfC{$[g_\ni] D_1^t, \diam{g_\ni} D_2^t$}
 \RightLabel{$\mathsf{mod}$}
 \UnaryInfC{$\diam{g_{N}} [g_\ni] D_1^t,[g_{N}] \diam{g_\ni} D_2^t$}
\end{prooftree}
This finishes the proof sketch.
\end{proof}

\begin{proofof}{Theorem~\ref{thm:CloM-complete}}
\emph{Soundness:}
Let $C \in \langM$,
such that ${\CloM} \vdash C$.
Then by Lemma~\ref{lem:from-CloM-to-Clo},
$\Clo \vdash C^t$ and hence by the soundness of $\Clo$ 
it follows that $C^t$ is valid on all Kripke models for $\langBimod$.
Suppose now that $C$ would not be valid in all game models,
i.e., there is a game model $\mathcal{M}$ and a state $w$ in $\mathcal{M}$
such that $\mathcal{M}, w \vDash \nnf{C}$.
Since $(-)^t$ preserves satisfiability (Lemma~\ref{lem:t-sat-val}(1)),
it follows that $\nnf{C}^t$ is satisfiable in a  Kripke model for $\langBimod$.
Finally, since $\nnf{C}^t$ is equivalent with $\nnf{C^t}$ 
this would imply that $C^t$ is not valid, a contradiction.

\emph{Completeness:}
Assume that $C$ is valid over game models. 
Then by Lemma~\ref{lem:t-sat-val}(2), $C^t$ is valid over Kripke models for $\langBimod$.
From the completeness of $\Clo$ \cite{AfshariLeigh:LICS17}, it follows
that $\Clo \vdash C^t$, and hence by Lemma~\ref{lem:from-Clo-to-CloM} that $\CloM \vdash C$.
\qed
\end{proofof}

\subsection{Omitted proofs of Section \ref{sec:complete}}
\subsubsection{Proof of Proposition~\ref{p:translation}}

This part of the appendix contains a detailed proof for
Proposition~\ref{p:translation}, which summarises the most important properties 
of our translation from game logic into the monotone $\mu$-calculus.

We address the claims in the proposition with separate lemmas. The first
of this lemmas entails that $x^\varphi \in \Var(\tr{\xi})$ for all
$\varphi \in F(\xi)$:
\begin{lemma}\label{lem:corr-fixpt-to-vars-items}
  \hfill
  \begin{enumerate}
  \item For all game logic formulas $\phi, \psi$: If $\psi\in F(\phi)$ then $\var{x}^\psi \in \Var(\tr{\phi})$.\label{itm:phi}
  \item For all game logic formulas $\chi, \psi$ and games $\gamma$: If $\psi \in F(\gamma,\chi)$ then $\var{x}^\psi \in \Var(\itrg{\chi}{\gamma}(\tr{\chi}))$.\label{itm:gamma}
  \end{enumerate}
\end{lemma}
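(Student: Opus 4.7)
The plan is to prove the two claims by a simultaneous induction on the structure of game logic formulas $\phi$ and game terms $\gamma$. The key insight driving the proof is the tagging convention built into Definition~\ref{def:tr}: every time an iteration operator $\delta^{\circ}$ (with $\circ \in \{*, \cross\}$) is encountered while translating, a fresh fixpoint variable $x^{\diam{\delta^{\circ}}\chi}$ is introduced, explicitly tagged with the game logic fixpoint formula to which it corresponds. So the whole argument reduces to tracking that each fixpoint formula in $F(\phi)$ (respectively $F(\gamma, \chi)$) is ``hit'' exactly once by the translation.

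For item (\ref{itm:phi}), the base cases $\phi \in \{p, \neg p\}$ are immediate since $F(p) = F(\neg p) = \emptyset$. The Boolean cases follow directly from the induction hypothesis together with the compositional clauses $\tr{(\phi_{1} \lor \phi_{2})} = \tr{\phi_{1}} \lor \tr{\phi_{2}}$ and $\tr{(\phi_{1} \land \phi_{2})} = \tr{\phi_{1}} \land \tr{\phi_{2}}$, which ensure $\Var(\tr{\phi_{1}}) \cup \Var(\tr{\phi_{2}}) \subseteq \Var(\tr{(\phi_{1} \land \phi_{2})})$. The interesting case is the modal one $\phi = \diam{\gamma}\chi$. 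Here the closure $\Cl{\diam{\gamma}\chi}$ decomposes as the union of the closure generated by traversing the game term $\gamma$ towards $\chi$ plus $\Cl{\chi}$ itself. Accordingly, any $\psi \in F(\diam{\gamma}\chi)$ either lies in $F(\gamma, \chi)$ or in $F(\chi)$. In the first case, item (\ref{itm:gamma}) applied to $\gamma$ and $\chi$ gives $x^{\psi} \in \Var(\itrg{\chi}{\gamma}(\tr{\chi})) = \Var(\tr{(\diam{\gamma}\chi)})$. In the second case, the induction hypothesis on $\chi$ yields $x^{\psi} \in \Var(\tr{\chi})$, and since $\Var(\tr{\chi}) \subseteq \Var(\itrg{\chi}{\gamma}(\tr{\chi}))$ (easily checked by a side induction on $\gamma$), we conclude.

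For item (\ref{itm:gamma}), the cases for atomic games $g, g^d$ are trivial since $F(g, \chi) = F(g^d, \chi) = \emptyset$. The cases for $\gcha, \gchd$ are compositional. For tests, $\itrg{\chi}{\psi?}(A) = \tr{\psi} \land A$ and the only new fixpoint formulas in $F(\psi?, \chi)$ come from $F(\psi)$, which are handled by item (\ref{itm:phi}) on $\psi$; the case for $\psi!$ is symmetric. For composition $\gamma = \delta_{1} \comp \delta_{2}$, the relevant $\psi \in F(\delta_{1} \comp \delta_{2}, \chi)$ splits into those in $F(\delta_{2}, \chi)$ (captured by $\itrg{\chi}{\delta_{2}}(\tr{\chi})$ via the induction hypothesis on $\delta_{2}$) and those in $F(\delta_{1}, \diam{\delta_{2}}\chi)$ (captured by $\itrg{\diam{\delta_{2}}\chi}{\delta_{1}}(\cdots)$ via the induction hypothesis on $\delta_{1}$), exactly matching the translation clause $\itrg{\chi}{\delta_{1} \comp \delta_{2}}(A) = \itrg{\diam{\delta_{2}}\chi}{\delta_{1}}(\itrg{\chi}{\delta_{2}}(A))$.

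The crucial cases are the iteration operators. For $\gamma = \delta^{*}$, the translation is $\itrg{\chi}{\delta^{*}}(A) = \mu x^{\diam{\delta^{*}}\chi}. A \lor \itrg{\diam{\delta^{*}}\chi}{\delta}(x^{\diam{\delta^{*}}\chi})$. Any $\psi \in F(\delta^{*}, \chi)$ either equals $\diam{\delta^{*}}\chi$ itself, in which case the newly introduced variable $x^{\diam{\delta^{*}}\chi}$ supplies what is needed, or $\psi \in F(\delta, \diam{\delta^{*}}\chi)$, in which case the induction hypothesis on $\delta$ (applied with $\chi$ replaced by $\diam{\delta^{*}}\chi$) yields $x^{\psi} \in \Var(\itrg{\diam{\delta^{*}}\chi}{\delta}(x^{\diam{\delta^{*}}\chi})) \subseteq \Var(\itrg{\chi}{\delta^{*}}(\tr{\chi}))$. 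The case $\gamma = \delta^{\cross}$ is entirely analogous with $\nu$ in place of $\mu$. The main obstacle here is precisely to formulate and rely on the right decomposition of $F(\gamma, \chi)$ under game constructors, so that the mutual induction synchronises with the recursive translation clauses; once this decomposition is in place, the argument is essentially a bookkeeping exercise guided by the tagging $x^{\diam{\gamma^{\circ}}\chi}$.
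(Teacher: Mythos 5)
Your proof follows essentially the same route as the paper's: a mutual structural induction on formulas and game terms, with the same decompositions of $F(\phi)$ and $F(\gamma,\chi)$ under each constructor, and all of the Boolean, modal, choice, test and composition cases match the paper's argument. There is, however, one imprecise step in the crucial iteration case. You write that for $\psi \in F(\delta,\diam{\delta^*}\chi)$ the induction hypothesis on $\delta$ ``yields $x^{\psi} \in \Var(\itrg{\diam{\delta^{*}}\chi}{\delta}(x^{\diam{\delta^{*}}\chi}))$''. But item (\ref{itm:gamma}) of the lemma fixes the argument of $\tau$ to be the translation of the continuation formula, so the instance you actually get is $x^{\psi} \in \Var(\itrg{\diam{\delta^{*}}\chi}{\delta}(\tr{(\diam{\delta^{*}}\chi)}))$ --- membership in the set where the argument is the full formula $\tr{(\diam{\delta^{*}}\chi)}$, not the bare variable. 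A variable obtained this way could, as far as the induction hypothesis is concerned, sit inside the argument rather than inside the subterm $\itrg{\diam{\delta^{*}}\chi}{\delta}(x^{\diam{\delta^{*}}\chi})$ of the target, so your containment claim does not follow as stated.

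The gap is small and the conclusion of the case is still correct; the paper bridges it with exactly one extra observation: since $\tr{(\diam{\delta^{*}}\chi)} = \itrg{\chi}{\delta^{*}}(\tr{\chi})$, one has $\Var(\itrg{\diam{\delta^{*}}\chi}{\delta}(\tr{(\diam{\delta^{*}}\chi)})) \subseteq \Var(\itrg{\chi}{\delta^{*}}(\tr{\chi}))$, because the context $\itrg{\diam{\delta^{*}}\chi}{\delta}(-)$ contributes the same variables on both sides and the variables of the argument are precisely those of the target formula. (Note the contrast with your composition case, where the coincidence $\tr{(\diam{\delta_{2}}\chi)} = \itrg{\chi}{\delta_{2}}(\tr{\chi})$ makes the induction hypothesis instance literally the needed subterm, so no such repair is required.) Alternatively, you could strengthen item (\ref{itm:gamma}) to quantify over an arbitrary argument $A$ in place of $\tr{\chi}$, which would legitimise your invocation of the hypothesis with the variable as argument; but with the lemma as stated, either add the containment step above or reformulate, and the same remark applies verbatim to the $\cross$ case.
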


\begin{proof}
Both items are are proven with a mutual induction over the complexity of
the formula $\phi$ and of the game term $\gamma$.

  \emph{Base case (\ref{itm:phi}):}
  If $\phi$ is of the form $p$ or $\lnot p$, then $F(\phi)=\emptyset$, hence (\ref{itm:phi}) holds trivially.
  \emph{Base case (\ref{itm:gamma}):}
  Similarly, if $\gamma$ is is of the form $g$ or $g^d$, then $F(\gamma,\phi)=\emptyset$, hence (\ref{itm:gamma}) holds trivially.

  \emph{Induction hypotheses:}
  Assume (\ref{itm:phi}) holds for all formulas that are proper subterms of $\phi$ or $\gamma$.
  Assume (\ref{itm:gamma}) holds for all games that are proper subterms of $\phi$ or $\gamma$.

  \emph{Induction step (\ref{itm:phi}):}
  Suppose $\phi = \phi_1 \lor \phi_2$. We then have that $F(\phi) = F(\phi_1) \cup F(\phi_2)$.
  Let $i \in \{1,2\}$ and assume $\psi \in F(\phi_i)$ then by IH for (\ref{itm:phi}), we have $\var{x}^\psi \in \Var(\tr{\phi_i})$ and hence $\var{x}^\psi \in \Var(\tr{\phi}) = \Var(\tr{\phi_1})\cup \Var(\tr{\phi_2})$.

  The argument for $\phi=\phi_1\land\phi_2$ is similar.

  Suppose now $\phi = \diam{\delta}\phi_0$. We then have $F(\phi) = F(\delta,\phi_0) \cup F(\phi_0)$.
  For all $\psi \in F(\phi_0)$, we have by the IH for (\ref{itm:phi}) that $\var{x}^\psi \in \Var(\tr{\phi_0})$,
  and since $\tr{\phi_0}$ is a subterm of $\tr{(\diam{\delta}\phi_0)} = \itrg{\phi_0}{\delta}(\tr{\phi_0})$,
  it follows that $\Var(\tr{\phi_0}) \subseteq \Var(\tr{\phi})$.
  For all $\psi \in F(\delta,\phi_0)$, we have by the IH for (\ref{itm:gamma}) that
  $\var{x}^\psi \in \Var(\itrg{\phi_0}{\delta}(\tr{\phi_0})) = \Var(\tr{\phi})$.

  \emph{Induction step (\ref{itm:gamma}):}
  Let $\chi$ be an arbitrary game logic formula.
  Suppose that $\gamma = \gamma_1 \gcha \gamma_2$.
  We then have $F(\gamma,\chi) = F(\gamma_1,\chi) \cup F(\gamma_2,\chi)$.
  Let $i \in \{1,2\}$ and assume that $\psi \in F(\gamma_i,\chi)$.
  Then by IH for (\ref{itm:gamma}), we have that
  $\var{x}^\psi \in \Var(\itrg{\chi}{\gamma_i}(\tr{\chi})) \subseteq \Var(\itrg{\chi}{\gamma_1}(\tr{\chi}) \lor \itrg{\chi}{\gamma_2}(\tr{\chi})) = \Var(\itrg{\chi}{\gamma}(\tr{\chi}))$.
  The case for $\gamma = \gamma_1 \gchd \gamma_2$ is similar.

  Suppose $\gamma = \gamma_1 \comp \gamma_2$.
  We then have  $F(\gamma,\chi) = F(\gamma_1,\diam{\gamma_2}\chi) \cup F(\gamma_2,\chi)$.
  If $\psi \in F(\gamma_1,\diam{\gamma_2}\chi)$, then by IH for (\ref{itm:gamma}), we have that
  $\var{x}^\psi \in \Var(\itrg{\diam{\gamma_2}\chi}{\gamma_1}(\tr{(\diam{\gamma_2}\chi)})) = \Var(\itrg{\chi}{\gamma_1;\gamma_2}(\tr{\chi}))$.
  If $\psi \in F(\gamma_2,\chi)$, then by IH for (\ref{itm:gamma}), we have that
  $\var{x}^\psi \in \Var(\itrg{\chi}{\gamma_2}(\tr{\chi})) \subseteq \Var(\itrg{\diam{\gamma_2}\chi}{\gamma_1}(\itrg{\chi}{\gamma_2}(\tr{\chi}))) = \Var(\itrg{\chi}{\gamma_1;\gamma_2}(\tr{\chi}))$.

  Suppose $\gamma = \delta^*$.
  We then have $F(\gamma,\chi) = \{\diam{\delta^*}\chi\} \cup F(\delta,\diam{\delta^*}\chi)$.
  If $\psi = \diam{\delta^*}\chi$, we use that
  $\itrg{\chi}{\delta^*}(\tr{\chi}) = \mu x^{\diam{\delta^*}\chi} . \tr{\chi} \lor \itrg{\diam{\delta^*}\chi}{\delta}(x^{\diam{\delta^*}\chi})$ and hence $x^\psi = x^{\diam{\delta^*}\chi} \in \Var(\itrg{\chi}{\delta^*}(\tr{\chi}))$.
  If $\psi \in F(\delta,\diam{\delta^*}\chi)$, then by  IH for (\ref{itm:gamma}), we have that
  $\var{x}^\psi \in \Var(\itrg{\diam{\delta^*}\chi}{\delta}(\tr{(\diam{\delta^*}\chi)}))$.
  It therefore suffices to show that
  $\Var(\itrg{\diam{\delta^*}\chi}{\delta}(\tr{(\diam{\delta^*}\chi)})) \subseteq \Var(\itrg{\chi}{\delta^*}(\tr{\chi}))=\Var(\mu x^{\diam{\delta^*}\chi} . \tr{\chi} \lor \itrg{\diam{\delta^*}\chi}{\delta}(x^{\diam{\delta^*}\chi}))$.
  Since the context $\itrg{\diam{\delta^*}\chi}{\delta}(-)$ occurs on both sides of the inclusion,
  it suffices to show that
  $\Var(\tr{(\diam{\delta^*}\chi)}) \subseteq \Var(\itrg{\chi}{\delta^*}(\tr{\chi}))$,
  and this holds since $\tr{(\diam{\delta^*}\chi)} = \itrg{\chi}{\delta^*}(\tr{\chi})$.

    Suppose $\gamma = \phi ?$. By IH for (\ref{itm:phi}) it follows that $x^\psi \in \Var(\tr{\phi})$.
  Since $\itrg{\chi}{\phi ?}(\tr{\chi}) = \tr{\phi} \lor \tr{\chi}$, we have that
  $x^\psi \in \Var(\itrg{\chi}{\phi ?}(\tr{\chi}))$.
  
  The case for $\gamma = \phi !$ is similar.
\end{proof}

To show that $\tr{\xi}$ is locally well-named and that $x^\varphi
\leqAL_{\tr{\xi}} x^\psi$ implies $\varphi \leqfp \psi$ we prove a
series of lemmas leading to the central property from
Lemma~\ref{l:compatible orders} below.

\begin{lemma} \label{l:no free in tau}
For all game logic formulas $\phi$, all games $\gamma$ and all
$\mu$-calculus formulas $A$, we have that
$\FVar(\itrg{\varphi}{\gamma}(A)) \subseteq \FVar(A)$, i.e., all free
variables in $\itrg{\varphi}{\gamma}(A)$, occur free in $A$.
\end{lemma}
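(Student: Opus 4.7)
The plan is to prove the inclusion $\FVar(\itrg{\varphi}{\gamma}(A)) \subseteq \FVar(A)$ by induction on the complexity of the game term $\gamma$, but the test clauses force a strengthening. Recall that $\itrg{\varphi}{\psi?}(A) = \tr{\psi} \land A$ and $\itrg{\varphi}{\psi!}(A) = \tr{\psi} \lor A$: in both, the translation $\tr{\psi}$ of a game logic formula is spliced in, and any free variable of $\tr{\psi}$ would appear in the output while potentially not being in $\FVar(A)$. Accordingly, I would simultaneously prove the companion claim that, for every game logic formula $\phi$, the $\mu$-calculus formula $\tr{\phi}$ is a sentence, i.e.\ $\FVar(\tr{\phi}) = \emptyset$.

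The proof is then a mutual induction on the structure of game logic formulas and game terms. For the companion claim, the propositional and boolean cases are immediate; the only substantial case is $\tr{(\diam{\gamma}\varphi)} = \itrg{\varphi}{\gamma}(\tr{\varphi})$, which follows directly from the main lemma applied to $A \isdef \tr{\varphi}$, a sentence by the inner induction hypothesis on $\varphi$.

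For the main lemma, the base cases $\gamma \in \{g, g^d\}$ are immediate from $\itrg{\varphi}{g}(A) = \diam{g}A$ and $\itrg{\varphi}{g^d}(A) = \diam{g^d}A$. The binary cases $\gcha$ and $\gchd$ follow directly from the induction hypothesis applied to both subgames. For composition $\gamma \comp \delta$, two applications of the induction hypothesis give
\[
  \FVar(\itrg{\diam{\delta}\varphi}{\gamma}(\itrg{\varphi}{\delta}(A)))
  \;\subseteq\; \FVar(\itrg{\varphi}{\delta}(A)) \;\subseteq\; \FVar(A).
\]
The test cases $\psi?$ and $\psi!$ then follow from the companion claim that $\tr{\psi}$ is a sentence.

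The iterator cases are the ones that deserve closer attention. For $\itrg{\varphi}{\gamma^*}(A) = \mu x^{\diam{\gamma^*}\varphi}. \, A \lor \itrg{\diam{\gamma^*}\varphi}{\gamma}(x^{\diam{\gamma^*}\varphi})$, apply the induction hypothesis on $\gamma$ with the $\mu$-calculus formula $x^{\diam{\gamma^*}\varphi}$ in place of $A$; this yields that the free variables of $\itrg{\diam{\gamma^*}\varphi}{\gamma}(x^{\diam{\gamma^*}\varphi})$ are contained in $\{x^{\diam{\gamma^*}\varphi}\}$. Hence the body of the fixpoint has its free variables in $\FVar(A) \cup \{x^{\diam{\gamma^*}\varphi}\}$, and binding by $\mu$ removes $x^{\diam{\gamma^*}\varphi}$, leaving at most $\FVar(A)$. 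The case $\gamma^\cross$ is symmetric, using $\nu$ in place of $\mu$.
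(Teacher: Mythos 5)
Your proof is correct and takes essentially the same route as the paper, which disposes of the lemma with a one-line ``simple induction on the complexity of $\gamma$'' over the recursive clauses defining $\itrg{\varphi}{\gamma}(A)$. The only difference is that you make explicit the companion claim that $\tr{\psi}$ is a sentence (needed for the clauses $\itrg{\varphi}{\psi?}(A)$ and $\itrg{\varphi}{\psi!}(A)$), turning the argument into a mutual induction on formulas and game terms; this is exactly the detail the paper leaves implicit, and your treatment of the iterator cases via the induction hypothesis instantiated at $A = x^{\diam{\gamma^{*}}\varphi}$ is sound.
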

\begin{proof}
 The proof is a simple induction on the complexity of $\gamma$ in the
recursive clauses defining $\itrg{\varphi}{\gamma}(A)$.
\end{proof}


\begin{definition}
A \emph{call triple} is a triple $(\gamma,\phi,A)$ where $\gamma$ is a
game term, $\phi$ is a game logic formula and $A$ is a formula of the
monotone $\mu$-calculus. The set of call triples is denoted by
$\mathsf{CTr}$.
\end{definition}

\begin{definition}
Given a game logic formula $\xi$, the set of \emph{recursive calls of $\tau$ on $\xi$}, denoted $C_\xi$, is the least fixpoint  of  the monotone map $c_\xi : \mathsf{P}(\mathsf{CTr}) \to \mathsf{P}(\mathsf{CTr})$ defined by setting $ t \in c_\xi(X)$, for $X \subseteq \mathsf{CTr}$, iff: 
\begin{itemize}
\item $t = (\gamma,\varphi,\tr{\phi})$ for some subformula $\diam{\gamma}\varphi$ of $\xi$, or
\item $t$ is of the form $(\gamma,\phi,A)$ or $(\gamma',\phi,A)$ where $ (\gamma \gcha \gamma', \phi, A) \in X$,  or
\item $t$ is of the form $(\gamma,\phi,A)$ or $(\gamma',\phi,A)$ where $ (\gamma \gchd \gamma', \phi, A) \in X$, or
\item $t$ is of the form $(\gamma, \diam{\gamma^\cross} \phi,x^{ \diam{\gamma^\cross} \phi})$ and there exists some $A$ for which $(\gamma^\cross,\phi,A) \in X$,
\item $t$ is of the form $(\gamma, \diam{\gamma^*} \phi,x^{ \diam{\gamma^*} \phi})$ and there exists some $A$ for which $(\gamma^*,\phi,A) \in X$,
\item $t$ is of the form $(\gamma',\phi,A)$ or  $(\gamma,\diam{\gamma'}\phi,\itrg{\gamma'
}{\phi}(A))$  for some $\gamma,\gamma',\phi$ and $A$ such that  $((\gamma \comp \gamma'),\phi,A) \in X$.
\end{itemize}
\end{definition}

\begin{lemma} \label{l:variables and recursive calls}
Let $\xi$ be a game logic formula and let $\eta x^{\diam{\delta^\circ}
\varphi} . B$ be a subformula of $\tr{\xi}$. Then  there exists some $A$
such that 
\[
 \eta x^{\diam{\delta^\circ} \varphi} . B = \itrg{\varphi}{\delta^\circ}(A)
\]
and, furthermore, $(\delta^\circ,\varphi,A) \in C_\xi$.
\end{lemma}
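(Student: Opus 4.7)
The plan is to derive the lemma as a special case of a stronger statement proved by mutual induction on formulas and games. Fixing the game logic formula $\xi$, I would establish the following two claims in parallel. Claim (P): for every subformula $\phi$ of $\xi$, every fixpoint subformula of $\tr{\phi}$ has the form $\itrg{\psi}{\delta^\circ}(A)$ for some $(\delta^\circ,\psi,A) \in C_\xi$. Claim (Q): for every call triple $(\gamma,\phi,A) \in C_\xi$, every fixpoint subformula of $\itrg{\phi}{\gamma}(A)$ that is not already a subformula of $A$ has the form $\itrg{\psi}{\delta^\circ}(A')$ for some $(\delta^\circ,\psi,A') \in C_\xi$. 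The lemma is then the instance of (P) with $\phi = \xi$.

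For (Q), I would induct on the structure of $\gamma$, matching each case to the corresponding closure clause in the definition of $C_\xi$. Atomic game terms introduce no fixpoints. For $\gcha$, $\gchd$, and composition, the recursive definition of $\itrg{\phi}{\gamma}(A)$ decomposes $\gamma$ into strictly smaller subterms, and the choice and composition clauses of $C_\xi$ produce precisely the subcalls required by the induction hypothesis. In the iteration case $\gamma = \delta^\circ$ with $\circ \in \{*,\cross\}$, the outermost fixpoint introduced is itself equal to $\itrg{\phi}{\gamma}(A)$ with $(\gamma,\phi,A) \in C_\xi$ by assumption; the remaining fixpoint subformulas all live inside $\itrg{\diam{\gamma}\phi}{\delta}(x^{\diam{\gamma}\phi})$, and the iteration clause of $C_\xi$ supplies the needed triple $(\delta,\diam{\gamma}\phi,x^{\diam{\gamma}\phi}) \in C_\xi$ to apply the induction hypothesis. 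Claim (P) is then routine: propositional cases reduce directly, and the modal case $\phi = \diam{\gamma}\chi$ expands $\tr{\phi} = \itrg{\chi}{\gamma}(\tr{\chi})$ and applies (Q) to the base triple $(\gamma,\chi,\tr{\chi}) \in C_\xi$ together with (P) for the strictly smaller subformula $\chi$ to handle fixpoints already contained in $\tr{\chi}$.

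The main obstacle I anticipate is the test case of (Q): the clauses $\itrg{\phi}{\psi?}(A) = \tr{\psi} \land A$ and $\itrg{\phi}{\psi!}(A) = \tr{\psi} \lor A$ reintroduce a fresh translation $\tr{\psi}$ that is not manifestly tied to the current call triple, and the definition of $C_\xi$ has no closure rule for tests. To close this gap I would first prove as an auxiliary observation that whenever $(\gamma,\phi,A) \in C_\xi$, every formula $\psi$ appearing as a test subterm of $\gamma$ is a subformula of $\xi$; this follows by a straightforward induction on the generation of $C_\xi$ from its base clause, since the remaining closure rules only decompose game terms into their subterms. Claim (P) applied to such a $\psi$ then disposes of all fixpoints arising from $\tr{\psi}$. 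To make the mutual dependence between (P) and (Q) well-founded, I would order the induction primarily by subformula complexity within $\xi$ and secondarily by game-term structure, so that (P) for any test formula $\psi$ is established before (Q) is invoked on a triple whose game component contains $\psi$ as a test.
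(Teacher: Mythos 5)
Your proposal is correct and takes essentially the same approach as the paper: the paper proves exactly your claims (P) and (Q) by the same mutual induction on subformulas of $\xi$ and game terms occurring in $\xi$, with the same case analysis, and it too handles the test case by applying the formula claim to the test formula $\alpha$ of a subterm $\alpha?$ or $\alpha!$. Your auxiliary observation that test subterms of game components of triples in $C_\xi$ are subformulas of $\xi$ is left implicit in the paper (which simply calls $\alpha$ a subformula of $\xi$), and your explicit justification of it by induction on the generation of $C_\xi$ is correct.
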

\begin{proof}
We show with mutual induction on the complexity on subformula $\psi$ of
$\xi$ and game terms $\gamma$ occurring in $\xi$ that
\begin{enumerate}
 \item If $\eta x^{\diam{\delta^\circ} \varphi} . B$ is a subformula of
$\tr{\psi}$ then there exists some $A$ such that 
$\eta x^{\diam{\delta^\circ} \varphi} . B = \itrg{\delta^\circ}{\varphi}(A)$
and, furthermore, $(\delta^\circ,\varphi,A) \in C_\xi$.
\label{itm:ih for formulas}

 \item If $\eta x^{\diam{\delta^\circ} \varphi} . B$ is a subformula of
$\itrg{\chi}{\gamma}(D)$ and $(\gamma,\chi,D) \in C_\xi$ then it is
either a subformula of $D$ or there exists some $A$ such that $\eta
x^{\diam{\delta^\circ} \varphi} . B = \itrg{\delta^\circ}{\varphi}(A)$
and, furthermore, $(\delta^\circ,\varphi,A) \in C_\xi$.
\label{itm:ih for game terms}
\end{enumerate}
In the inductive argument we distinguish the following cases:
\begin{itemize}
 \item In the base case we have that either $\psi = p$, $\psi = \neg p$,
$\gamma = g$ or $\gamma = g^d$. The inductive claims are trivially
satisfied for all of these cases because $\eta x^{\diam{\delta^\circ}
\psi} . B$ is not a subformula of either $\tr{\psi}$ or of
$\itrg{\chi}{\gamma}$.

 \item If $\psi = \chi_1 \land \chi_2$ or $\psi = \chi_1 \lor \chi_2$
and $\eta x^{\diam{\delta^\circ} \varphi} . B$ is a subformula of
$\tr{\psi}$ then $\eta x^{\diam{\delta^\circ} \varphi} . B$ is already a
subformula of $\tr{\chi_i}$ for some $i \in \{1,2\}$. It follows from
(\ref{itm:ih for formulas}) in the inductive assumption for $\chi_i$
that there is some $A$ such that $\eta x^{\diam{\delta^\circ} \varphi} .
B = \itrg{\delta^\circ}{\varphi}(A)$ and that $(\delta^\circ,\varphi,A)
\in C_\xi$. This is already the inductive claim we needed to show.

 \item Consider the case where $\psi = \diam{\gamma} \chi$ and assume
that $\eta x^{\diam{\delta^\circ} \varphi} . B$ is a subformula of
$\tr{\psi} = \itrg{\chi}{\gamma}(\tr{\chi})$. Because $\diam{\gamma}
\chi$ is a subformula of $\xi$ it follows from the definition of $C_\xi$
that $(\gamma,\chi,\tr{\chi}) \in C_\xi$. We can thus apply (\ref{itm:ih
for game terms}) from inductive hypotheses to $\gamma$ to obtain that
either $\eta x^{\diam{\delta^\circ} \varphi} . B$ is a subformula of
$\tr{\chi}$ or there exists some $A$ such that $\eta
x^{\diam{\delta^\circ} \varphi} . B = \itrg{\delta^\circ}{\varphi}(A)$
and $(\delta^\circ,\varphi,A) \in C_\xi$. In the latter case we are done
and in the former we can apply (\ref{itm:ih for formulas}) from the
inductive hypothesis to the subformula $\chi$ of $\psi$ and obtain the
required properties as well.

 \item For the cases where $\gamma = \gamma_1 \gchd \gamma_2$ or $\gamma
= \gamma_1 \gcha \gamma_2$ assume that $\eta x^{\diam{\delta^\circ}
\varphi} . B$ is a subformula of $\itrg{\chi}{\gamma}(D)$ and that
$(\gamma,\chi,D) \in C_\xi$. From the former it follows then that $\eta
x^{\diam{\delta^\circ} \varphi} . B$ is already a subformula of
$\itrg{\chi}{\gamma_i}(D)$ for some $i \in \{1,2\}$. From the latter it
follows with the closure conditions of $C_\xi$ that $(\gamma_i,\chi,D)
\in C_\xi$. We can thus immediately apply (\ref{itm:ih for game terms})
of the inductive hypothesis for the respective $\gamma_i$ to obtain the
required property.

 \item The interesting case is where $\gamma = \rho^\circ$ with $\circ
\in \{*,\cross\}$. We only consider the case where $\circ = {*}$. Assume
that $(\rho^*,\chi,D) \in C_\xi$ and that $\eta x^{\diam{\delta^\circ}
\varphi} . B$ is a subformula of
\[
 \itrg{\chi}{\rho^*}(D) = \mu x^{\diam{\rho^*} \chi} . D \lor
\itrg{\diam{\rho^*}\varphi}{\rho}(x^{\diam{\rho^*} \chi}).
\]
There are three possibilities how the latter might be the case.

First, $\eta x^{\diam{\delta^\circ} \varphi} . B$ might be equal to
$\itrg{\chi}{\rho^*}(D)$. In that case $\eta x^{\diam{\delta^\circ}
\varphi} . B$ is of the required shape and $(\rho^*,\chi,D) \in C_\xi$
holds by assumption.

Second, it might be that $\eta x^{\diam{\delta^\circ} \varphi} . B$ is a
subformula of $D$, in which case we are done immediately.

Third and last, $\eta x^{\diam{\delta^\circ} \varphi} . B$ might be a
subformula of $\itrg{\diam{\rho^*}\chi}{\rho}(x^{\diam{\rho^*} \chi})$.
In that case we can apply (\ref{itm:ih for game terms}) of the inductive
hypothesis to $\rho$ because $(\rho,\diam{\rho^*}\chi,x^{\diam{\rho^*}
\chi}) \in C_\xi$ follows from the assumption that $(\rho^*,\chi,D) \in
C_\xi$ together with the closure conditions from the definition of
$C_\xi$. Hence, it follows that either $\eta x^{\diam{\delta^\circ}
\varphi} . B$ is a subformula of $x^{\diam{\rho^*} \chi}$, which is
impossible, or that there is some $A$ such that $\eta
x^{\diam{\delta^\circ} \varphi} . B = \itrg{\delta^\circ}{\varphi}(A)$
and, furthermore, $(\delta^\circ,\varphi,A) \in C_\xi$. . The latter is
exactly what we have to show.

\item For the case where $\gamma = \gamma_1 \comp \gamma_2$ assume that
$\eta x^{\diam{\delta^\circ} \varphi} . B$ is a subformula of
$\itrg{\chi}{\gamma_1 \comp \gamma_2}(D) =
\itrg{\diam{\gamma_2}\chi}{\gamma_1}(\itrg{\chi}{\gamma_2}(D))$ and that
$(\gamma_1 \comp \gamma_2,\chi,D) \in C_\xi$. From the latter it follows
with the definition of $C_\xi$ that $(\gamma_1,\diam{\gamma_2} \chi,
\itrg{\chi}{\gamma_2}(D)) \in C_\xi$. Hence, we can apply (\ref{itm:ih
for game terms}) of the induction hypothesis to $\gamma_1$ and obtain
that either $\eta x^{\diam{\delta^\circ} \varphi} . B$ is already of the
required shape or it is a subterm of $\itrg{\chi}{\gamma_2}(D)$. In the
latter case we can use the fact that $(\gamma_2,\chi,D) \in C_\xi$,
which also follows from $(\gamma_1 \comp \gamma_2,\chi,D) \in C_\xi$, to
apply the induction hypothesis again, this time to $\gamma_2$, and
obtain that $\eta x^{\diam{\delta^\circ} \varphi} . B$ is either of the
required shape or that it is a subterm of $D$. This is precisely what we
need to show.

\item Of the cases where $\gamma = \alpha?$ or $\gamma = \alpha!$ we
just consider the former because the latter goes analogously. Assume
that $\eta x^{\diam{\delta^\circ} \varphi} . B$ is a subformula of
$\itrg{\chi}{\alpha ?}(D) = \tr{\alpha} \lor D$ and that $(\alpha ?,
\chi,D) \in C_\xi$. The former means that $\eta x^{\diam{\delta^\circ}
\varphi} . B$ is either a subformula of $D$, in which case we are done,
or it is a subformula of $\tr{\alpha}$. But if $\eta
x^{\diam{\delta^\circ} \varphi} . B$ is a subformula of $\tr{\alpha}$ we
can apply (\ref{itm:ih for formulas}) to the subformula $\alpha$ of
$\xi$ to conclude that $\eta x^{\diam{\delta^\circ} \varphi} . B$ must
be of the required shape.
\end{itemize}
\end{proof}

\begin{lemma} \label{tau constrained}
\rem{ Definition~\ref{d:tr} is well-defined if we define
$\itrg{\varphi}{\gamma}$ to be a function that accepts only formulas $A$
as argument that satisfy the constraint that all free variables in $A$
are of the form $x^{\diam{\delta^\circ} \psi}$ such that $\delta$ is a
strict superterm of $\gamma$.
}
Let $\xi$ be a game logic formula, let $(\gamma,\phi,A)$ be a recursive call of $\tau$ on $\xi$ and let 
 $x^{\diam{\delta^\circ} \psi}$ be a bound variable of $\tr{\xi}$. If $x^{\diam{\delta^\circ} \psi}$ is free in $A$,
then $\gamma$ is a proper subterm of $\delta^\circ$.
\end{lemma}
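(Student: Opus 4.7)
The plan is to prove the lemma by induction on the inductive definition of $C_\xi$, i.e., by case analysis on the closure clause witnessing that $(\gamma,\phi,A) \in C_\xi$. In each inductive case I will check that if $x^{\diam{\delta^\circ}\psi}$ occurs free in the third component, then $\gamma$ is a proper subterm of $\delta^\circ$, assuming this property for the triples given by the induction hypothesis.

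In the base case, $(\gamma,\phi,A) = (\gamma,\varphi,\tr{\varphi})$ for some subformula $\diam{\gamma}\varphi$ of $\xi$. A straightforward induction on the translation $\tr{(-)}$ shows that $\tr{\varphi}$ is a closed formula: the only clause introducing fresh variables is the one for iterated games, where the variable $x^{\diam{\gamma^\circ}\varphi}$ is immediately bound by the surrounding $\mu$ or $\nu$ operator. Hence $\FVar(\tr{\varphi}) = \emptyset$ and the implication holds vacuously.

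For the inductive cases I distinguish:
\begin{itemize}
  \item \textbf{Choice clauses:} from $(\gamma_1 \gcha \gamma_2,\phi,A) \in X$ (resp.\ $\gchd$) we derive $(\gamma_i,\phi,A)$ for $i \in \{1,2\}$. The third component is unchanged, so by the induction hypothesis $\gamma_1 \gcha \gamma_2$ (resp.\ $\gamma_1 \gchd \gamma_2$) is a proper subterm of $\delta^\circ$, and since $\gamma_i$ is a proper subterm of $\gamma_1 \gcha \gamma_2$, transitivity of the proper subterm relation yields the claim.
  \item \textbf{Iteration clauses:} from $(\gamma^\circ,\phi,A) \in X$ (with $\circ \in \{*,\cross\}$) we derive $(\gamma,\diam{\gamma^\circ}\phi,x^{\diam{\gamma^\circ}\phi})$. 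The third component has $x^{\diam{\gamma^\circ}\phi}$ as its unique free variable, so freeness of $x^{\diam{\delta^\circ}\psi}$ forces $\delta^\circ = \gamma^\circ$ (and $\psi = \phi$). Since $\gamma$ is a proper subterm of $\gamma^\circ$, we are done; no appeal to the induction hypothesis is needed here.
  \item \textbf{Composition clauses:} from $(\gamma_1 \comp \gamma_2,\phi,A) \in X$ we derive two kinds of triples. For $(\gamma_2,\phi,A)$ the third component is unchanged and the argument is as in the choice case. For $(\gamma_1,\diam{\gamma_2}\phi,\itrg{\phi}{\gamma_2}(A))$ I invoke Lemma~\ref{l:no free in tau}, which gives $\FVar(\itrg{\phi}{\gamma_2}(A)) \subseteq \FVar(A)$; hence freeness of $x^{\diam{\delta^\circ}\psi}$ in the new third component implies freeness in $A$, so the induction hypothesis for $(\gamma_1 \comp \gamma_2,\phi,A)$ yields that $\gamma_1 \comp \gamma_2$ is a proper subterm of $\delta^\circ$, whence so is $\gamma_1$.
\end{itemize}

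The main (very mild) obstacle is the composition clause, because the third component genuinely changes: for this step Lemma~\ref{l:no free in tau} is essential, since without a bound on the free variables of $\itrg{\phi}{\gamma_2}(A)$ the induction hypothesis on $A$ would not suffice. All other cases either preserve the third component or replace it by a single variable, both of which are handled immediately. The assumption that $x^{\diam{\delta^\circ}\psi}$ is bound in $\tr{\xi}$ is not used directly in the inductive step but ensures well-formedness; in particular, it guarantees in the iteration case that $x^{\diam{\gamma^\circ}\phi}$ is introduced by some fixpoint binder, matching the indexing discipline of the translation.
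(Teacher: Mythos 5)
Your proposal is correct and follows essentially the same route as the paper: an induction over the generating clauses of $C_\xi$ (the paper phrases it as a least-fixpoint induction, showing the set of triples with the desired property is closed under $c_\xi$), with the choice cases handled by transitivity of the subterm relation, the iteration cases by noting the third component is the single variable $x^{\diam{\gamma^\circ}\phi}$, and the composition case via Lemma~\ref{l:no free in tau}. Your base case is even slightly stronger than the paper's (you observe $\tr{\varphi}$ is closed rather than merely free of bound variables of $\tr{\xi}$), but this is a harmless variation.
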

\begin{proof}
We prove this by least fixpoint induction: let  $X$ be the set of all call triples $(\gamma,\phi,A)$ such that, for every bound variable $x^{\diam{\delta^\circ} \psi}$ of $\tr{\xi}$ such that $x^{\diam{\delta^\circ} \psi}$ is free in $A$, we have that $\gamma$ is a strict subterm of $\delta$.  We show that $c_\xi(X) \subseteq X$, hence $C_\xi \subseteq X$ as required.

So suppose that the triple $t$ is in $c_\xi(X)$. We make a case distinction:
\begin{itemize}
\item $t = (\gamma,\varphi,\tr{\phi})$ for some subformula $\diam{\gamma}\varphi$ of $\xi$. Then $t \in X$ since no bound variable of $\xi$ appears free in $\tr{\phi}$, which means that the defining condition of the set $X$ holds trivially. 
\item $t$ is of the form $(\gamma,\phi,A)$ or $(\gamma',\phi,A)$ where $ (\gamma \gcha \gamma', \phi, A) \in X$. Given a bound variable $x^{\diam{\delta^\circ} \psi}$ of $\tr{\xi}$ such that $x^{\diam{\delta^\circ} \psi}$ is free in $A$, since $ (\gamma \gcha \gamma', \phi, A) \in X$ it follows that $\gamma \gcha \gamma'$ is a proper subterm of $\delta^\circ$. So clearly this also holds for both $\gamma$ and $\gamma'$. Hence $(\gamma,\phi,A) \in X$ and $(\gamma',\phi,A) \in X$ as required. 
\item The case where $t$ is of the form $(\gamma,\phi,A)$ or $(\gamma',\phi,A)$ for $ (\gamma \gchd \gamma', \phi, A) \in X$ is similar.
\item Suppose $t$ is of the form $(\gamma, \diam{\gamma^\cross} \phi,x^{ \diam{\gamma^\cross} \phi})$ and there exists some $A$ for which $(\gamma^\cross,\phi,A) \in X$. Given a bound variable $x^{\diam{\delta^\circ} \psi}$ of $\tr{\xi}$, the only possible way that $x^{\diam{\delta^\circ} \psi}$ can appear free in $x^{ \diam{\gamma^\cross} \phi}$ is if $x^{\diam{\delta^\circ} \psi} = x^{ \diam{\gamma^\cross} \phi}$, so ${\diam{\delta^\circ} \psi} = { \diam{\gamma^\cross} \phi}$ and therefore $\delta^\circ = \gamma^\cross$. Since $\gamma$ is a proper subterm of $\gamma^\cross$, this means that the required conclusion holds. 
\item The case where $t$ is of the form $(\gamma, \diam{\gamma^*} \phi,x^{ \diam{\gamma^*} \phi})$  is similar. 
\item Suppose $t$ is of the form $(\gamma',\phi,A)$ for some $\gamma,\gamma',\phi$ and $A$ such that  $((\gamma \comp \gamma'),\phi,A) \in X$. If  $x^{\diam{\delta^\circ} \psi}$ is a bound variable of $\tr{\xi}$ that appears free in $A$, then since $((\gamma \comp \gamma'),\phi,A) \in X$  it follows that $\gamma \comp \gamma'$ is  a proper subterm of $\delta^\circ$. Hence, so is $\gamma'$. 
\item Finally, suppose $t$ is of the form  $(\gamma,\diam{\gamma'}\phi,\itrg{\gamma'
}{\phi}(A))$  for some $\gamma,\gamma',\phi$ and $A$ such that  $((\gamma \comp \gamma'),\phi,A) \in X$. If  $x^{\diam{\delta^\circ} \psi}$ is a bound variable of $\tr{\xi}$ that appears free in $\itrg{\gamma'
}{\phi}(A)$, then by Lemma \ref{l:no free in tau}  $x^{\diam{\delta^\circ} \psi}$ appears free in $A$ as well. Since $((\gamma \comp \gamma'),\phi,A) \in X$  it follows that $\gamma \comp \gamma'$ is  a proper subterm of $\delta^\circ$, hence so is $\gamma$. 
\end{itemize}
We have shown that $t \in c_\xi (X)$ implies $t \in X$, so the proof is finished.  
\end{proof}

\begin{lemma} \label{l:compatible orders}
Let $\xi$ be a game logic formula and 
$\diam{\gamma^\circ} \varphi, \diam{\delta^\dagger} \psi \in F(\xi)$,
where $\circ,\dagger \in \{{*},\cross\}$.
If
$x^{\diam{\delta^\dagger} \psi} \lessAL[\tr{\xi}] x^{\diam{\gamma^\circ} \varphi}$
then
$\diam{\delta^\dagger} \psi \lessfp[\xi] \diam{\gamma^\circ} \varphi$.
\end{lemma}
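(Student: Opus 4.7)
The plan is to unpack the definition of $\lessAL[\tr{\xi}]$ and apply the preparatory lemmas (especially Lemmas~\ref{l:variables and recursive calls}, \ref{tau constrained}, and~\ref{l:no free in tau}) to pin down exactly which variables can appear free inside a fixpoint binder of $\tr{\xi}$.

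First I would translate the hypothesis: $x^{\diam{\delta^\dagger}\psi} \lessAL[\tr{\xi}] x^{\diam{\gamma^\circ}\varphi}$ means there is a subformula of $\tr{\xi}$ of the form $\sigma x^{\diam{\gamma^\circ}\varphi} . B$ (with $\sigma\in\{\mu,\nu\}$) in which $x^{\diam{\delta^\dagger}\psi}$ occurs free. By Lemma~\ref{l:variables and recursive calls} there exists a monotone $\mu$-calculus formula $A$ with $\sigma x^{\diam{\gamma^\circ}\varphi} . B = \itrg{\varphi}{\gamma^\circ}(A)$ and $(\gamma^\circ,\varphi,A) \in C_\xi$. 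Unfolding the definition of $\itrg{\varphi}{\gamma^\circ}(A)$ (splitting on $\circ \in \{*,\cross\}$), this fixpoint formula has the shape $\sigma x^{\diam{\gamma^\circ}\varphi} . \bigl(A \,\text{op}\, \itrg{\diam{\gamma^\circ}\varphi}{\gamma}(x^{\diam{\gamma^\circ}\varphi})\bigr)$, where the operator is $\lor$ in the $*$-case and $\land$ in the $\cross$-case. Hence $B = A \,\text{op}\, \itrg{\diam{\gamma^\circ}\varphi}{\gamma}(x^{\diam{\gamma^\circ}\varphi})$ and the free occurrence of $x^{\diam{\delta^\dagger}\psi}$ lies in one of the two disjuncts/conjuncts.

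Next I would rule out that $x^{\diam{\delta^\dagger}\psi}$ occurs free in $\itrg{\diam{\gamma^\circ}\varphi}{\gamma}(x^{\diam{\gamma^\circ}\varphi})$. By Lemma~\ref{l:no free in tau} every free variable of that term is already free in $x^{\diam{\gamma^\circ}\varphi}$, so the only candidate is $x^{\diam{\gamma^\circ}\varphi}$ itself; but this variable is bound by the outer $\sigma$, contradicting the assumed free occurrence in $\sigma x^{\diam{\gamma^\circ}\varphi} . B$. Therefore $x^{\diam{\delta^\dagger}\psi}$ occurs free in $A$.

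Finally, I would invoke Lemma~\ref{tau constrained} on the recursive call $(\gamma^\circ,\varphi,A) \in C_\xi$. Since $\diam{\delta^\dagger}\psi \in F(\xi)$, Lemma~\ref{lem:corr-fixpt-to-vars-items} gives $x^{\diam{\delta^\dagger}\psi} \in \Var(\tr{\xi})$, and as $\tr{\xi}$ is a sentence this variable is bound in $\tr{\xi}$. Lemma~\ref{tau constrained} therefore yields that $\gamma^\circ$ is a strict subterm of $\delta^\dagger$, which by Definition~\ref{def:F-order} is exactly $\diam{\delta^\dagger}\psi \lessfp[\xi] \diam{\gamma^\circ}\varphi$. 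The only delicate point to double check is the applicability of Lemma~\ref{tau constrained}, specifically that $x^{\diam{\delta^\dagger}\psi}$ qualifies as a \emph{bound} variable of $\tr{\xi}$ in the sense required by that lemma; this follows from Lemma~\ref{lem:corr-fixpt-to-vars-items} together with the observation that the translation produces no free variables. Everything else is a direct syntactic unfolding.
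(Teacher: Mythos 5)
Your proposal is correct and follows essentially the same route as the paper's own proof: invoke Lemma~\ref{l:variables and recursive calls} to write the binder as $\itrg{\varphi}{\gamma^\circ}(A)$ with $(\gamma^\circ,\varphi,A)\in C_\xi$, use Lemma~\ref{l:no free in tau} to exclude a free occurrence of $x^{\diam{\delta^\dagger}\psi}$ in the $\itrg{\diam{\gamma^\circ}\varphi}{\gamma}(x^{\diam{\gamma^\circ}\varphi})$ part, and then apply Lemma~\ref{tau constrained} to get the proper-subterm relation and hence $\lessfp$. Your extra check that $x^{\diam{\delta^\dagger}\psi}$ is indeed a bound variable of $\tr{\xi}$ is a point the paper leaves implicit, but it is not a different argument.
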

\begin{proof}
We consider the case that $\gamma^\circ=\gamma^\cross$. The case where $\circ=*$ is similar.

Assume that the variable $x^{\diam{\delta^\dagger} \psi}$ is free
in some subformula $\nu x^{\diam{\gamma^\cross} \varphi} . B$ of
$\tr{\xi}$. By Lemma~\ref{l:variables and recursive calls},
 there exists some $A$ such that 
$$ \nu x^{\diam{\gamma^\cross} \varphi} . B = \itrg{\gamma^\cross}{\varphi}(A) $$
 and furthermore, 
$(\gamma^\cross,\varphi,A) \in C_\xi$. 
But we have:
\[
 \itrg{\varphi}{\gamma^\cross}(A) = \nu x^{\diam{\gamma^\cross} \varphi} .
A \land \itrg{\diam{\gamma^\cross}\varphi}{\gamma}(x^{\diam{\gamma^\cross}
\varphi}),
\]
so 
\[
\nu x^{\diam{\gamma^\cross} \varphi} . B = \nu x^{\diam{\gamma^\cross} \varphi} .
A \land \itrg{\diam{\gamma^\cross}\varphi}{\gamma}(x^{\diam{\gamma^\cross}
\varphi}),
\]

so the variable $x^{\diam{\delta^\dagger} \psi}$ must occur freely in either $A$ or in $\itrg{\diam{\gamma^\cross}\varphi}{\gamma}(x^{\diam{\gamma^\cross} \varphi})$.
If the variable $x^{\diam{\delta^\dagger} \psi}$ occurs freely in
$\itrg{\diam{\gamma^\cross}\varphi}{\gamma}(x^{\diam{\gamma^\cross} \varphi})$
then by Lemma~\ref{l:no free in tau},
$\diam{\gamma^\cross} \varphi = \diam{\delta^\dagger} \psi$, hence $x^{\diam{\gamma^\cross} \varphi} = x^{\diam{\delta^\dagger} \psi}$. But this contradicts our assumption that $x^{\diam{\delta^\dagger} \psi} \lessAL[\tr{\xi}] x^{\diam{\gamma^\cross} \varphi}$, since $\lessAL[\tr{\xi}]$ is an irreflexive relation. 

So the only possibility is that $x^{\diam{\delta^\dagger} \psi}$ occurs freely in $A$.
Since $(\gamma^\cross,\varphi,A) \in C_\xi$ was a recursive call of $\tau$ on $\xi$, it follows from Lemma~\ref{tau constrained}
that $\gamma^\cross$ is a proper subterm of $\delta^\dagger$. Hence $\diam{\delta^\dagger} \psi \lessfp[\xi]  \diam{\gamma^\cross} \varphi$ as required.
\end{proof}

\begin{proofof}{Proposition~\ref{p:translation}}
From Lemma~\ref{l:compatible orders} it follows that $\tr{\xi}$ is
locally well-named: If $\tr{\xi}$ was not locally well-named then the transitive
closure $\lesstrAL[\tr{\xi}]$ of $\lessAL[\tr{\xi}]$ would be reflexive
and hence by Lemma~\ref{l:compatible orders} the relation $\lessfp$
would be reflexive as well. But $\lessfp$ is irreflexive,
since the strict subterm relation is irreflexive.

That $x^\varphi \leqAL[\tr{\xi}] x^\psi$ implies $\phi \leqfp \psi$
follows also from Lemma~\ref{l:compatible orders} because
$\leqAL[\tr{\xi}]$ is the reflexive and transitive closure of
$\lessAL[\tr{\xi}]$ and $\leqfp$ is transitive and reflexive, as it is
defined via the subterm relation.
\qed
\end{proofof}

\subsection{Omitted proofs of Section \ref{sec:s-c}}


We prove completeness for $\HilbP$ from completeness of $\GLms$
by going via an intermediate Hilbert system
\HilbFull for the language $\langFull$.
Note that $\langPar \subseteq \langFull$. 
The system \HilbFull is defined as the extension of \HilbP with 
the axioms and rules listed in Figure~\ref{fig:pfFullRu} below. 

 \begin{figure}[htb]
\fbox{
\begin{minipage}[t]{.47\textwidth}
  \begin{minipage}[t]{.5\textwidth}
\quad\textbf{Additional axioms:}
  \begin{itemize}
  \item $\diam{\gamma \gchd \delta}\phi \lra \diam{\gamma}\phi \land \diam{\delta}\phi$
  \item $\diam{\gamma^\cross}\phi \lra \phi \land \diam{\gamma}\diam{\gamma^\cross}\phi$
  \item $\diam{{\psi}!}\phi \lra \psi \lor \phi$
  \end{itemize}
    \end{minipage}
  \begin{minipage}[t]{.47\textwidth}
\quad\textbf{Additional rule:}
    \begin{prooftree}
      \AxiomC{$\phi \to\diam{\gamma}\phi$}
      \RightLabel{\RuBdi}  
      \UnaryInfC{$\phi \to \diam{\gamma^\cross}\phi$}
    \end{prooftree}
  \end{minipage}
\end{minipage}
}
\caption{Additional axioms and rules of $\HilbFull$.}
\label{fig:pfFullRu}
\end{figure}



The next lemma shows that 
$\HilbFull$ is a conservative extension of $\HilbP$.

\begin{lemma}
\label{l:HilbFull-to-HilbP-pari}
\label{c:pari}
 For all $\phi\in\langFull$, if $\HilbFull \vdash \phi$ then $\HilbP \vdash \pari{\phi}$.
\end{lemma}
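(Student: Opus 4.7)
The plan is to proceed by induction on the $\HilbFull$-derivation of $\phi$, showing that for each axiom of $\HilbFull$ its $\pa{-}$-translation is $\HilbP$-derivable, and that each rule of $\HilbFull$ is admissible along $\pa{-}$ in $\HilbP$. Since $\pa{-}$ commutes with the Boolean connectives and with diamonds (both literally and, in the case of $\land$, up to a propositional tautology inside $\HilbP$), the axioms and rules shared with $\HilbP$ pose no difficulty: propositional tautologies translate to propositional tautologies; the PDL-style axioms 2--6 of $\HilbP$ are preserved by $\pa{-}$ (recall $\pa{\gamma^{d}} = \pa{\gamma}^{d}$, $\pa{\gamma \gcha \delta} = \pa{\gamma} \gcha \pa{\delta}$, $\pa{\psi?} = \pa{\psi}?$); and the rules $\RuMP$, $\RuMon$, $\RuBarInd$ transfer directly because their schematic premises and conclusions are preserved by $\pa{-}$.

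The interesting work is in handling the three extra demonic axioms and the rule $\RuBdi$. For each of the demonic axioms, the strategy is the same: use the definition of $\pa{-}$ on the demonic operator (which always produces a dualised angelic operator) together with the dual axiom $\diam{\gamma^{d}} \phi \lra \lnot \diam{\gamma} \lnot \phi$ of $\HilbP$ to push the outer dual inside, and then apply the corresponding angelic axiom of $\HilbP$. Concretely, $\pa{\diam{\gamma \gchd \delta} \phi} = \diam{(\pa{\gamma}^{d} \gcha \pa{\delta}^{d})^{d}} \pa{\phi}$, which by the dual and angelic choice axioms is equivalent in $\HilbP$ to $\lnot(\diam{\pa{\gamma}^{d}}\lnot\pa{\phi} \lor \diam{\pa{\delta}^{d}}\lnot\pa{\phi})$, and two further applications of the dual axiom yield $\diam{\pa{\gamma}} \pa{\phi} \land \diam{\pa{\delta}} \pa{\phi} = \pa{\diam{\gamma} \phi \land \diam{\delta} \phi}$. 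The demonic test axiom and the demonic unfolding axiom $\diam{\gamma^{\cross}}\phi \lra \phi \land \diam{\gamma}\diam{\gamma^{\cross}}\phi$ are handled analogously, the latter using the angelic unfolding axiom (axiom 4 of $\HilbP$) in place of the choice axiom.

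The main obstacle, and the only case requiring a genuine fixpoint argument, is the rule $\RuBdi$. Here I assume by induction that $\HilbP \vdash \pa{\phi \to \diam{\gamma} \phi}$, that is, $\HilbP \vdash \pa{\phi} \to \diam{\pa{\gamma}} \pa{\phi}$, and I must derive $\HilbP \vdash \pa{\phi} \to \diam{(\iter{(\pa{\gamma}^{d})})^{d}} \pa{\phi}$. Writing $\alpha \isdef \pa{\gamma}$ and $\chi \isdef \pa{\phi}$, the dual axiom reduces this target to
\[
  \diam{\iter{(\alpha^{d})}} \lnot \chi \to \lnot \chi,
\]
which by $\RuBarInd$ follows from $\diam{\alpha^{d}} \lnot \chi \to \lnot \chi$. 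Applying the dual axiom once more and contraposing, this last formula is propositionally equivalent in $\HilbP$ to the induction hypothesis $\chi \to \diam{\alpha} \chi$. Composing these steps in $\HilbP$ and using $\RuMP$ throughout gives the required derivation.

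The only potential subtlety is bookkeeping: one needs that $\pa{-}$ commutes (up to $\HilbP$-provable equivalence) with $\lnot$ and with the defined connective $\to$ on arbitrary $\langFull$-formulas, which is a routine induction on formula structure using the definition of $\pa{-}$ and propositional reasoning in $\HilbP$. With this in hand, the induction on $\HilbFull$-derivations goes through as described, and Lemma~\ref{l:HilbFull-to-HilbP-pari} follows.
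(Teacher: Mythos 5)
Your proposal is correct and follows essentially the same route as the paper: check that the $\pari{-}$-translation of each $\HilbFull$-axiom is $\HilbP$-derivable by pushing the dual inside via the dual axiom and then invoking the corresponding angelic axiom, and show $\RuBdi$ admissible by contraposing the translated premise, rewriting $\lnot\diam{\pari{\gamma}}\lnot$ as $\diam{\pari{\gamma}^{d}}$, applying $\RuBarInd$, and contraposing back through the dual axiom. The only difference is presentational: you make explicit the routine commutation of $\pari{-}$ with the Boolean connectives and the shared axioms and rules, which the paper leaves implicit.
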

\begin{proof}
We need to check that the $\pari{}$-translation of every axiom of $\HilbFull$ is derivable in $\HilbP$, and that the $\pari{}$-translation of every instance of a rule of $\HilbFull$ is admissible in $\HilbP$. We shall allow defined propositional connectives like $\to, \lra, \land$ as abbreviations here. 

\textbf{Case:}  $\diam{\gamma \gchd \delta}\phi \lra \diam{\gamma}\phi \land \diam{\delta}\phi$. The translation of this axiom becomes:
$$\diam{(\pari{\gamma}^d \gcha \pari{\delta}^d)^d} \pari{\phi} \lra  \diam{\pari{\gamma}}\pari{\phi} \land \diam{\pari{\delta}}\pari{\phi}$$
We derive this through the following chain of provable equivalences in \HilbP:

\begin{eqnarray*}
 & \diam{(\pari{\gamma}^d \gcha \pari{\delta}^d)^d} \pari{\phi} \\
  \Leftrightarrow &  \neg \diam{\pari{\gamma}^d \gcha \pari{\delta}^d} \neg \pari{\phi} \\
  \Leftrightarrow   & \neg (\diam{\pari{\gamma}^d} \neg{\pari{\phi}} \vee \diam{\pari{\delta}^d} \neg \pari{\phi}) 
\\  \Leftrightarrow  &  \neg \diam{\pari{\gamma}^d} \neg{\pari{\phi}} \wedge \neg \diam{\pari{\delta}^d} \neg \pari{\phi}
\\  \Leftrightarrow  &  \neg \neg \diam{\pari{\gamma}}\neg  \neg{\pari{\phi}} \wedge \neg \neg \diam{\pari{\delta}}\neg \neg \pari{\phi})
\\  \Leftrightarrow  &  \diam{\pari{\gamma}}{\pari{\phi}} \wedge  \diam{\pari{\delta}}\pari{\phi})
\end{eqnarray*}

\textbf{Case:}
$\diam{\gamma^\cross}\phi \lra \phi \land \diam{\gamma}\diam{\gamma^\cross}\phi$. The translation of this axiom becomes:
$$\diam{((\pari{\gamma}^d)^*)^d} \pari{\phi} \lra \pari{\phi} \land \diam{\pari{\gamma}}\diam{((\pari{\gamma}^d)^*)^d} \pari{\phi}$$
We derive this through the following chain of provable equivalences in \HilbP:
\begin{eqnarray*}
& \diam{((\pari{\gamma}^d)^*)^d} \pari{\phi} \\
 \Leftrightarrow & \neg \diam{(\pari{\gamma}^d)^*} \neg \pari{\phi} \\
 \Leftrightarrow & \neg (\neg \pari{\phi} \vee  \diam{\pari{\gamma}^d}  \diam{(\pari{\gamma}^d)^*} \neg \pari{\phi}) \\
 \Leftrightarrow & \neg (\neg \pari{\phi} \vee  \neg \diam{\pari{\gamma}} \neg \diam{(\pari{\gamma}^d)^*} \neg \pari{\phi}) \\
 \Leftrightarrow & \pari{\phi} \land \diam{\pari{\gamma}} \neg \diam{(\pari{\gamma}^d)^*} \neg \pari{\phi}  \\
  \Leftrightarrow &   \pari{\phi} \land \diam{\pari{\gamma}}\diam{((\pari{\gamma}^d)^*)^d} \pari{\phi}
\end{eqnarray*}
\textbf{Case:}
 $\diam{{\psi}!}\phi \lra \psi \lor \phi$
The translation of this axiom becomes:
$$\diam{({\neg \pari{\psi}}?)^d}\pari{\phi} \lra \pari{\psi} \lor \pari{\phi}$$
We derive this through the following chain of provable equivalences in \HilbP:
\begin{eqnarray*}
& \diam{({\neg \pari{\psi}}?)^d}\pari{\phi} \\
\Leftrightarrow &  \neg \diam{\neg \pari{\psi}?}\neg \pari{\phi} \\
\Leftrightarrow &  \neg (\neg \pari{\psi} \land  \neg \pari{\phi}) \\'
\Leftrightarrow & \pari{\psi} \lor \pari{\phi}
\end{eqnarray*}
\textbf{Case:}
      $$\frac{\phi \to\diam{\gamma}\phi}{\phi \to \diam{\gamma^\cross}\phi}$$
We show that the translation of this rule is derivable in $\HilbP$.
It then follows that it is admissible.
The translation of the premise of this rule is $\pari{\phi} \to \diam{\pari{\gamma}}\pari{\phi}$, and the conclusion becomes $\pari{\phi} \to \diam{((\pari{\gamma}^d)^*)^d}\pari{\phi}$. So suppose that $\HilbP \vdash \pari{\phi} \to \diam{\pari{\gamma}}\pari{\phi}$. Then $\HilbP \vdash \neg \diam{\pari{\gamma}}\pari{\phi} \to \neg \pari{\phi}$, which gives $\HilbP \vdash\diam{\pari{\gamma}^d} \neg \pari{\phi} \to \neg \pari{\phi}$. Bar Induction gives:
$$\HilbP \vdash\diam{(\pari{\gamma}^d)^*} \neg \pari{\phi} \to \neg \pari{\phi}$$
Contraposition now gives:
$$\HilbP \vdash \pari{\phi} \to \neg \diam{(\pari{\gamma}^d)^*}\neg \pari{\phi},$$ and the desired conclusion follows by the equivalence $\neg \diam{(\pari{\gamma}^d)^*}\neg \pari{\phi} \Leftrightarrow\diam{((\pari{\gamma}^d)^*)^d}\pari{\phi}$. This concludes the proof. 
\end{proof}

The convenience of working in a Hilbert system for the full language
is that we can prove the following lemma.

\begin{lemma}
\label{l:normalform-eq}
\label{p:normalform-eq}
For all $\phi \in \langFull$, 
$\HilbFull \vdash \phi \lra \dnnf{\phi}$.
\end{lemma}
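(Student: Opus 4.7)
The plan is to prove the lemma by mutual structural induction on the complexity of formulas $\phi \in \langFull$ and games $\gamma \in \gameFull$, jointly establishing: (a) $\HilbFull \vdash \phi \lra \dnnf{\phi}$ for every $\phi$; and (b) $\HilbFull \vdash \diam{\gamma}\psi \lra \diam{\dnnf{\gamma}}\psi$ for every $\gamma$ and every $\psi \in \langFull$. The two parts are mutually dependent: the case of (a) for a modal formula $\diam{\gamma}\phi$ appeals to (b), and the cases of (b) for test games appeal to (a). Throughout I will freely use the monotonicity rule $\RuMon$ to lift provable equivalences into modal contexts, plus all propositional tautologies.

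The atomic, Boolean, and non-dual game cases are routine: they amount to applying the induction hypothesis componentwise and invoking the corresponding axiom of $\HilbFull$ (the $\HilbP$ axioms for $\comp$, $\gcha$, $\test{\phi}$, $\iter{\gamma}$ together with the additional $\HilbFull$ axioms for $\gchd$, $\dtest{\phi}$, $\diter{\gamma}$). The three ``push-negation'' clauses of $\dnnf{-}$ on formulas are handled as follows: $\dnnf{\lnot\lnot\phi} = \dnnf{\phi}$ uses double negation and IH; $\dnnf{\lnot(\phi \lor \psi)} = \dnnf{\lnot\phi} \land \dnnf{\lnot\psi}$ uses De Morgan and IH; and $\dnnf{\lnot\diam{\gamma}\phi} = \diam{\dnnf{\gamma^d}}\dnnf{\lnot\phi}$ is obtained by first rewriting $\lnot\diam{\gamma}\phi$ as $\diam{\gamma^d}\lnot\phi$ via the dual axiom, then invoking IH on $\gamma^d$ and on $\lnot\phi$. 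The corresponding ``push-dual'' clauses on games, including the involution $\dnnf{(\gamma^d)^d} = \dnnf{\gamma}$ and the dual-distribution clauses such as $\dnnf{(\gamma \gcha \delta)^d} = \dnnf{\gamma^d} \gchd \dnnf{\delta^d}$ and $\dnnf{(\test{\phi})^d} = \dtest{\dnnf{\lnot\phi}}$, are each derived by applying the dual axiom $\diam{\gamma^d}\psi \lra \lnot\diam{\gamma}\lnot\psi$ together with the appropriate angelic axiom and IH.

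The genuine difficulty sits in the two iteration-under-dual clauses, $\dnnf{(\iter{\gamma})^d} = \diter{\dnnf{\gamma^d}}$ and $\dnnf{(\diter{\gamma})^d} = \iter{\dnnf{\gamma^d}}$, which force us to use the induction rules. I sketch the first; the second is dual. Combining the dual axiom with the unfolding axiom for $\iter{\gamma}$ and IH on $\gamma^d$ yields
\[
\HilbFull \vdash \diam{(\iter{\gamma})^d}\phi \lra \phi \land \diam{\dnnf{\gamma^d}}\diam{(\iter{\gamma})^d}\phi,
\]
so $\diam{(\iter{\gamma})^d}\phi$ satisfies the very fixpoint equation characterising $\diam{\diter{\dnnf{\gamma^d}}}\phi$. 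The forward implication $\diam{(\iter{\gamma})^d}\phi \to \diam{\diter{\dnnf{\gamma^d}}}\phi$ follows by applying $\RuBdi$ with the witness $\diam{(\iter{\gamma})^d}\phi$, followed by monotonicity to strip the trailing occurrence of this witness. The converse is the harder direction: I would prove its contrapositive $\diam{\iter{\gamma}}\lnot\phi \to \lnot\diam{\diter{\dnnf{\gamma^d}}}\phi$ using $\RuBarInd$, where the base case $\lnot\phi \to \lnot\diam{\diter{\dnnf{\gamma^d}}}\phi$ and the inductive step $\diam{\gamma}\lnot\diam{\diter{\dnnf{\gamma^d}}}\phi \to \lnot\diam{\diter{\dnnf{\gamma^d}}}\phi$ both come from the $\diter{}$-fixpoint axiom combined with IH on $\gamma^d$ (and, for the step, the dual axiom once more). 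The main obstacle is orchestrating these two Bar Induction arguments carefully; every other case is bookkeeping modulo propositional logic.
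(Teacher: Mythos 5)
Your proposal is correct and takes essentially the same route as the paper: the paper's own proof of this lemma is just the one-line remark that it is a straightforward induction on the complexity of formulas, and your mutual induction on formulas and games---carrying along the auxiliary equivalence $\diam{\gamma}\psi \lra \diam{\dnnf{\gamma}}\psi$ and settling the dualised-iteration clauses via the fixpoint axioms together with \RuBdi and \RuBarInd---is precisely the intended elaboration of that induction. The only cosmetic point is that in the converse direction for $\dnnf{(\iter{\gamma})^d}$ the rule \RuBarInd is formally applied with induction formula $\lnot\diam{\diter{\dnnf{\gamma^d}}}\phi$, your ``base case'' then entering through \RuMon and chaining, which matches how the paper orchestrates the analogous Bar Induction steps in its appendix lemmas, so there is no gap.
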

\begin{proof}
Straightforward induction on the complexity of formulas.  
\end{proof}

Since we eventually want to connect $\GLms$-provability of $\dnnf{\phi}$
with $\HilbP$-provability of $\phi$ via $\HilbFull$,
the following proposition takes care of one of the steps. 

\begin{proposition}\label{p:HilbFull-to-HilbP}
For all $\phi \in \langPar$,
if $\HilbFull \vdash \dnnf{\phi}$  then $\HilbP \vdash \phi$. 
\medskip
\end{proposition}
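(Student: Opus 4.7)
The plan is to chain the two preceding lemmas, using that $\langPar \subseteq \langFull$. Concretely, given $\phi \in \langPar$ with $\HilbFull \vdash \dnnf{\phi}$, I would first invoke Lemma~\ref{l:normalform-eq} (applied to $\phi$ itself, viewed as a formula of $\langFull$) to obtain $\HilbFull \vdash \phi \lra \dnnf{\phi}$. Since $\HilbFull$ extends $\HilbP$ and therefore contains all propositional tautologies and $\RuMP$, combining $\HilbFull \vdash \dnnf{\phi}$ with the equivalence yields $\HilbFull \vdash \phi$.

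Next I would apply Lemma~\ref{l:HilbFull-to-HilbP-pari} to conclude $\HilbP \vdash \pari{\phi}$. To descend from $\pari{\phi}$ to $\phi$ itself, I would observe (by a short simultaneous induction on formulas and games of $\langPar$) that the translation $\pari{-}$ acts as the identity on the fragment $\langPar$: the cases for $p$, $\lnot$, $\lor$, $\diam{\gamma}$, and the game constructors $g,\comp,\gcha,\iter{},{}^{d},{?}$ each produce back the same constructor applied to the translated subterms, while the clause $\pari{\phi \land \psi} = \lnot(\lnot\pari{\phi}\lor\lnot\pari{\psi})$ coincides, up to the abbreviation $\land$, with $\phi\land\psi$. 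Hence $\pari{\phi}$ and $\phi$ are literally the same formula in $\langPar$ (modulo the defined $\land$), giving $\HilbP \vdash \phi$.

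There is essentially no obstacle here: the real work was already done in Lemma~\ref{l:HilbFull-to-HilbP-pari}, which establishes conservativity by translating each additional axiom and rule of $\HilbFull$ into a derivable schema of $\HilbP$, and in Lemma~\ref{l:normalform-eq}, which shows that bringing a formula into dual/negation normal form is a theorem of the extended system. The only subtlety worth stating carefully is the identity-on-$\langPar$ property of $\pari{-}$, which is immediate from Definition~\ref{def:trNormPar} once one reads $\phi \land \psi$ in $\langPar$ as its official abbreviation $\lnot(\lnot\phi\lor\lnot\psi)$. The resulting proof is therefore a couple of lines of chaining, with no induction on proofs required.
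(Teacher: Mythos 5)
Your proof is correct and follows essentially the same route as the paper: use Lemma~\ref{l:normalform-eq} to pass from $\HilbFull \vdash \dnnf{\phi}$ to $\HilbFull \vdash \phi$, then apply the conservativity Lemma~\ref{l:HilbFull-to-HilbP-pari} together with the observation that $\pari{\phi} = \phi$ for $\phi \in \langPar$. Your extra remark about $\land$ being a defined abbreviation just makes explicit a point the paper leaves implicit.
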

\begin{proof}
  Due to Lemma~\ref{l:normalform-eq}, it suffices to show that
  for all $\phi \in \langPar$,
  if $\HilbFull \vdash \phi$  then $\HilbP \vdash \phi$.
  This follows from Lemma~\ref{l:HilbFull-to-HilbP-pari}.
  since, if $\phi \in \langPar$ then $\pari{\phi} = \phi$.
\end{proof}


We now show that we can translate $\GLms$-derivations into
$\HilbFull$-derivations.

\begin{proposition}
\label{p:GLms-to-HilbNorm}
\label{glms-to-hilbnorm}
For all sequents $\Phi \subseteq\langNorm$,
\[
\GLms \vdash \Phi \quad\text{ implies }\quad \HilbFull \vdash \bigvee \Phi.
\]
Consequently, for all $\xi\in\langNorm$,
if $\GLms \vdash \xi$ then $\HilbFull \vdash \xi$.  
\end{proposition}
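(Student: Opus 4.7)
The plan is to induct on the height of the $\GLms$-derivation of $\Phi$. For every rule $\Phi_1 \ldots \Phi_n / \Phi$ of $\GLms$, I verify that if $\HilbFull \vdash \bigvee \Phi_i$ for each premise then $\HilbFull \vdash \bigvee \Phi$. The routine cases are: the axiom \Ax, which yields the propositional tautology $\bigvee\Phi \lor \nnf{\bigvee\Phi}$; the rules \RuOr, \RuAnd, \RuWeak, which reduce to classical propositional reasoning; and \RuMonMod, where one uses a single application of $\RuMon$ of \HilbFull together with the dual axiom $\diam{g^d}\psi \lra \lnot\diam{g}\lnot\psi$ to turn $\vdash \phi \lor \psi$ into $\vdash \diam{g}\phi \lor \diam{g^d}\psi$. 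Each game-operator rule (\RuStar, \RuCross, \RuChoiceA, \RuChoiceD, \RuTestA, \RuTestD) corresponds directly to one of the bi-implication axioms of \HilbFull, so premise and conclusion are already propositionally equivalent modulo $\bigvee\Phi$.

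The deep rules \RuCompD, \RuMonDfmla, \RuMonDgame are handled by a standard ``substitution-under-monotonicity'' lemma: in \HilbFull one first proves the replacement implications $\diam{\gamma\comp\delta}\phi \lra \diam{\gamma}\diam{\delta}\phi$, $\phi \to \diam{\chi!}\phi$, and $\diam{\gamma}\phi \to \diam{\chi!\comp\gamma}\phi$ from the corresponding axioms, and then iterates $\RuMon$ (together with the propositional connectives) to push each implication through the positive context $\psi(\cdot)$.

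The hard part will be the strengthened induction rule $\RuIndS$. Write $\chi \isdef \nnf{\Phi}$, $\alpha \isdef \chi!\comp\gamma$, and $\theta \isdef \diam{\alpha^\cross}\diam{\chi!}\phi$. By Lemma~\ref{l:normalform-eq} we have $\HilbFull \vdash \chi \lra \lnot\bigvee\Phi$, so the premise disjunction is classically equivalent to
\[
\chi \to \phi \land \diam{\gamma}\theta. \qquad (\ast)
\]
Unfolding $\theta$ using the $\cross$-axiom $\diam{\alpha^\cross}X \lra X \land \diam{\alpha}\diam{\alpha^\cross}X$, the composition axiom on $\alpha$, the bang axiom $\diam{\chi!}X \lra \chi \lor X$, and propositional distributivity, yields in \HilbFull
\[
\theta \lra (\chi \lor \phi) \land (\chi \lor \diam{\gamma}\theta) \lra \chi \lor \bigl(\phi \land \diam{\gamma}\theta\bigr). \qquad (\dagger)
\]
Now introduce the invariant $\eta \isdef \chi \lor \theta$. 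From $(\ast)$ and $(\dagger)$ together, propositional reasoning gives $\vdash \eta \to \phi$; and a two-case analysis on the disjuncts of $\eta$ (using $\RuMon$ to weaken $\diam{\gamma}\theta$ to $\diam{\gamma}\eta$) gives $\vdash \eta \to \diam{\gamma}\eta$. Applying the rule $\RuBdi$ of \HilbFull yields $\vdash \eta \to \diam{\gamma^\cross}\eta$, and composing with $\eta \to \phi$ via $\RuMon$ gives $\vdash \eta \to \diam{\gamma^\cross}\phi$. Since $\chi \to \eta$ is trivial, we conclude $\vdash \bigvee\Phi \lor \diam{\gamma^\cross}\phi$.

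The conceptual reason the $\RuIndS$ case is the obstacle is that one must isolate the correct invariant for the bar dual induction rule of \HilbFull. The shape of $\theta$ --- built from the context-enriched game $(\chi!\comp\gamma)^\cross$ rather than $\gamma^\cross$ --- is precisely what ensures, via $(\dagger)$, that $\eta = \chi \lor \theta$ is a $\gamma$-invariant, and this is exactly what lets us transform the one-step unfolding supplied by the premise into a $\diam{\gamma^\cross}$-claim about $\phi$.
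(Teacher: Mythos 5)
Your overall strategy---rule-by-rule admissibility in \HilbFull, with the \GLms-axiom handled via the normal-form equivalence (Lemma~\ref{l:normalform-eq}) and \RuIndS handled by exhibiting an invariant and feeding it to \RuBdi---is exactly the paper's strategy, and your \RuIndS case is correct: your invariant $\eta = \chi \lor \diam{(\chi!\comp\gamma)^\cross}\diam{\chi!}\phi$ is a mild repackaging of the one in the paper's Lemma~\ref{l:adminds}, which runs \RuBdi on $\diam{(\chi!\comp\gamma)^\cross}\phi$ after first discarding the inner $\diam{\chi!}$ using $\chi\to\phi$; both versions go through in \HilbFull.

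The gap is in your treatment of the deep rules \RuCompD, \RuMonDfmla\ and \RuMonDgame. You claim the local replacement implications can be pushed ``through the positive context $\psi(\cdot)$'' by ``iterating \RuMon\ (together with the propositional connectives)''. That is not enough: in these rules the hole of $\psi$ may sit \emph{inside a game term}, in particular inside a test under an angelic or demonic iteration---and this is not a marginal case, it is exactly where the deep rules get used when \CloG-proofs are translated via the bullet translation, e.g.\ in contexts of the shape $\diam{(\theta(\cdot)? \comp \gamma)^\cross}\psi'$. The rule \RuMon\ only transports an implication across a fixed outer modality $\diam{\delta}$, and the axioms for $*$ and $\cross$ only unfold once, so they do not by themselves yield $\vdash \psi(\phi)\to\psi(\diam{\chi!}\phi)$ or $\vdash \psi(\gamma)\to\psi(\chi!\comp\gamma)$ when the hole lies under an iteration. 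Establishing these replacement implications requires a mutual induction on formulas and game terms whose $*$- and $\cross$-cases invoke the induction rules \RuBarInd\ and \RuBdi; this is precisely the content of the paper's Lemma~\ref{l:RuMonDgame-admiss} (admissibility of \RuMonDgame), which occupies over a page. So the missing idea is that ``deep'' monotonicity of \HilbFull-provability in positive game-logic contexts is itself a fixpoint-induction argument, not a consequence of \RuMon\ plus propositional logic; once that lemma is supplied, the rest of your proof stands.
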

\begin{proof}
To prove this proposition, we need to prove that the disjunction of any axiom of
\GLms is derivable in \HilbFull, and that every rule $\mathsf{R}$ of \GLms is 
admissible in \HilbFull in the following sense: if the disjunction of the 
premise of an instance of $\mathsf{R}$ is derivable in \HilbFull, then so is 
the disjunction of the corresponding conclusion. 
Axioms of \GLms are taken care of by Lemma~\ref{l:normalform-eq}, since 
they are of the form $\Phi,\ol{\Phi}$, and the disjunction of such a sequent is 
the $\dnnf{}$-translation of a propositional tautology. 
For admissibility of the $\GLms$-rules $\RuIndS$ and $\RuMonDgame$
is shown in Lemmas~\ref{l:RuMonDgame-admiss}~and~\ref{l:adminds} below.
The other cases are straightforward, and we leave the details to the reader.
\end{proof}

\begin{lemma}
\label{l:adminds}
The rule $\RuIndS$ is admissible in the system $\HilbFull$: if 
$\HilbFull \vdash \bigvee \Gamma \vee (\phi \wedge \diam{\gamma}\diam{ (\nnf{\Gamma}!\comp \gamma)^\cross}\diam{\nnf{\Gamma}!}\phi)$
then $\HilbFull \vdash \bigvee \Gamma \vee \diam{\gamma^\cross} \phi$ as well.
\end{lemma}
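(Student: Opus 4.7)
The plan is to reduce the admissibility of $\RuIndS$ to a single application of Parikh's Bar Induction rule $\RuBdi$ by passing to a suitable strengthened invariant. Writing $\chi \isdef \nnf{\bigvee\Gamma}$, Lemma~\ref{l:normalform-eq} lets me rewrite the premise as the $\HilbFull$-derivable implication $\chi \to \phi \wedge \diam{\gamma}\psi$ and the desired conclusion as $\chi \to \diam{\gamma^\cross}\phi$, where $\psi \isdef \diam{(\chi!\comp\gamma)^\cross}\diam{\chi!}\phi$ is the auxiliary greatest-fixpoint formula occurring in the premise. Using the $\cross$-, $\comp$- and $!$-axioms of $\HilbFull$, one routinely derives the decomposition $\psi \lra (\chi \vee \phi) \wedge (\chi \vee \diam{\gamma}\psi)$.

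With this setup in place, I would take $\eta \isdef \chi \vee \psi$ as the strengthened invariant and establish two facts in $\HilbFull$ by short propositional case analyses: (a) $\eta \to \phi$, using the premise when $\chi$ holds and the first conjunct of the decomposition of $\psi$ together with $\neg\chi$ otherwise; and (b) $\eta \to \diam{\gamma}\eta$, using the premise when $\chi$ holds (which yields $\diam{\gamma}\psi$ and hence $\diam{\gamma}\eta$ by $\RuMon$) and the second conjunct of the decomposition together with $\neg\chi$ in case $\psi$ holds but $\chi$ does not.

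Finally, $\RuBdi$ applied to (b) gives $\eta \to \diam{\gamma^\cross}\eta$, while $\RuMon$ applied to (a) gives $\diam{\gamma^\cross}\eta \to \diam{\gamma^\cross}\phi$; composing these with the trivial $\chi \to \eta$ yields $\chi \to \diam{\gamma^\cross}\phi$, which by Lemma~\ref{l:normalform-eq} is exactly the sought $\bigvee\Gamma \vee \diam{\gamma^\cross}\phi$. The main conceptual obstacle is recognising the correct invariant $\eta = \chi \vee \psi$: plainly $\psi$ alone does not imply $\phi$ (only $\chi \vee \phi$), nor does it satisfy $\psi \to \diam{\gamma}\psi$ (only $\chi \vee \diam{\gamma}\psi$), so the $\chi$-disjunct is needed as an \emph{escape hatch} that is immediately re-closed by invoking the premise. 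Once the invariant is in place, the derivation is purely propositional together with one application each of $\RuBdi$ and $\RuMon$.
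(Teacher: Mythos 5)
Your proof is correct, and it reaches the conclusion by a route that differs from the paper's mainly in the choice of invariant for the (single) application of \RuBdi. Writing $\chi = \nnf{\Gamma}$ and $\psi = \diam{(\chi!\comp\gamma)^\cross}\diam{\chi!}\phi$, the paper first strips the inner test, deriving $\chi \to \phi \land \diam{\gamma}\zeta$ for $\zeta = \diam{(\chi!\comp\gamma)^\cross}\phi$ (using $\chi\to\phi$ and \RuMon), then obtains $\chi\to\zeta$ by a single unfolding of $\zeta$, and runs \RuBdi with $\zeta$ itself as the invariant: it proves $\zeta\to\diam{\gamma}\zeta$, concludes $\zeta\to\diam{\gamma^\cross}\zeta$, and finishes with $\zeta\to\phi$ and \RuMon. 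You instead leave $\psi$ untouched, unfold it once into $(\chi\lor\phi)\land(\chi\lor\diam{\gamma}\psi)$, and run \RuBdi on the disjunctive invariant $\eta=\chi\lor\psi$, for which $\eta\to\phi$ and $\eta\to\diam{\gamma}\eta$ follow by a propositional case split on $\chi$ together with the premise; your use of Lemma~\ref{l:normalform-eq} to pass between $\bigvee\Gamma\lor X$ and $\chi\to X$ is exactly what the paper does implicitly. Both arguments use the same resources (the unfolding, composition and dual-test axioms, \RuMon, and one Bar Induction for the demonic iteration), so the difference is organisational: your invariant with its $\chi$-disjunct as ``escape hatch'' makes the strengthened-invariant idea explicit (matching the semantic intuition behind Proposition~\ref{prop:strind}) and saves the paper's two preparatory claims, while the paper's choice of $\zeta$ keeps the \RuBdi invariant equal to the auxiliary fixpoint formula itself, at the price of one extra stripping step and the separate coinductive claim $\chi\to\zeta$.
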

\begin{proof}
Suppose that:  $$\HilbFull \vdash \bigvee \Gamma \vee (\phi \wedge
\diam{\gamma}\diam{ (\overline{\Gamma} ! \comp
\gamma)^\cross}\diam{\overline{\Gamma}!}\phi)$$
We show that $\HilbFull \vdash \bigvee \Gamma \vee \diam{\gamma^\cross} \phi$. Note that we can rewrite the assumption as: 
$$(*) \quad \HilbFull \vdash \overline{ \Gamma} \to (\phi \wedge
\diam{\gamma}\diam{ (\overline{\Gamma}!\comp
\gamma)^\cross}\diam{\overline{\Gamma}!}\phi)$$
and the desired conclusion as:
$\HilbFull \vdash \overline{ \Gamma} \to \diam{\gamma^\cross} \phi$.
\begin{claim}
\label{claim:nummerett}

$\HilbFull \vdash \overline{\Gamma} \to \phi \wedge \diam{\gamma}\diam{
(\overline{\Gamma}!\comp \gamma)^\cross}\phi$
\end{claim}
\begin{pfclaim}
By our assumption $(*)$ we get $\HilbFull \vdash \overline{\Gamma} \to
\phi$, and since $\HilbFull \vdash \diam{\overline{\Gamma}!}\phi
\leftrightarrow \overline{\Gamma} \vee \phi$ we get $\HilbFull \vdash
\diam{\overline{\Gamma}!}\phi \to \phi$. Together with Monotonicity applied to the consequent in $(*)$, we get
$$\HilbFull \vdash \overline{\Gamma} \to \phi \wedge \diam{\gamma}\diam{
(\overline{\Gamma}!\comp \gamma)^\cross}\phi$$
as required. 
\end{pfclaim}
\begin{claim}
\label{claim:nummertva}
$\HilbFull \vdash \diam{(\overline{\Gamma} !\comp \gamma)^\cross} \to
\diam{\gamma^\cross}\diam{(\overline{\Gamma}! \comp \gamma)^\cross} $
\end{claim}
\begin{pfclaim}
First, by simply unfolding the fixpoint and applying propositional reasoning we get:
$$\HilbFull \vdash \diam{(\overline{\Gamma}! \comp \gamma)^\cross} \to
\diam{\overline{\Gamma}! \comp \gamma}\diam{(\overline{\Gamma}! \comp \gamma)^\cross}$$
But the consequent of this implication is equivalent to:
$$\overline{\Gamma} \vee \diam{\gamma} \diam{(\overline{\Gamma}! \comp \gamma)^\cross}$$
By Claim \ref{claim:nummerett} applied to the left disjunct of this formula we get:
$$\HilbFull \vdash \diam{(\overline{\Gamma}! \comp \gamma)^\cross} \to
\diam{\gamma}\diam{(\overline{\Gamma}! \comp \gamma)^\cross}$$
By the Bar Induction rule we get:
$$\HilbFull \vdash \diam{(\overline{\Gamma}! \comp \gamma)^\cross} \to
\diam{\gamma^\cross}\diam{(\overline{\Gamma}! \comp \gamma)^\cross} $$
as required. 
\end{pfclaim}

\begin{claim}
\label{claim:nummertre}
$\HilbFull \vdash \overline{\Gamma} \to \diam{(\overline{\Gamma}!\comp \gamma)^\cross} \phi$
\end{claim}
\begin{pfclaim}
By Claim \ref{claim:nummerett} we get $ \vdash \overline{\Gamma} \to
\phi \wedge \diam{\gamma}\diam{ (\overline{\Gamma}!\comp \gamma)^\cross}\phi$. By propositional reasoning we have:
$$\HilbFull \vdash \diam{\gamma}\diam{ (\overline{\Gamma}!\comp
\gamma)^\cross}\phi \to \overline{\Gamma} \vee \diam{\gamma}\diam{
(\overline{\Gamma}!\comp \gamma)^\cross}\phi$$
But the consequent of this implication is equivalent to
$\diam{\overline{\Gamma}! \comp \gamma}\diam{ (\overline{\Gamma}!\comp \gamma)^\cross}\phi$, so we get:
$$\vdash \overline{\Gamma} \to \phi \wedge \diam{\overline{\Gamma}!
\comp \gamma}\diam{ (\overline{\Gamma}!\comp \gamma)^\cross}\phi$$
But the consequent of this implication is just the unfolding of the
fixpoint $\diam{ (\overline{\Gamma}!\comp \gamma)^\cross}\phi$, so we get:
$$\HilbFull \vdash \overline{\Gamma} \to \diam{(\overline{\Gamma}!\comp \gamma)^\cross} \phi$$
as required. 
\end{pfclaim}
\begin{claim}
\label{claim:nummerfyra}
$\HilbFull \vdash \diam{(\overline{\Gamma}!\comp \gamma)^\cross} \phi \to \phi$
\end{claim}
\begin{pfclaim}
Just unfold the fixpoint  $\diam{(\overline{\Gamma}!\comp
\gamma)^\cross} \phi$ to $\phi \wedge  \diam{\overline{\Gamma}\comp
\gamma}  \diam{(\overline{\Gamma}!\comp \gamma)^\cross} \phi$.
\end{pfclaim}
We can now prove the lemma: by Claim \ref{claim:nummertre} we have:
$$(\text{i}) \quad \HilbFull \vdash \overline{\Gamma} \to
\diam{(\overline{\Gamma}!\comp \gamma)^\cross} \phi$$
Combining $(\text{i})$ with Claim \ref{claim:nummertva} we get:
$$(\text{ii}) \quad \HilbFull \vdash \overline{\Gamma} \to
\diam{\gamma^\cross}\diam{(\overline{\Gamma}!\comp \gamma)^\cross} \phi$$
By Claim \ref{claim:nummerfyra} and Monotonicity we get:
$$(\text{iii})\quad \HilbFull \vdash
\diam{\gamma^\cross}\diam{(\overline{\Gamma}!\comp \gamma)^\cross} \phi \to \diam{\gamma^\cross}\phi$$
Putting $(\text{ii})$ and $(\text{iii})$ together we get:
$$\HilbFull \vdash \overline{ \Gamma} \to \diam{\gamma^\cross} \phi$$
as required.
\end{proof}

\begin{lemma}\label{l:RuMonDgame-admiss}
The rule $\RuMonDgame$ is derivable in the system \HilbFull.
\end{lemma}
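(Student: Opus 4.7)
The key observation is that $\chi!\comp \gamma$ is a game that is ``weaker'' for Angel than $\gamma$ alone, so replacing $\gamma$ by $\chi!\comp \gamma$ in a positive position weakens the formula. More precisely, the axioms for composition and for the dual test yield
\[
\diam{\chi!\comp \gamma}\phi \;\leftrightarrow\; \diam{\chi!}\diam{\gamma}\phi \;\leftrightarrow\; \chi\vee \diam{\gamma}\phi,
\]
so $\HilbFull\vdash \diam{\gamma}\phi \to \diam{\chi!\comp \gamma}\phi$ is immediate from Axioms~2,~3 (in disguise via the definability of $!$) and the additional axiom for $!$ in \HilbFull.

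The plan is then to lift this local implication through the context $\psi(-)$. Since the rule is applied to $\langNorm$-formulas, all occurrences of games in $\psi$ are positively placed, so I would prove by \emph{simultaneous induction on formula and game contexts} the following: writing $\gamma \preceq \gamma'$ for $\HilbFull\vdash \diam{\gamma}\phi\to\diam{\gamma'}\phi$ for every $\phi$, we have (i) for every formula context $\psi(-)$ and every $\gamma\preceq\gamma'$, $\HilbFull\vdash \psi(\gamma)\to\psi(\gamma')$, and (ii) for every game context $\delta(-)$ and every $\gamma\preceq\gamma'$, $\delta(\gamma)\preceq\delta(\gamma')$. Applying (i) with $\gamma'=\chi!\comp\gamma$ to $\psi$ then gives the admissibility of \RuMonDgame: from $\HilbFull\vdash\bigvee\Phi\vee\psi(\gamma)$ conclude $\HilbFull\vdash\bigvee\Phi\vee\psi(\chi!\comp\gamma)$ by propositional reasoning.

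The inductive step is routine in most cases. For formula contexts I would use propositional reasoning for $\vee,\wedge$, and \RuMon to push through a modality $\diam{\delta}(-)$. For game contexts I would use: the composition axiom together with \RuMon for $\comp$; the axioms for $\gcha$ and $\gchd$ (the latter available in \HilbFull) together with monotonicity of $\vee$ and $\wedge$; and the recursive case of (i) on the formula inside a test $\alpha?$ or $\alpha!$.

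\smallskip
The main obstacle is handling the iteration operators, since here we need a genuine induction principle. For an angelic iteration context $\delta(-)^*$ with $\gamma\preceq\gamma'$, by induction hypothesis $\delta(\gamma)\preceq\delta(\gamma')$, so \RuMon gives
\[
\HilbFull\vdash \diam{\delta(\gamma)}\diam{\delta(\gamma')^*}\phi \to \diam{\delta(\gamma')}\diam{\delta(\gamma')^*}\phi,
\]
and unfolding $\delta(\gamma')^*$ yields
$\HilbFull\vdash \diam{\delta(\gamma)}\diam{\delta(\gamma')^*}\phi \to \diam{\delta(\gamma')^*}\phi$.
Applying \RuBarInd to this implication produces
$\HilbFull\vdash \diam{\delta(\gamma)^*}\diam{\delta(\gamma')^*}\phi \to \diam{\delta(\gamma')^*}\phi$,
which combined with $\phi\to\diam{\delta(\gamma')^*}\phi$ and \RuMon gives the desired $\diam{\delta(\gamma)^*}\phi \to \diam{\delta(\gamma')^*}\phi$. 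The demonic iteration case $\delta(-)^\cross$ is dual: from the induction hypothesis we obtain $\diam{\delta(\gamma)^\cross}\phi \to \diam{\delta(\gamma')}\diam{\delta(\gamma)^\cross}\phi$, and then \RuBdi (which is exactly the extra rule of \HilbFull for the $\cross$-fixpoint) lets us conclude $\diam{\delta(\gamma)^\cross}\phi\to \diam{\delta(\gamma')^\cross}\diam{\delta(\gamma)^\cross}\phi$; monotonicity and the unfolding axiom for $\cross$ then finish the case. These iteration cases are precisely where the extra axioms and rules of \HilbFull (relative to \HilbP) are essential.
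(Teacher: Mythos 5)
Your proof is correct and follows essentially the same route as the paper: a mutual induction on formula and game contexts, with the base case settled by the composition and $!$ axioms, the $*$ case by \RuBarInd, and the $\cross$ case by \RuBdi, exactly as in the paper's Claim-based induction. The only (harmless) difference is that you state the context-monotonicity lemma for an arbitrary provable game inclusion $\gamma\preceq\gamma'$ rather than specialising to $\gamma'=\chi!\comp\gamma$ as the paper does.
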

\begin{proof}
We shall prove this by induction on the complexity of formulas in dual normal form. To keep notation simple, in this proof we abbreviate $\HilbFull \vdash \phi$ by $\vdash \phi$, and $\HilbFull \vdash \varphi \to \psi$ by $\varphi \vdash \psi$. The induction hypothesis on a formula $\varphi(\delta)$ is that $ \varphi(\delta) \vdash \varphi({\chi}!\comp \delta)$. 

\begin{claim}
\label{c:gameterminduction}
For every game term $\gamma(\delta)$ in dual normal form and every  formula $\phi$, we can prove the following implication in \HilbFull:
$$\diam{\gamma(\delta)} \phi \to \diam{\gamma({\chi}!\comp \delta)}\phi $$
provided that the main induction hypothesis holds for every formula $\theta$ corresponding to a  subterm ${\theta}!$ or ${\theta}?$ of $\gamma(\delta)$.  
\end{claim}
\begin{pfclaim}
We prove that the implication holds for all $\phi$ by induction on the complexity of the game term $\gamma(\delta)$, treating $\delta$ as an atomic case.
 
Atomic case, $\gamma(\delta) =  \delta$. For all $\phi$ we have:
$$\diam{\delta} \phi \vdash \chi \vee \diam{\delta}\phi \vdash \diam{{\chi}!}\diam{\delta}\phi \vdash \diam{{\chi}!\comp \delta}\phi$$
as required. 
\\\\
Atomic case for game terms $g$ or $g^d$: trivial.
\\\\
Case for $\theta(\delta)?$ or $\theta(\delta)!$: follows immediately from the induction hypothesis on $\theta$, since $\vdash \diam{\theta(\delta)?} \phi \lra \theta(\delta) \land \phi$ and $\vdash \diam{\theta(\delta)!} \phi \lra \theta(\delta) \lor \phi$.
\\\\
Case for $\gcha:$ the induction hypothesis on the subterms $\gamma_1(\delta)$ and $\gamma_2(\delta)$ of $(\gamma_1 \gcha \gamma_2)(\delta)$ gives $\vdash \diam{\gamma_1(\delta)}\phi \to  \diam{\gamma_1({\chi }! \comp \delta)} \phi $ and $\vdash \diam{\gamma_2(\delta)}\phi \to  \diam{\gamma_2({\chi }! \comp \delta)} \phi $. We get:
\begin{eqnarray*}
\diam{\gamma_1(\delta) \gcha \gamma_2(\delta)}\phi & \vdash &
\diam{\gamma_1(\delta)} \phi \vee \diam{ \gamma_2(\delta)}\phi \\
& \vdash & \diam{\gamma_1({\chi}! \comp \delta)} \phi \vee \diam{{\chi}! \comp  \gamma_2(\delta)}\phi \\
&  \vdash & \diam{\gamma_1({\chi}! \comp \delta) \gcha \gamma_2({\chi}! \comp \delta)}\phi 
\end{eqnarray*}
as required.
\\\\
Case for $\gchd$: similar. 
\\\\
Case for $\comp$: consider the formula $\diam{\gamma_1(\delta) \comp \gamma_2(\delta)}\phi$. The induction hypothesis on $\gamma_2(\delta)$ instantiated for the formula $\phi$ gives 
$$\vdash \diam{\gamma_2(\delta)}\phi \to \diam{\gamma_2({\chi}! \comp \delta)} \phi$$
By monotonicity we get:
$$\vdash \diam{\gamma_1(\delta)}\diam{\gamma_2(\delta)}\phi \to  \diam{\gamma_1(\delta)}\diam{\gamma_2({\chi}! \comp \delta)} \phi$$
But the induction hypothesis on $\gamma_1(\delta)$ instantiated for the formula $\diam{\gamma_2({\chi}! \comp \delta)} \phi$ gives:
 $$ \diam{\gamma_1(\delta)}\diam{\gamma_2({\chi}! \comp \delta)} \phi \to   \diam{\gamma_1({\chi}! \comp\delta)}\diam{\gamma_2({\chi}! \comp \delta)} \phi$$
Putting these implications together we get:
$$\vdash \diam{\gamma_1(\delta)}\diam{\gamma_2(\delta)}\phi \to   \diam{\gamma_1({\chi}! \comp\delta)}\diam{\gamma_2({\chi}! \comp \delta)} \phi$$
The required result now follows from the reduction axioms for $\comp$ applied to both the antecedent and the consequent in this implication.
\\\\
Case for $*$: by the induction hypothesis we have, for every formula $\psi$, $\vdash \diam{\gamma(\delta)}\psi \to \diam{\gamma({\chi}!\comp \delta)}\psi $. We wish to show that for all $\phi$, we have $\vdash \diam{\gamma(\delta)^*}\phi \to \diam{\gamma({\chi}!\comp \delta)^*}\phi $. By the induction hypothesis instantiated with the formula $\psi = \diam{\gamma({\chi}!\comp \delta)^*}\phi $ we have 
$$\vdash \diam{\gamma(\delta)}\diam{\gamma({\chi}!\comp \delta)^*}\phi  \to \diam{\gamma({\chi}!\comp \delta)}\diam{\gamma({\chi}!\comp \delta)^*}\phi  $$
But by the unfolding axiom for angelic iteration and propositional reasoning we get:  $$\vdash \diam{\gamma({\chi}!\comp \delta)}\diam{\gamma({\chi}!\comp \delta)^*}\phi \to  \diam{\gamma({\chi}!\comp \delta)^*}\phi $$
Hence, putting these two implications together, we get:
$$ \vdash \diam{\gamma(\delta)}\diam{\gamma({\chi}!\comp \delta)^*}\phi  \to \diam{\gamma({\chi}!\comp \delta)^*}\phi $$
By the 
 Bar Induction rule for angelic iteration, we now get:
$$(\dagger)\quad \vdash \diam{\gamma(\delta)^*}\diam{\gamma({\chi}!\comp \delta)^*}\phi  \to \diam{\gamma({\chi}!\comp \delta)^*}\phi $$
Applying unfolding and propositional reasoning again, we get:
$$\vdash \phi \to \diam{\gamma({\chi}!\comp \delta)^*}\phi$$
By the monotonicity rule we get:
$$(\ddagger) \quad \vdash \diam{\gamma(\delta)^*}\phi  \to \diam{\gamma(\delta)^*}\diam{\gamma({\chi}!\comp \delta)^*}\phi$$
Putting together the implications $(\dagger)$ and $(\ddagger)$, we get: 
$$\vdash \diam{\gamma(\delta)^*}\phi  \to \diam{\gamma({\chi}!\comp \delta)^*}\phi$$
as required. 
\\\\
Case for $\cross$: 
 by the induction hypothesis we have, for every formula $\psi$, $\vdash \diam{\gamma(\delta)}\psi \to \diam{\gamma({\chi}!\comp \delta)}\psi $. We wish to show that for all $\phi$, we have $\vdash \diam{\gamma(\delta)^\cross}\phi \to \diam{\gamma({\chi}!\comp \delta)^\cross}\phi $. By the induction hypothesis instantiated with the formula $\psi = \diam{\gamma(\delta)^\cross}\phi $ we have 
$$\vdash \diam{\gamma(\delta)}\diam{\gamma(\delta)^\cross}\phi  \to \diam{\gamma({\chi}!\comp \delta )}\diam{\gamma( \delta)^\cross}\phi  $$
But by unfolding $\diam{\gamma(\delta)^\cross}\phi $ to $\phi \wedge \diam{\gamma(\delta)}\diam{\gamma(\delta)^\cross}\phi$, we see that:
$$\vdash  \diam{\gamma(\delta)^\cross}\phi  \to \diam{\gamma(\delta)}\diam{\gamma(\delta)^\cross}\phi$$
Putting together the implications we have established so far, we get:
$$\vdash  \diam{\gamma(\delta)^\cross}\phi  \to \diam{\gamma({\chi}!\comp \delta )}\diam{\gamma( \delta)^\cross}\phi $$
By the Bar Induction rule for $\cross$ we now get:
$$(\dagger) \quad \vdash  \diam{\gamma(\delta)^\cross}\phi  \to \diam{\gamma({\chi}!\comp \delta )^\cross}\diam{\gamma( \delta)^\cross}\phi $$
But we have $\vdash \diam{\gamma( \delta)^\cross}\phi \to \phi$, so by monotonicity we get: 
$$(\ddagger)\quad \vdash \diam{\gamma({\chi}!\comp \delta )^\cross}\diam{\gamma( \delta)^\cross}\phi \to \diam{\gamma({\chi}!\comp \delta )^\cross}\phi$$
Putting together $(\dagger)$ and $(\ddagger)$ we get:
$$\vdash \diam{\gamma( \delta)^\cross}\phi \to \diam{\gamma({\chi}!\comp \delta )^\cross}\phi$$
as required. 

\end{pfclaim}

We can now complete the main induction: the atomic cases for literals and induction steps for $\lor, \land$ are easy. The only interesting step is for a formula of the form $(\diam{\gamma}\phi)(\delta) = \diam{\gamma(\delta)}\phi(\delta)$. By the induction hypothesis on $\phi(\delta)$ we get $\vdash \phi(\delta) \to \phi({\chi}! \comp \delta)$, so by monotonicity we get $$\vdash \diam{\gamma({\chi}!\comp \delta)}\phi(\delta) \to  \diam{\gamma({\chi}!\comp \delta)}\phi({\chi}!\comp \delta)$$ By the induction hypothesis on all subformulas $\theta$ occurring in subterms ${\theta}!$ or ${\theta}?$ of $\gamma(\delta)$, we can apply Claim \ref{c:gameterminduction} and get
$$\vdash  \diam{\gamma( \delta)}\phi(\delta)  \to \diam{\gamma({\chi}!\comp \delta)}\phi(\delta)$$
Putting together these implications we get:
$$\vdash  \diam{\gamma( \delta)}\phi(\delta)  \to \diam{\gamma({\chi}!\comp \delta)}\phi({\chi}! \comp \delta)$$
as required. 
\end{proof}

We are now ready to prove the transformations between
$\GLms$ and $\HilbP$.

\begin{proofof}{Theorem~\ref{thm:GLms-to-HilbP}}
  For item 1, let $\phi \in \langPar$ such that 
  $\GLms \vdash \dnnf{\phi}$.
  By Proposition~\ref{p:GLms-to-HilbNorm},
  $\HilbFull \vdash \dnnf{\phi}$, and by
  Proposition~\ref{p:HilbFull-to-HilbP}, we obtain
  $\HilbP \vdash \phi$.  

  For item 2, let $\xi \in\langNorm$ such that $\GLms \vdash \xi$.
  By Proposition~\ref{p:GLms-to-HilbNorm},
  $\HilbFull \vdash \xi$, and since $\langNorm \subseteq \langFull$,
  we obtain $\HilbP \vdash \pari{\xi}$
  from Lemma~\ref{l:HilbFull-to-HilbP-pari}. 

\end{proofof}




\begin{thebibliography}{10}
\providecommand{\url}[1]{#1}
\csname url@samestyle\endcsname
\providecommand{\newblock}{\relax}
\providecommand{\bibinfo}[2]{#2}
\providecommand{\BIBentrySTDinterwordspacing}{\spaceskip=0pt\relax}
\providecommand{\BIBentryALTinterwordstretchfactor}{4}
\providecommand{\BIBentryALTinterwordspacing}{\spaceskip=\fontdimen2\font plus
\BIBentryALTinterwordstretchfactor\fontdimen3\font minus
  \fontdimen4\font\relax}
\providecommand{\BIBforeignlanguage}[2]{{%
\expandafter\ifx\csname l@#1\endcsname\relax
\typeout{** WARNING: IEEEtran.bst: No hyphenation pattern has been}%
\typeout{** loaded for the language `#1'. Using the pattern for}%
\typeout{** the default language instead.}%
\else
\language=\csname l@#1\endcsname
\fi
#2}}
\providecommand{\BIBdecl}{\relax}
\BIBdecl

\bibitem{Parikh85}
R.~Parikh, ``The logic of games and its applications,'' in \emph{Topics in the
  Theory of Computation}, ser. Annals of Discrete Mathematics, 1985, no.~14.

\bibitem{PaulyParikh:GL}
M.~Pauly and R.~Parikh, ``{Game Logic}: An overview,'' \emph{Studia Logica},
  vol. 75(2), pp. 165--182, 2003.

\bibitem{FL1979:PDL}
M.~J. Fischer and R.~F. Ladner, ``Propositional dynamic logic of regular
  programs,'' \emph{Journal of Computer and System Sciences}, vol.~18, pp.
  194--211, 1979.

\bibitem{carr:pdl2014}
F.~Carreiro and Y.~Venema, ``{PDL} inside the $\mu$-calculus: a syntactic and
  an automata-theoretic characterization,'' in \emph{Advances in Modal Logic},
  R.~G. et~alii, Ed., vol.~10, 2014, pp. 74--93.

\bibitem{Pau:phd}
M.~Pauly, ``Logic for social software,'' Ph.D. dissertation, University of
  Amsterdam, 2001.

\bibitem{KrachtWolter99}
M.~Kracht and F.~Wolter, ``Normal monomodal logics can simulate all others,''
  \emph{Journal of Symbolic Logic}, vol.~64, no.~1, pp. 99--138, 1999.

\bibitem{Alur:2002:ATL}
R.~Alur, T.~A. Henzinger, and O.~Kupferman, ``Alternating-time temporal
  logic,'' \emph{Journal of the ACM}, vol.~49, no.~5, pp. 672--713, 2002.

\bibitem{Pauly2002:CL}
M.~Pauly, ``A modal logic for coalitional power in games,'' \emph{Journal of
  Logic and Computation}, vol.~12, no.~1, pp. 149--166, 2002.

\bibitem{krhepi:stra10}
K.~Chatterjee, T.~A. Henzinger, and N.~Piterman, ``Strategy logic,''
  \emph{Information and Computation}, vol. 208, no.~6, pp. 677--693, 2010.

\bibitem{Mogavero2010:SL}
F.~Mogavero, A.~Murano, and M.~Y. Vardi, ``Reasoning about strategies,'' in
  \emph{FSTTCS 2010}, ser. LIPIcs, K.~Lodaya and M.~Mahajan, Eds., vol.~8,
  2010, pp. 133--144.

\bibitem{pla:diff15}
A.~Platzer, ``Differential game logic,'' \emph{{ACM} Transactions on
  Computational Logic}, vol.~17, no.~1, pp. 1--51, 2015.

\bibitem{koze:elem81}
D.~Kozen and R.~Parikh, ``An elementary proof of the completeness of {PDL},''
  \emph{Theoretical Computer Science}, vol.~14, no.~1, pp. 113 -- 118, 1981.

\bibitem{Berwanger03}
D.~Berwanger, ``{Game Logic} is strong enough for parity games,'' \emph{Studia
  Logica}, vol.~75, no.~2, pp. 205--219, 2003.

\bibitem{koze:resu83}
D.~Kozen, ``Results on the propositional $\mu$-calculus,'' \emph{Theoretical
  Computer Science}, vol.~27, pp. 333--354, 1983.

\bibitem{walu:comp93}
I.~Walukiewicz, ``On completeness of the mu-calculus,'' in \emph{{LICS} 1993},
  1993, pp. 136--146.

\bibitem{AfshariLeigh:LICS17}
B.~Afshari and G.~Leigh, ``Cut-free completeness for modal mu-calculus,'' in
  \emph{{LICS} 2017}, 2017, pp. 1--12.

\bibitem{Stirling14tableaux}
C.~Stirling, ``A tableau proof system with names for modal mu-calculus,'' in
  \emph{{HOWARD-60:} {A} Festschrift on the Occasion of Howard Barringer's 60th
  Birthday}, A.~Voronkov and M.~Korovina, Eds., 2014, pp. 306--318.

\bibitem{NJ10thesis}
N.~Jungteerapanich, ``Tableau systems for the modal $\mu$-calculus,'' Ph.D.
  dissertation, University of Edingburgh, 2010.

\bibitem{Helle03}
H.~H. Hansen, ``Monotonic modal logic,'' Master's thesis, University of
  Amsterdam, 2003, {ILLC} Preprint PP-2003-24.

\bibitem{cirs:expt11}
C.~C{\^{\i}}rstea, C.~Kupke, and D.~Pattinson, ``{EXPTIME} tableaux for the
  coalgebraic mu-calculus,'' \emph{Logical Methods in Computer Science},
  vol.~7, no.~3, 2011.

\bibitem{enqvist2018completeness}
S.~Enqvist, F.~Seifan, and Y.~Venema, ``Completeness for $\mu$-calculi: a
  coalgebraic approach,'' \emph{Annals of Pure and Applied Logic}, 2018.

\bibitem{HK:FICS15}
H.~H. Hansen and C.~Kupke, ``Weak completeness of coalgebraic dynamic logics,''
  in \emph{Fixed Points in Computer Science (FICS)}, ser. EPTCS, vol. 191,
  2015, pp. 90--104.

\bibitem{reyn:axio01}
M.~Reynolds, ``An axiomatization of full computation tree logic,''
  \emph{Journal of Symbolic Logic}, pp. 1011--1057, 2001.

\bibitem{Fritella2016:DisplayPDL}
S.~Frittella, G.~Greco, A.~Kurz, and A.~Palmigiano, ``Multi-type display
  calculus for propositional dynamic logic,'' \emph{Journal of Logic and
  Computation}, vol. 26(6), pp. 2067–--2104, 2016.

\bibitem{vardi1998reasoning}
M.~Y. Vardi, ``Reasoning about the past with two-way automata,'' in
  \emph{International Colloquium on Automata, Languages, and Programming},
  1998, pp. 628--641.

\bibitem{sattler2001hybrid}
U.~Sattler and M.~Y. Vardi, ``The hybrid $\mu$-calculus,'' in
  \emph{International Joint Conference on Automated Reasoning}, 2001, pp.
  76--91.

\end{thebibliography}
%
%
%

\end{document}